\newcommand{\ii}{\mathsf{i}}
\theoremstyle{definition}
\newtheorem{theorem}{Theorem}[section]
\newtheorem{lemma}[theorem]{Lemma}
\newtheorem{corro}[theorem]{Corollary}
\newtheorem{thm}{Theorem}[section]
\newtheorem{lem}[thm]{Lemma}
\theoremstyle{remark}
\newtheorem*{remark}{Remark}
\newtheorem*{example}{Example}
\newtheorem{assumption}[thm]{Assumption}
\begin{document}

\title[]{Uncertainty quantification of synchrosqueezing transform under complicated nonstationary noise}

\author{Hau-Tieng Wu}

\address{Courant Institute of Mathematical Sciences, New York University, New York, 10012, USA} 

\author{Zhou Zhou}

\address{Department of Statistics, University of Toronto, Toronto, ON,  Canada}

\begin{abstract}
We propose a bootstrapping framework to quantify uncertainty in time-frequency representations (TFRs) generated by the short-time Fourier transform (STFT) and the STFT-based synchrosqueezing transform (SST) for oscillatory signals with time-varying amplitude and frequency contaminated by complex nonstationary noise.
To this end, we leverage a recent high-dimensional Gaussian approximation technique to establish a sequential Gaussian approximation for nonstationary processes under mild assumptions. This result is of independent interest and provides a theoretical basis for characterizing the approximate Gaussianity of STFT-induced TFRs as random fields.
Building on this foundation, we establish the robustness of SST-based signal decomposition in the presence of nonstationary noise. Furthermore, assuming locally stationary noise, we develop a Gaussian autoregressive bootstrap for uncertainty quantification of SST-based TFRs and provide theoretical justification. We validate the proposed methods with simulations and illustrate their practical utility by analyzing spindle activity in electroencephalogram recordings. Our work bridges time-frequency analysis in signal processing and nonlinear spectral analysis of time series in statistics.
\newline\newline
{\it Keywords:} Sequential Gaussian Approximation; nonstationary noise; bootstrap; short-time Fourier transform; synchrosqueezing transform
\end{abstract}

\maketitle

\section{Introduction}

Uncertainty quantification (UQ) is crucial for applying time-frequency (TF) analysis \cite{Flandrin2018explorations} tools, particularly the synchrosqueezing transform (SST) \cite{Daubechies2011synchro}, to study nonstationary time series, as it ensures the reliability and interpretability of the extracted features. In real-world applications, such as biomedical signal processing, climate modeling, and finance, time series data are often contaminated by nonstationary noise and influenced by various external factors. Without proper UQ, the interpretation of the resulting TF representations (TFR), a function defined on the TF domain, might be misleading and may lead to incorrect conclusions about the underlying dynamics. By incorporating UQ, one can more reliably differentiate true oscillatory patterns from artifacts, improve the robustness of feature extraction, and guide decision-making in practical applications. Furthermore, rigorous uncertainty estimation provides a foundation for statistical inference, enabling hypothesis testing and model validation in TF analysis. In this paper, we establish UQ for SST in the presence of complex nonstationary noise.

Note that SST is a nonlinear TF analysis tool. In the statistics literature, existing results in spectral domain time series analysis have focused primarily on linear methods, particularly spectra based on Fourier and wavelet transforms. It is impossible to have a complete list of relevant literature here and we only list a few representative works. For results on Fourier-based spectra, see \cite{brillinger2001time,dahlhaus1997fitting,liu2010asymptotics,HafnerKirch2017,KreissPaparoditis2015,yang2022spectral};  for wavelet-based spectra, see \cite{nason2000wavelet,nason2013test}. On the other hand, however, in real signal processing applications nonlinear TF analysis tools have gained popularity and showed better accuracy and efficiency in a wide range of situations; see, e.g., recent work \cite{alian2023amplitude,su2024reconsider}. Therefore we see an important gap to fill with respect to statistical inference for nonlinear TF analysis algorithms. Several major challenges need to be tackled with significant theoretical and methodological innovations in order to successfully conduct statistical inference such as UQ for nonlinear type TF analysis tools, which we shall elaborate in the following.

The first challenge is to characterize the distribution of the TFR determined by the short-time Fourier transform (STFT) \cite{Flandrin2018explorations}, which forms the foundation of the SST, when the input is a non-Gaussian and nonstationary random process with nontrivial dependence. In much of the existing literature, particularly work on SST \cite{Thakur2013,chen2014non,yang2018statistical,sourisseau2022asymptotic}, this difficulty is bypassed by modeling the signal as a continuous function and the noise as a stationary Gaussian generalized random process $\Phi$. In this setting, when the window function $h$ is a Schwartz function, the STFT defined by $V_\Phi(t,\eta):=\Phi(h(\cdot-t)e^{\ii2\pi\eta(\cdot-t)})$, where $t\in \mathbb{R}$ is time and $\eta\in \mathbb{R}$ is frequency, is automatically a complex Gaussian random field. This setting allows one to focus primarily on the nonlinearity of SST. 
However, this assumption is often unrealistic in practice: real data are discrete, and the noise is typically non-Gaussian and nonstationary. In the discrete setting, the distribution of the TFR produced by the discrete STFT remains largely unexplored, with the exception of recent work \cite{yang2022spectral} examining finite collections of TF pairs. Intuitively, since the numerical implementation of the STFT consists of weighted sums of random variables, the resulting TFR should be well approximated by a Gaussian random field when the window length is sufficiently large and the noise satisfies mild moment and dependence conditions, by a central limit effect. While this intuition is supported by numerical evidence, precise conditions under which the discretely computed TFR is close to a Gaussian random field have not been rigorously established.
In this paper, we apply a recently developed {\em high-dimensional Gaussian approximation} technique to establish a {\em high-dimensional sequential Gaussian approximation} suitable for analyzing the discrete STFT. We show that, under appropriate discretization and mild conditions on the noise, the TFR generalized by the discrete STFT is asymptotically close to a complex Gaussian random field. This result rigorously extends existing theory to discrete, non-Gaussian, and nonstationary settings that are directly relevant to real-world applications.

The second challenge in developing a bootstrapping algorithm is determining how accurately we can recover the underlying signal, and hence the noise, so that the noise can be effectively resampled.
We focus on oscillatory signals that can be well modeled by the adaptive harmonic model (AHM) \cite{Daubechies2011synchro}, where the signal exhibits oscillations with slowly varying amplitude and frequency.
To recover the underlying signal, we adopt the reconstruction formula based on the SST, motivated by empirical observations suggesting that SST exhibits strong robustness to various forms of nonstationary noise. To date, the robustness of this SST-based reconstruction has been theoretically established under noise models involving distributed stationary random processes \cite{Thakur2013} and distributed nonstationary random processes with stationary correlation structures \cite{chen2014non}. However, real-world noise is often non-Gaussian and nonstationary, and theoretical guarantees in this setting have been lacking.
In this paper, we establish the robustness of the SST-based reconstruction formula for signals satisfying the AHM, even when the signal is contaminated by nonstationary noise.

The third challenge is designing a bootstrapping method to {\color{blue} uniformly} quantify the uncertainty of the TFR determined by SST. In practice, we often lack detailed information about the noise structure, including its stationarity. To address this, we assume the noise is locally stationary and apply the recently developed time-varying autoregressive (tvAR) approximation \cite{ding2023autoregressive} to obtain a reasonable model of the noise. This approximation enables us to resample the noise. However, a key difficulty lies in understanding how the approximation error propagates through the nonlinear operations involved in SST during the bootstrapping process. We show that the TFR of a locally stationary noise process determined by SST can be accurately approximated through bootstrapping based on the tvAR approximation,  thereby enabling the desired UQ.

This paper makes three contributions. First, we show that for a nonstationary time series generated by a filtration mechanism with mild moment and dependence conditions, the TFR from the discrete STFT can be uniformly approximated by a Gaussian random field (Theorem~\ref{Theorem: STFT distribution like Gaussian}). This is achieved by combining a recent high-dimensional Gaussian approximation technique \cite{LiuYangZhou2024} with a blocking strategy \cite{csorgHo1975new} to construct a sequential Gaussian approximation, which is of independent interest and provides a rigorous justification of the Gaussianity of the discrete STFT-based TFR.
Second, we investigate the robustness of the SST-based reconstruction formula for signals satisfying the AHM corrupted by nonstationary noise generated by the same filtration mechanism (Theorem \ref{section:theorem:stability}). A critical aspect is quantifying the discretization error linking discrete and continuous STFT/SST.
Third, under locally stationary noise, we propose a tvAR-based bootstrap for STFT/SST TFR to enable rigorous UQ (Theorem~\ref{Bootstrapping main theorem}). We may want to mention that the bootstrap extends the auto-regressive sieve bootstrap \cite{kreiss2011range} to non-stationary and nonlinear frequency domain setting. To our knowledge, the established UQ results are the first uniform statistical inference results on nonlinear TF analysis in the literature.

{\it Notation:} For a random vector $v\in \mathbb{R}^d$, denote $|v|$ to be its Euclidean norm and $|v|_\infty$ to be its $\ell^\infty$ norm. For a random variable $X$, denote $\|X\|_q$ to be its $\mathcal{L}^q$ norm. For two sequences $a_n$ and $b_n$, denote $a_n=O_{\ell^2}(b_n)$ to state that $a_n/b_n$ is bounded in $\mathcal{L}^2$ norm. Similarly, $a_n=o_{\ell^2}(b_n)$ says that $a_n/b_n$ converges to $0$ in $\mathcal{L}^2$ norm.

\section{Mathematical model}\label{section background SST STFT}

Take a real time series $Y=\{Y_{i}\}_{i=1}^n$ that follows the model
\begin{eqnarray}\label{eq:model}
Y_{i} = f_{i}+\sigma \epsilon_{i}\,,
\end{eqnarray} 
where $f_{i}$ is a deterministic sequence,  $\sigma:=\sigma_n>0$ might depend on $n$, and $\epsilon_i$ is a random process modeling noise.  
We detail $f$ and $\epsilon$ term by term below.

\subsection{Adaptive harmonic model}
Consider an oscillatory function
\begin{eqnarray}\label{eq:mean}
f(t) =  \sum_{k=1}^KA_k(t) \cos( 2\pi \phi_k(t) ) 
\end{eqnarray}
where $\phi_k(t)\in C^2(\mathbb{R})$ is a monotonically increasing function and $A_k(t)\in C^1(\mathbb{R})$ is a positive function. Usually $A_k(t) \cos( 2\pi \phi_k(t) )$ is called the $k$-th {\em intrinsic mode type (IMT)} function, $A_k(t)$,  $\phi_k(t)$ and  $\phi_k'(t)$ are called {\em amplitude modulation} (AM), {\em phase}, and {\em instantaneous frequency} (IF) respectively of the $k$-th IMT function. $A_k(t)$ and $\phi'_k(t)$ model how strong and fast the $k$-th IMT function oscillates at time $t$.
We need the following {\em slowly varying} and {\em separation} assumption: 
\begin{assumption}\label{model assumption 1}
For a nonnegative small $\varepsilon<1$, we assume for each $k$,
\begin{itemize}
\item $\Xi_s<\phi_k'(t)<\Xi$ for all $t\in \mathbb{R}$, where $\varepsilon^{1/3}<\Xi_s<\Xi$ are constants;
\item $|A'(t)| \leq \varepsilon A(t)$ for all $t\in \mathbb{R}$; $\Xi_a\leq A_k(t)\leq \Xi_A$ for all $t\in \mathbb{R}$, where $\varepsilon^{1/3}<\Xi_a<\Xi_A$ are constants.
\item $|\phi_k''(t)|\leq \varepsilon \phi_k'(t)$ for all $t\in \mathbb{R}$ and $\sup_t|\phi_k''(t) |<M$ for some $M\geq 0$.
\item $\phi_k'(t)-\phi_{k-1}'(t)>\Xi_s$ for $k=2,\ldots,K$.
\end{itemize}
\end{assumption}
We call this model {\em the adaptive harmonic model} (AHM) \cite{Daubechies2011synchro}. See \cite{chen2014non} for the model's identifiability. When $\varepsilon=0$, $\phi_k(t)$ is linear and $A_k(t)$ is constant so that a IMT function is harmonic, which is commonly assumed in the traditional time series literature. However, in many practical applications, signals oscillate with time-varying speed and amplitude, making the harmonic model inadequate \cite{Daubechies2011synchro,chen2014non}. While more complex models for biomedical signals, such as those involoving non-sinusoidal oscillations \cite{CYLin2018,lin2021wave}, can be considered, they offer little additional insight for the UQ focus of this work. Therefore, the AHM is sufficient for our purposes.

We sample $f(t)$ at $t_i:=i/\sqrt{n}$ to model the discrete sequence $f_i$; that is,
\begin{align}\label{discretization of AHM function sqrt n Hz}
f_i=f(i/\sqrt{n})\,,
\end{align}
where $i=1,\ldots,n$; that is, sample $f$ over the interval $[0,\sqrt{n}]$ with a uniform sampling period of $1/\sqrt{n}$. In the frequency domain, this setup implies a Nyquist rate of $\sqrt{n}$ and a corresponding frequency resolution (or canonical frequency bin width) of $1/\sqrt{n}$.

\begin{remark}
In contrast to previous SST-related work, e.g., \cite{Daubechies2011synchro,chen2014non}, in Assumption \ref{model assumption 1} we introduce additional uniform lower and upper bounds on $\phi'(t)$ and $A(t)$ over $\mathbb{R}$. These conditions do not meaningfully restrict the model of interest; rather, they are adopted to simplify the exploration and simplify the derivation of our theorems that are uniform in nature. It is important to note that both the IF and AM are permitted to grow at the rate of order $\varepsilon \sqrt{n}$ as $n\to\infty$, so the upper bound constraints are essential for establishing the desired uniform error bounds unless alternative bounding conditions are imposed.
The lower bounds on frequency and amplitude serve to prevent degeneracy. Particularly, the subscript $s$ in $\Xi_s$ suggests ``separation'', which controls spectral leakage. When analyzing multiple components is of interest, the analysis is a straightforward generalization with assumption that IF of different components are separated by $\Xi_s$. Moreover, we modify the control of $A'$, replacing the condition $|A'(t)|\leq \varepsilon \phi'(t)$ with $|A'(t)|\leq \varepsilon A(t)$. This adjustment is made to highlight the role of the amplitude more clearly in the final theoretical results. 
\end{remark}

\subsection{Non-stationary noise (NSN) model}
Next, to put structures on $\epsilon_i$, consider the following filtration-based high dimensional non-stationary (HDNS) time series
\begin{equation}\label{noise model in the HDNS setup}
z_i=(z_{i,1},\ldots,z_{i,d}):=\mathcal{G}_i(\mathcal{F}_i)\in \mathbb{R}^d
\end{equation} 
for $i=1,\ldots,n$, where $d\in \mathbb{N}$, $\mathcal{G}_i=(\mathcal{G}_{i,1},\ldots,\mathcal{G}_{i,d}):\mathbb{R}^\infty\to \mathbb{R}^d$ is a measurable function, $\mathcal{F}_i:=(\ldots,e_{i-1},e_i)$ is a sequence of i.i.d. random variables $\{e_i\}$, $\mathcal{G}_i$ is the causal filtration mechanism at time $i$ that takes a sequence of i.i.d. random variables and outputs a $d$-dimensional random vector. When $d=1$, we simply say the time series satisfies the non-stationary noise (NSN) model.

 To quantify the dependence structure of $z_i$, consider an i.i.d. copy of $\{e_i\}$, denoted as $\{\hat{e}_i\}$, and define $\mathcal{F}_{i,i-k}:=(\ldots,e_{i-k-1},\hat{e}_{i-k},e_{i-k+1},\ldots,e_i)$. The temporal dependence is quantified by the entrywise uniform functional dependence measure: 
\begin{align}
\theta_{q,j}(k):=&\,\sup_{1\leq i\leq n}(\mathbb{E}[|\mathcal{G}_{i,j}(\mathcal{F}_i)-\mathcal{G}_{i,j}(\mathcal{F}_{i,i-k})|^{q}])^{1/q}\,,\label{definition of dependence measure}
\end{align}
and %the uniform functional dependence measure
$\theta_{q}(k):=\max_{1\leq j\leq d}\theta_{q,j}(k)$, 
which uniformly quantify how the $k$-step historical input impacts the current output. %, and the quantification is uniform over the whole time series. 
We shall mention that the quantity $\theta_{q}(k)$ is different from the $L^r$ norm-based physical dependence measure, where $r\geq 2$, considered in \cite[Equation (3)]{Mies2024}. 
The cumulative tail dependence is needed to control the auto-covariance structure of the time series, and it is quantified by $\Theta_{q}(k):=\max\limits_{1\leq j\leq d}\sum_{l=k}^\infty \theta_{q,j}(l)$.

The noise $\epsilon_i$ is then modeled as \eqref{noise model in the HDNS setup} with $d=1$ so that it is a centered non-stationary noise process whose data generating mechanism may evolve both smoothly and abruptly over time. In general $z_i$ is not a martingale.

\begin{example}
The piecewise locally stationary (PLS) with $r$ break points (PLS($r$)) \cite{zhou2013,zhou2014} is a special case of the HDNS model. $\{\epsilon_{i}\}_{i = 1}^n$ is PLS($r$) if there exist constants $0 = s_0 < s_1 <\ldots< s_r < s_{r+1} = 1$ and $r+1$ measurable functions $\mathcal G_0,\ldots,\mathcal G_{r}$ as nonlinear causal filters such that 
$\epsilon_{i} = \mathcal G_j(t_i,\mathcal{F}_i), \quad \text{if} \quad s_j < t_i \leq s_{j+1}$,
$j=0,1,\cdots, r$, where $t_i = i/n$, and the time series is locally stationary \cite{dahlhaus1997fitting} between $s_j$ and $s_{j+1}$. The data generation mechanism changes abruptly at $s_j$, $j=1,2,\cdots, r$, which better models real-world noise and artifact. 
%See \cite{zhou2013,zhou2014} for more discussions of PLS processes. 
\end{example}

We need the following assumptions regarding the distribution behavior of $z_i$ and the physical dependence structure.
\begin{assumption}\label{model assumption 4}
Assume that for some $p>2$, 
%\begin{equation}\label{model assumption 4: polynomial decay}
$\theta_{p}(k)=O((k+1)^{-(\chi+1)}(\log(k+1))^{-A})$
%\end{equation}
for some constants $\chi>1$ and $A>0$ and
%\begin{align}\label{entrywise moment bound of the rp}
$\sup_{i\geq 0}\max_{1\leq j\leq d}\mathbb{E}[|z_{i,j}|^p]<B$
%\end{align}
for some constant $B>0$; that is, $z_i$ fulfill the uniform finite $p$-th moment assumption for some $p>2$ and the dependence measure $\theta_{p}(k)$ decays polynomially. 
\end{assumption}

Note that when $z_i$ has a finite exponential moment uniformly; that is, $\sup_{1\leq j\leq d}\mathbb{E}[\exp(|z_{i,j}|)]<\infty$, or when the physical dependence decays exponentially; that is, $\theta_{p}(k)=O(\exp(-C(k+1)))$ for some constants $C>0$, we can obtain better bounds in the upcoming theorem. Since the proof technique is similar, we focus on the above assumption to simplify the discussion.

\section{Algorithm}
\subsection{discrete short-time Fourier transform (STFT)}
\label{sec:STFT summary}

Take a time series $\{X_l\}_{l\in \mathbb{Z}}$. The {\em discrete STFT} of $\{X_l\}_{l\in \mathbb{Z}}$ associated with a unit vector $\mathbf{h}\in \mathbb{R}^{2m+1}$, where $m\in \mathbb{N}$, as the window, is defined as
\begin{align}\label{STFT definition discrete}
\mathbf V^{(\mathbf{h})}_X(l,\eta)
:=\sum_{j=l-m}^{l+m}X_j \mathbf{h}(j-l)e^{-\ii 2\pi \eta(j-l)}\,,
\end{align}
where $l\in \mathbb{Z}$ and $\eta\in [0,1/2)$ is the frequency to explore. Here, $m$ is the length of truncation related to the kernel bandwidth and is chosen by the user.

\subsection{Continuous STFT and its discretization}
\label{sec:STFT summary}
We shall mention that the discrete STFT \eqref{STFT definition discrete} is directly related to the discretized version of the {\em continuous STFT}. Recall the continuous STFT:
\begin{align}\label{STFT definition continuous}
V^{(\mathsf{h})}_f(t,\xi):=\int_{-\infty}^\infty f(x)\mathsf{h}(x-t)e^{-\ii 2\pi \xi(x-t)}dx\,,
\end{align}
where $f$ is a continuous tempered distribution and $\mathsf{h}$ is a symmetric Schwartz function with unit $L^2$ norm and supported on $[-\beta,\beta]$, 
where $\beta>0$ is chosen by the user to control the spectral resolution. A more general setup is certainly possible; however, it does not provide additional insight into the main focus of this paper. Therefore, we adopt this specific setup for the purposes of this study.
In general, when the sampling rate is $q$ Hz and we have $n$ sampling points over the period $[0,n/q]$, \eqref{STFT definition continuous} is discretized by the Riemann sum as
\begin{align}\label{STFT definition continuous discretization}
V^{(\mathsf{h})}_f(t_l,\eta)\approx \frac{1}{q}\sum_{j=1}^n f(t_j)\mathsf{h}(t_j-t_l)e^{-\ii 2\pi \eta(t_j-t_l)}\,,
\end{align}
where $f(t_j)$ is set to $0$ when $t_j<0$ or $t_j>n/q$ (that is, pad the signal by $0$, $t_l=l/q$, $l=1,\ldots,n$, and $\eta\in[0, q/2)$ is the frequency of interest. We call the right hand side of \eqref{STFT definition continuous discretization} the {\em discretized STFT}. This discretization approximation $\approx$ holds when $q$ is sufficiently large, which we assume from now. We will precisely quantify it below for our analysis purpose.   Set $\mathbf{h}\in \mathbb{R}^{2\lceil\beta q\rceil+1}$ as
%\begin{align}\label{continuous STFT discretization h}
$\mathbf{h}(k)=\frac{1}{\sqrt{q}}\mathsf{h}\left(-\beta+\frac{k-1}{q}\right)\,, \ \ k=1,\ldots,2\lceil\beta q\rceil+1$,
%\end{align}
where the normalization $\frac{1}{\sqrt{q}}$ guarantees that $|\mathbf{h}|$ is of order $1$ when $q$ is large, since $\sum_{k=1}^{2\lceil\beta q\rceil}\mathbf{h}(k)^2\to \int_{-\beta}^\beta|\mathsf{h}(t)|^2 dt=1$, when $q\to \infty$. In other words, we discretize the window $\mathsf{h}$ by $2\lceil\beta q\rceil+1$ points. 
In this setup, we connect the discretized STFT and the discrete STFT \eqref{STFT definition discrete} with a normalization factor $1/\sqrt{q}$; that is, if we set $X_j=f(t_j)$, $V^{(\mathsf{h})}_f(t_l,\eta)$ can be approximated by
\begin{align}\label{STFT relationship between discrete and continuous}
\frac{1}{\sqrt{q}}\sum_{j=l-\lceil\beta q\rceil}^{l+\lceil\beta q\rceil} X_j\mathbf{h}(j-l)e^{-\ii 2\pi \eta(t_j-t_l)}=:V^{(\mathbf{h})}_{X}(t_l,\eta)  =\frac{1}{\sqrt{q}} \mathbf V^{(\mathbf{h})}_X(l,\eta)\,.
\end{align}
In Section \ref{section GA of STFT on NSN}, we will show that under \eqref{eq:model}, the TFR of the discretized STFT is asymptotically a complex Gaussian random field on the TF domain.

\subsection{Synchrosqueezing transform (SST)}
\label{sec:SST summary}

SST is derived from the continuous STFT, and numerically implemented via a direct discretization. The {\em STFT-based SST} of $f$ with {\em resolution} $\alpha > 0$ is defined as
\begin{equation}\label{eq: SSTconti}
S^{(\mathsf{h})}_f(t,\xi) := 
\int_0^\infty
V^{(\mathsf{h})}_f(t,\eta)
\, g_\alpha\big(\xi- O^{(\nu)}_f(t,\eta)\big)d\eta \,,
\end{equation}
where $\xi>0$ is the frequency  and $ O^{(\nu)}_f(t,\eta)$ is the \emph{reassignment rule} defined by
\begin{equation}\label{eq: STFT-based reassignment rule cont}
 O^{(\nu)}_f(t,\eta)
:= 
\frac{-1}{2\pi i} 
\frac{V^{(\mathsf{h}')}_f(t,\eta)}{ V^{(\mathsf{h})}_f(t,\eta)}+\eta 
 \text{ if } |V^{(\mathsf{h})}_f(t,\eta)|> \nu\,,\ \mbox{ and } 
-\infty 
\ \text{ otherwise}\,,
\end{equation}
where $\nu\geq0$ is the chosen threshold, $g_\alpha:\mathbb{C}\to \mathbb{R}$ approximates the $\delta$ measure with the support at $\{0\}$ when $\alpha\to 0$.
The parameter $\alpha$ dictates SST's resolution in the frequency-axis.  For clarity, below we adopt ${g}_\alpha(z) = \frac{1}{\sqrt{\pi\alpha}} e^{-|z|^2/\alpha}$, which has $L^1$ norm $1$. SST's nonlinearity arises from reassignment rule, which extracts instantaneous frequency information \cite{Daubechies2011synchro}. 
Compared with the STFT, SST enhances the TFR contrast by leveraging phase information, mitigating uncertainty-principle effects under oscillatory conditions such as the AHM. This facilitates detection of oscillatory components and more accurate estimation of instantaneous frequency, amplitude modulation, and denoising via the reconstruction formula.

Numerically,  when the sampling rate is $q$ Hz and we have $n$ sampling points sampled at $i/q$, where $i=1,\ldots,n$, over the period $[0,n/q]$, STFT-based SST is implemented by a direct discretization.
Define $\mathbf{Dh}\in \mathbb{R}^{2\lceil \beta q\rceil+1}$ as $\mathbf{Dh}(k)=\frac{1}{\sqrt{q}}\mathsf{h}'\left(-\beta+\frac{k-1}{q}\right)$, where $k=1,\ldots,2\lceil\beta q\rceil+1$. %Denote $V^{(\mathbf{Dh})}_X(t_l,\eta_k)$ to be the discretization of the continuous STFT of $f$ with the window $\mathsf{h}$ replaced by $\mathsf{h}'$.
The STFT-based SST with threshold $\nu\geq 0$ is thus numerically implemented by the Riemann sum
\begin{equation}\label{eq: SST}
S^{(\mathbf{h})}_X(t_l,\xi_j) := 
\frac{C}{d}\sum_{k=1}^{d}
V^{(\mathbf{h})}_X(t_l,\eta_k)
\, g_\alpha\big(\xi_j- O^{(\nu)}_X(t_l,\eta_k)\big) \,,
\end{equation}
where $d\in \mathbb{N}$ means $d$ uniform points on $[0,C)$ in the frequency axis, $0<C\leq \frac{q}{2}$ indicates the chosen spectral range of interest, $\xi_j\in [0, q/2)$ is the positive frequency we have interest, and $ O_X(t_l,\eta_k)$ is the \emph{reassignment rule} defined by
\begin{equation*}%\label{eq: STFT-based reassignment rule}
 O^{(\nu)}_X(t_l,\eta_k)
:= 
\frac{-1}{2\pi i} 
\frac{V^{(\mathbf{Dh})}_X(t_l,\eta_k)}{ V^{(\mathbf{h})}_X(t_l,\eta_k)}+\eta_k 
\ \text{ if } |V^{(\mathbf{h})}_X(t_l,\eta_k)|> \nu\,,\ \mbox{ and }
-\infty \ \text{ otherwise}.
\end{equation*}
$C$ can be chosen as large as $q/2$ when no background information is available, or it is set to a sufficiently large constant depending on the application. %The parameter $\nu$ is usually selected as a small multiple (e.g., $10^{-6}$) of the variance of the input signal to enhance algorithmic stability. 

\begin{remark}
The choice of $\xi_l$ in practice is important. While a natural choice based on Nyquist-Shannon sampling and the window $\mathsf{h}$ is $\xi_l=\frac{l}{2\beta}$ for $l=1,\ldots,\lceil \beta q\rceil$, results from discrete Fourier analysis suggest that a finer grid might mitigate spectral leakage \cite{Genton_Hall:2007}. Empirically, we find that using a finer grid in SST is beneficial, although the optimal choice and its interaction with $\alpha$ remain unexplored. These considerations motivate alternative grids, such as $\xi_l= l/(2\beta q)$ for $l=1,\ldots,\lceil \beta q^{3/2}\rceil$, which may offer both practical and theoretical advantages. This topic lies beyond the scope of this paper and will be pursued in future work.
\end{remark}

\subsection{Signal reconstruction by SST}

To establish UQ of TFRs determined by STFT and SST, we require an algorithm capable of accurately separating the underlying oscillatory signal satisfing the AHM. To this end, we consider a SST-based reconstruction and aim to demonstrate the robustness of SST-based signal reconstruction under the AHM framework.

Recall that when $\mathsf{h}(0)\neq 0$, the SST-based reconstruction formula \cite{Wu2011adaptive} for recovering each IMT function, $f_k(t)=A_k(t)\cos(2\pi \phi_k(t))$, from a noisy observation $Y=f+\Phi$, where $f$ satisfies the AHM and $\Phi$ is a tempered distributed random process modeling the noise \cite{chen2014non,sourisseau2022asymptotic}, is defined as
\begin{align}\label{alogithm:sst:reconstruction continuous formula}
{f}_k^{\mathbb{C}}(t):=\frac{1}{\mathsf{h}(0)}\int_{\xi\in 
R_{k,t}}  S^{(\mathsf{h})}_Y(t,\xi)d\xi\,,
\end{align}
where $R_{k,t}:=[-\Delta_r,\,\Delta_r]+ \phi_k'(t)$ with $\Delta_r>\varepsilon^{1/3}$ a small constant chosen by the user.\footnote{In the literature \cite{Daubechies2011synchro,chen2014non}, $R_t$ is defined with $\Delta_r=\varepsilon^{1/3}$ when $\varepsilon>0$. To handle the degenerate case that $\varepsilon=0$, we modify the reconstruction formula and consider a non-degenerate $\Delta_r$.} 
In practice, $\phi_k'(t_l)$ can be estimated using ridge extraction algorithms applied to the TFR derived from the STFT \cite{laurent2021novel} or the SST \cite{chen2014non}. 
It is important to note that while several ridge extraction algorithms have been developed and are empirically robust and accurate to estimate the IF, a quantitative analysis of these algorithms remains lacking. Since analyzing ridge extraction algorithm is out of the scope of this paper, in the following analysis, we assume that the ridge extraction algorithm is sufficiently accurate and we can robustly estimate $\phi_k'(t)$. We refer readers to \cite{liu2024analyzing} for a review of existing algorithms and recent effort in ridge analysis.

In practice, we discretize \eqref{alogithm:sst:reconstruction continuous formula} directly to analyze a sampled version of $Y$ or a time series. %, an approach commonly in adopted in clinical applications (also see, e.g., \cite{lo2020hypoventilation}, among others).   
Set $m = \lceil \beta q \rceil \asymp \sqrt{n}$. 
%We then discretize \eqref{alogithm:sst:reconstruction continuous formula} as follows.
Numerically implement STFT via the discretized STFT $V^{(\mathbf{h})}_Y(t_l,\eta_j)$ defined in \eqref{STFT relationship between discrete and continuous} at time $t_l$ and frequency $\eta_j=j\Xi/d$, where $j=1,\ldots,d$, and $d\in \mathbb{N}$ is the number of frequency bins. 
Next, the SST is computed as $S^{(\mathbf{h})}_Y(t_l,\xi_k)$ according to \eqref{eq: SST}, where $C=\Xi$, $\xi_k=k\Xi/d$ for $k=1,\ldots,d$ and the threshold $\nu\geq0$.
Finally, the SST reconstruction formula  is implemented by
\begin{align}
\widetilde{f}^{\mathbb{C}}(t_l):=\frac{1}{\mathsf{h}(0)}\frac{\Xi}{d}\sum_{\xi_k\in 
R_l}  S^{(\mathbf{h})}_Y(t_l,\xi_k) \,,\label{alogithm:sst:reconstruction}
\end{align} 
which is a direct discretization of \eqref{alogithm:sst:reconstruction continuous formula}.

While pointwise robustness for a similar reconstruction formula based on the continuous wavelet transform has been established in \cite{chen2014non}, to our knowledge, no robustness results are available for \eqref{alogithm:sst:reconstruction continuous formula} under general nonstationary noise, let alone uniform robustness.
%It is known \cite{Wu2011adaptive} that when $d$ is sufficiently large and there is no noise, the IMT functions can be accurately recovered by $\check{f}_k^{\mathbb{C}}$. 
%However, a rigorous justification in the discrete setting with noise is still lacking. 
In Section \ref{section:reconstruction formula proof}, we establish a robustness of the reconstruction $\widetilde{f}^{\mathbb{C}}$ satisfying a uniform error bound in time with high probability when $Y_i = f_i + \sigma \epsilon_i$ is defined in \eqref{eq:model} following the discretization scheme \eqref{discretization of AHM function sqrt n Hz}. The noise-free case follows by letting $\sigma \to 0$. 
The robustness of the reconstruction formula also ensures the robustness of AM and phase reconstruction; that is, we estimate $A_k(l/\sqrt{n})$ by $|\widetilde{f}_k^{\mathbb{C}}(l)|$, and the phase $\phi_k(l/\sqrt{n})$ by unwrapping $\widetilde{f}_k^{\mathbb{C}}(l)/|\widetilde{f}_k^{\mathbb{C}}(l)|$, with the error being uniformly controlled.

\subsection{Numerical implementation}
For numerical implementation, we recommend the following practical guidelines for an input signal $X_{1},\ldots,X_{n}$. For the STFT, use a truncated Gaussian window $\mathsf{h}$. When no prior information is available, the R\'enyi entropy \cite{sheu2017entropy} may be applied to select $\beta$; otherwise, choose $\beta$ so that $\mathsf{h}$ spans approximately 8-15 cycles of the target oscillatory component. For the SST, we suggest setting $\nu=10^{-6}\times \texttt{std}(X_{1},\ldots,X_{n})$, and $\alpha=10/\sqrt{n}$. For reconstruction, start from the lowest-frequency component and proceed iteratively. A practical choice is $\Delta_r=\sqrt{\alpha} \min\limits_{i=1,\ldots,n}\{\phi'_2(i/\sqrt{n})-\phi'_1(i/\sqrt{n}),\, \phi'_1(i/\sqrt{n})\}$. After reconstructing the lowest-frequency component, repeat the procedure for higher components. Empirical evidence suggests that the reconstruction is not sensitive to these parameter choices. A systematic study of optimal parameter selection is beyond the scope of this work and is left for future investigation.

\section{Uncertainty quantification of SST by bootstrapping}

In many real-world settings, oscillatory components may or may not be present, and their onset times and signal-to-noise ratios are typically unknown. A key practical goal is therefore to determine whether an oscillatory component is present in the TFR obtained by the STFT or SST and to quantify the associated uncertainty.
Recovering the underlying model of $\epsilon_i$ is generally challenging. Nonetheless, when $\epsilon_i$ satisfies mild regularity conditions, particularly local stationarity, we can effectively approximate the noise via a time-varying autoregressive (tvAR) approach \cite{ding2023autoregressive} while preserving its covariance structure. By our Gaussian approximation result for the STFT, the STFT of this tvAR process can be uniformly approximated in time and frequency by that of a Gaussian tvAR process with the same coefficients.
Although the SST is nonlinear, the preserved covariance structure under the Gaussian approximation implies that the distributional behavior of the SST of $\epsilon_i$ can be well approximated by that of a Gaussian tvAR process. 
Motivated by this observation, we propose a bootstrap algorithm for $\epsilon_i$ under the locally stationary assumption. This yields principled UQ for both the STFT and the SST and leads naturally to a noise-thresholding framework.
A variety of bootstrap methods for locally stationary processes have been proposed, e.g., wavelet-based approach \cite{nason2013test}, STFT-based approach \cite{KreissPaparoditis2015,HafnerKirch2017}, moving block approach \cite{sourisseau2022asymptotic,LinSongvanderSluis2025}, and singular spectrum approach \cite{Poskitt2025}.
Empirically, we find these approaches perform comparably for bootstrapping the SST when the noise is estimated via the reconstruction formula \eqref{alogithm:sst:reconstruction}. We focus on the tvAR approach here because it integrates most naturally with our theoretical framework and allows us to establish convergence rates. See Section \ref{section bootstrap theorem proof}.

\subsection{Bootstrap the noise}\label{section bootstrap algorithm}
Suppose the given time series is $\{X_i\}_{i=1}^n$, and consider the null hypothesis $f_i=0$. If the null hypothesis is known to hold, we simply set $\tilde{f}_i=0$ and $\tilde{\epsilon}_i:=X_i$.
Otherwise, we first apply the SST to reconstruct candidate IMT components and denote the reconstructed $k$-th component by $\tilde{f}_{k,i}$, assuming that the number of components $K$ is known.\footnote{Estimating $K$ directly from the time series without prior knowledge, particularly under our model, is a challenging problem. Since assuming knowledge of $K$ is reasonable in many biomedical time series applications, and investigating this open problem does not directly advance our goal of UQ, we leave it for future work.} Then, estimate the noise by $\tilde{\epsilon}_i:=X_i-\sum_{k=1}^K\tilde{f}_{k,i}$. 

Given $\tilde{\epsilon}_i$, we approximate the error process $\{\epsilon_n\}$ by a tvAR process $\{x_n\}$ with short range dependence:
$x_i=\sum_{j=1}^b\phi_{j}(i/n)x_{i-j}+\varepsilon_i$,
where $\{\varepsilon_i\}$ is a locally stationary white noise, $b\in \mathbb{N}$ is the tvAR order and $\phi_j$ is a smooth function on $[0,1]$ with proper conditions. To approximate $\phi_j$, consider an orthonormal basis $\{\psi_i\}_{i=1}^m$ of a finite dimensional subspace of $L^2[0,1]$, so that $P_S\phi_j\approx \phi_j$ for $j=1,\ldots,b$ with $S:=\texttt{span}\{\psi_i\}_{i=1}^m$. Then,
$x_i\approx \sum_{j=1}^b\sum_{k=1}^ma_{ik}\psi_{j}(i/n)x_{i-j}+\varepsilon_i$.
$a_{ik}$ are estimated via linear regression on $\{\tilde{\epsilon}_i\}_{i=1}^n$, yielding
$\tilde{\phi}_{j}(i/n)=\sum_{k=1}^m\tilde{a}_{ik}\psi_{j}(i/n)$, and
the innovation process is estimated as
$\tilde{\varepsilon}_i:=x_i-\sum_{j=1}^b\tilde{\phi}_{j}(i/n)\tilde{\epsilon}_{i-j}$
when $i=b+1,\ldots,n$ and when $i=1,\ldots,b$, set $\tilde{\varepsilon}_i=\tilde{\epsilon}_i$ or estimate it by the reverse process. The time-varying standard deviation of the estimated innovation process is estimated via local averaging process:
$\tilde{\sigma}_i=\texttt{STD}\{ \tilde{\varepsilon}_j,\, \max\{1,i-I\}\leq j\leq \min\{n, i+I\}\}$, 
where $I\in \mathbb{N}$ typically set to $20$ but can be optimized. 
To generate bootstrap replicates $\{\epsilon^{(*m)}_i\}$, draw i.i.d. standard Gaussian $\eta^{(*m)}_i$ and set
$\epsilon^{(*m)}_i:=
\tilde{\sigma}_i\eta^{(*m)}_i$  when $i=1,\ldots,b$, and 
$\epsilon^{(*m)}_i:=\sum_{j=1}^b\tilde{\phi}_{j}(i/n)\epsilon^{(*m)}_{i-j}+\tilde{\sigma}_i\eta^{(*m)}_i$ when $i=b+1,\ldots,n$.
This yields $M$ bootstrap replicates that preserve the locally stationary structure of the original process.

\subsection{Application to uncertainty quantification of SST}
The first application concerns quantifying the uncertainty of SST. Given the time series $\{X_i\}_{i=1}^n$, we run SST on $X_i^{(*m)}:=\tilde{f}_i+\epsilon^{(*m)}_i$, $m=1,\ldots,M$, over a grid in the time-frequency domain, denoted by $\mathcal{G}:=\{t_1,\ldots,t_{n'}\}\times\{\xi_1,\ldots,\xi_{d}\}\subset \mathcal{S}:=\{1/\sqrt{n},2/\sqrt{n}\ldots,\sqrt{n}\}\times \{1/\sqrt{n},\ldots,\sqrt{n}/2\}$, where $d$ is guided by Theorem \ref{Bootstrapping main theorem}. 
Clearly, $\mathcal{S}$ is the full grid that hosts $n$ sample points in the time domain and $n/2$ canonical frequencies in the frequency domain.
In practice, we set $n'=n$ and $t_l=l/\sqrt{n}$ in $\mathcal{G}$, or if we aim to speed up the algorithm, a sparser grid can be chosen. At each point in $\mathcal{G}$, we compute the empirical $(\alpha/2)$-th and $(1-\alpha/2)$-th percentiles, where $\alpha>0$ is chosen by the user, based on the $M$ realizations, $\{|S^{(\mathbf{h})}_{X^{(*m)}}(t_j,\xi_k)|\}_{m=1}^M$.
These percentiles are then interpolated from $\mathcal{G}$ to the full grid $\mathcal{S}$ using cubic splines. Denote the resulting interpolated function as $|S|_{X,\alpha/2}\in \mathbb{R}^{n\times \lfloor n/2\rfloor}$ and $|S|_{X,1-\alpha/2}\in \mathbb{R}^{n\times \lfloor n/2\rfloor}$. Plotting $|S|_{X,\alpha/2}$ and $|S|_{X,1-\alpha/2}$ alongside $|S^{(\mathbf{h})}_X|$ provides a visual representation of the uncertainty of SST in the presence of noise contamination by $\epsilon_i$. See Section \ref{section numerical simulation} for details.

The second application focuses on noise thresholding. We run SST on $\epsilon^{(*m)}_i$, $m=1,\ldots,M$, over the grid $\mathcal{G}$. At each point in $\mathcal{G}$, we estimate a threshold corresponding to a user-specified $(1-\alpha)$ confidence level, where $\alpha>0$ is chosen by the user. This threshold is then interpolated to the full grid $\mathcal{S}$ and denoted as $T\in \mathbb{R}^{n\times \lfloor n/2\rfloor}$. We then threshold the SST coefficients of $X_i$ by setting coefficients below the threshold to zero. The resulting thresholded TFR, denoted as $S^{(\mathbf{h})}_{X,T}\in \mathbb{R}^{n\times \lfloor n/2\rfloor}$, is given by $S^{(\mathbf{h})}_{X,T}(i,k):=S^{(\mathbf{h})}_X(t_i,\xi_k)I(|S^{(\mathbf{h})}_X(t_i,\xi_k)|\geq T(i,k))$, where  $I(\cdot)$ is the indicator function. A similar bootstrapping procedure can be applied to quantify the STFT uncertainty and to determine appropriate thresholds.

Third, we construct simultaneous confidence regions (SCR) statistics \cite{yang2022spectral} to detect the presence of oscillatory components. Choose the index grid $\mathcal{G}:=\{1,\ldots,n\}\times\{\lfloor n^{2/3}\rfloor,\lfloor2n^{2/3}\rfloor\ldots,\lfloor n/2\rfloor\}$. Define the STFT-SCR and SST-SCR statistics of the input signal as $r_V:=\max\limits_{(i,j)\in \mathcal{G}}\{|V(i,j)|\}$ and $r_S:=\max\limits_{(i,j)\in \mathcal{G}}\{|S(i,j)|\}$, where $V_{X}^{(\bf{h})}$ and $S_{X}^{(\bf{h})}$ are the STFT and SST of $X_i$, respectively.
Generate $M$  bootstrap replicates of the null noise process $\{\epsilon^{(*m)}_{i}\}_{i=1}^{n}$, where $m=1,\ldots,M$, and the associated STFT and SST, denoted as $V_{X}^{(*m)}$ and $S_{X}^{(*m)}$, which leads to the bootstrapped STFT-SCR and SST-SCT, denoted as $r_V^{(*m)}:=\max\limits_{(i,j)\in \mathcal{G}}\{|V_{X}^{(*m)}(i,j)|\}$ and $r_S^{(*m)}:=\max\limits_{(i,j)\in \mathcal{G}}\{|S_{X}^{(*m)}(i,j)|\}$. At level $\alpha\in (0,1)$, reject the null hypothesis that $f=0$ using STFT if $r_V>Q_\alpha(\{r_V^{(*m)},\,m=1,\ldots,M\})$, or using SST if $r_S>Q_\alpha(\{r_S^{(*m)},\,m=1,\ldots,M\})$, where $Q_\alpha$ is the $\alpha$-percentile.

\section{Numerical results}\label{section numerical simulation}

We focus on the non-null case and real signal. More results, including the null case, is postponed to Section \ref{Section more numerical results supp: null case}. The Matlab implementation to reproduce the results can be found in \url{https://github.com/hautiengwu2/UQ-SST}. In this section, we fix $M=1000$ in all bootstraps.

\subsection{Non-null case}

Generate simulated oscillatory signals in the following way. Discretize a standard Brownian motion, denoted as $B_i$, where $i=1,\ldots,n$, and convolve it with a chosen kernel $K$ with a support of $700$ points. Denote the resulting random process by $b_i$ and obtain $A(i/\sqrt{n}):=3+b_i/\|b_i\|_{\ell_\infty}$, where $\sqrt{n}>0$ is the sampling rate. Similarly, construct a monotonically increasing function by taking another standard Brownian motion $B'_i$, $i=1,\ldots,n$, that is independent of $B_i$ and smoothen it with a chosen kernel $K'$ with a support of $500$ points. Denote the resulting random process by $p_i$. Construct $\varphi'(i/\sqrt{n}):=4+\frac{0.5i}{17\sqrt{n}}+1.2\frac{p_i}{\|p_i\|_\infty}$. The monotonic random process, denoted as $\varphi(i/\sqrt{n})$, is obtained via normalized cumsum; that is, $\varphi(i/\sqrt{n})=\frac{1}{\sqrt{n}}\sum_{j=1}^i\varphi'(j/\sqrt{n}) $. The oscillatory signal is constructed by $f(i/\sqrt{n}):=\sum_{k=1}^2A_k(i/\sqrt{n})\cos(2\pi\varphi_k(i/\sqrt{n}))$. This construction of oscillatory signal is in practice closer to real world data and has been considered in various time-frequency analysis literature. The noise $\epsilon_i$ is constructed in the following way. Take an i.i.d. Gaussian process $\eta_i$ with standard deviation $1$, and construct a tvAR process via $\epsilon_i=\eta_i$ when $i=1,2$, and $\epsilon_i=\sum_{k=1}^2\phi_k(i/n)\epsilon_{i-k}+(1+0.5\cos(2\pi i/n))\eta_i$ when $i=3,4,\ldots,n$, where $\phi_1(i)=-0.5(0.7+0.3\cos(2\pi i/n))$ and $\phi_2(i)=0.3\sqrt{0.1+i/(4n)}$. The final random process is $X^{(a)}_i:=af(i/\sqrt{n})+\epsilon_i$, where $a\geq0$ is the global signal strength. When $a=0$, it is the null case, otherwise nonnull. We assume the knowledge of two oscillatory components.

With one realization of the simulated signal $X^{(1)}_i$, we apply \eqref{alogithm:sst:reconstruction} to reconstruct each IMT function and hence $f$, denoted as $\tilde{f}(i/\sqrt{n})$,  and subtract it from the signal to get the reconstructed noise, denoted as $\tilde{\epsilon}_i$. Then, bootstrap the noise on $\tilde{\epsilon}_i$ for $M\geq 1$ times, denoted as ${\epsilon}^{(*m)}_{i}$, where $m=1,\ldots,M$. 
The noisy signal and the reconstructed deterministic signal are shown in Figure \ref{Figure:Signalrecon nonnull}, and the associated TFRs are shown in the left panel of Figure \ref{Figure: BS P SST nonnull}. Clearly, SST exhibits a wider dynamic range and greater concentration than STFT, a direct consequence of the reassignment step. With the reconstructed noise, the bootstrap result is shown in Figure \ref{Figure:tvARapprox nonnull}. 
The QQ plots of distributions of STFT and SST of true noise and bootstrapped noise are shown in Figures \ref{FigureS4:STFT BS real nonnull} and \ref{FigureS5:SST BS real nonnull} respectively. 
The result supports the validity of combining SST-based reconstruction with the bootstrap approach.
When $M=1000$, the runtime of bootstrap is $476\pm 15$ s in Matlab R2025a on a 2017 MacBook Pro (3.1GHz Quad-Core Intel Core i7).

\begin{figure}[bht!]
\centering
\includegraphics[trim=0 0 0 0,clip,width=0.85\textwidth]{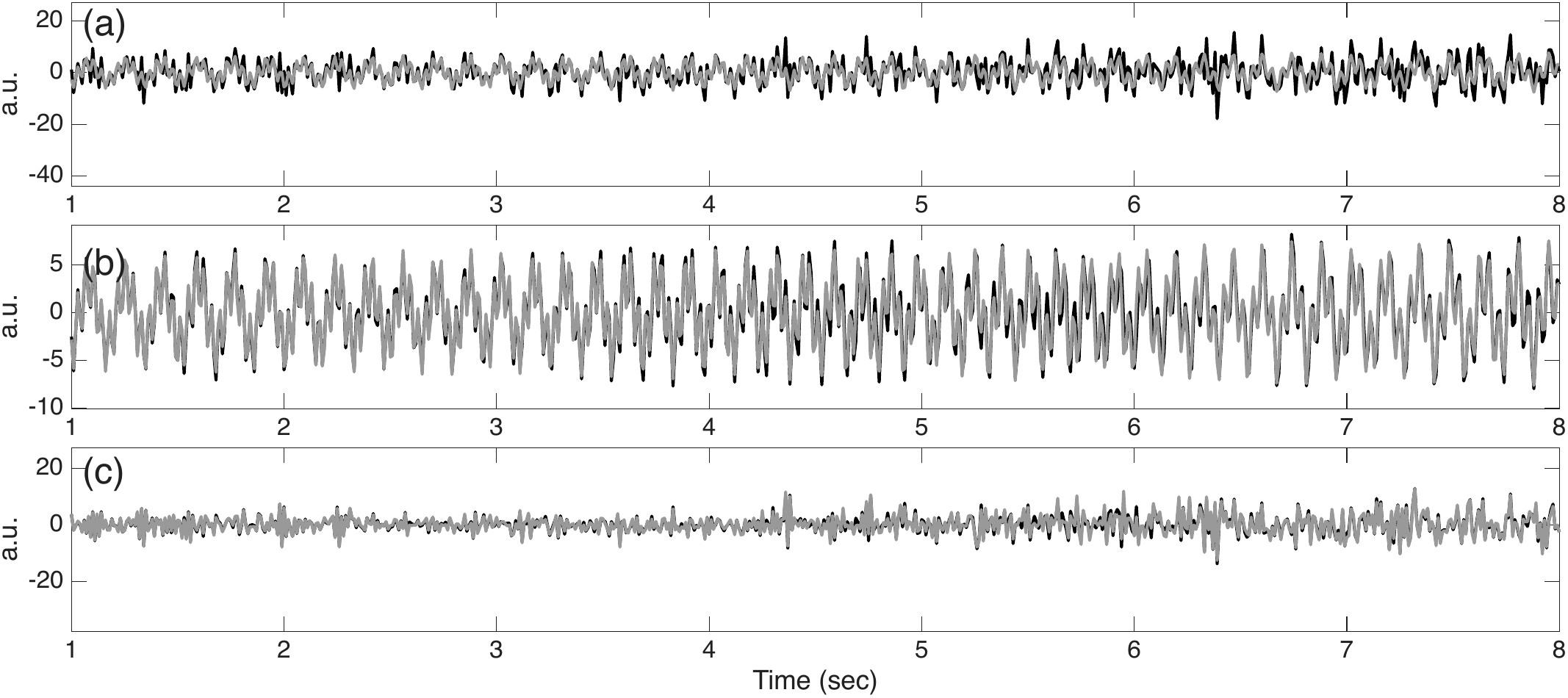}
\caption{\small An 8s segment of the noisy signal and reconstruction results. 
(a) one realization of the noisy signal (gray) and the corresponding true deterministic signal (gray). (b) reconstructed deterministic signal (gray) and the true deterministic signal (gray). (c) reconstructed noise (gray) and the true realized noise (gray).\label{Figure:Signalrecon nonnull}}  
\end{figure}

\begin{figure}[bht!]
\centering
\includegraphics[trim=0 0 0 0,clip,width=0.85\textwidth]{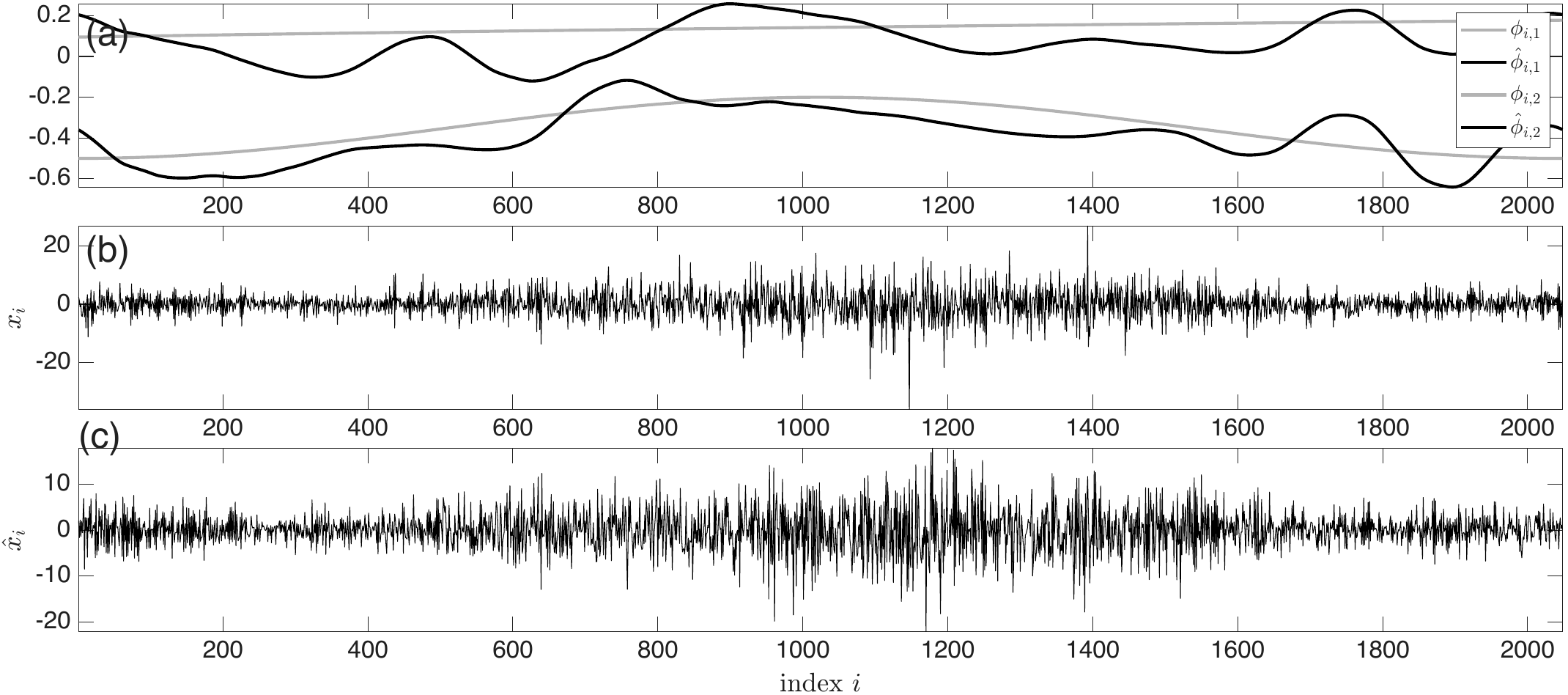}
\caption{\small (a) the true tvAR coefficients, and the estimated coefficients of the approximate tvAR process. (b) one realization of $x_i$ and the reconstructed $x_i$. (c) the bootstrapped $x_i$.\label{Figure:tvARapprox nonnull}}  
\end{figure}

The noise thresholding results with $X^{(1)}_i$ are shown in Figure \ref{Figure: BS P SST nonnull}, where the threshold is determined based on the 99th percentile of the bootstrapped noise values, $\{\epsilon^{(*m)}_{i}\}_{i=1}^{n}$, for $m=1,\ldots,1000$, using $\tilde{\epsilon}_i$.
In the TFRs of the noisy signal obtained via STFT and SST (Figures \ref{Figure: BS P SST nonnull}(a) and (d), respectively), background speckles are visible, particularly around 5-15 s above 30 Hz, which are attributable to noise. Using bootstrapped UQ from pure noise TFRs, we obtain a statistically reliable threshold (Figures \ref{Figure: BS P SST nonnull}(b) and (e)). Applying this threshold results in cleaner TFRs (Figures \ref{Figure: BS P SST nonnull}(c) and (f)), in which noise-induced speckles are effectively suppressed. The denoising effect is especially prominent in the SST.

\begin{figure}[bht!]
\centering
\includegraphics[trim=0 0 0 0,clip,width=0.8\textwidth]{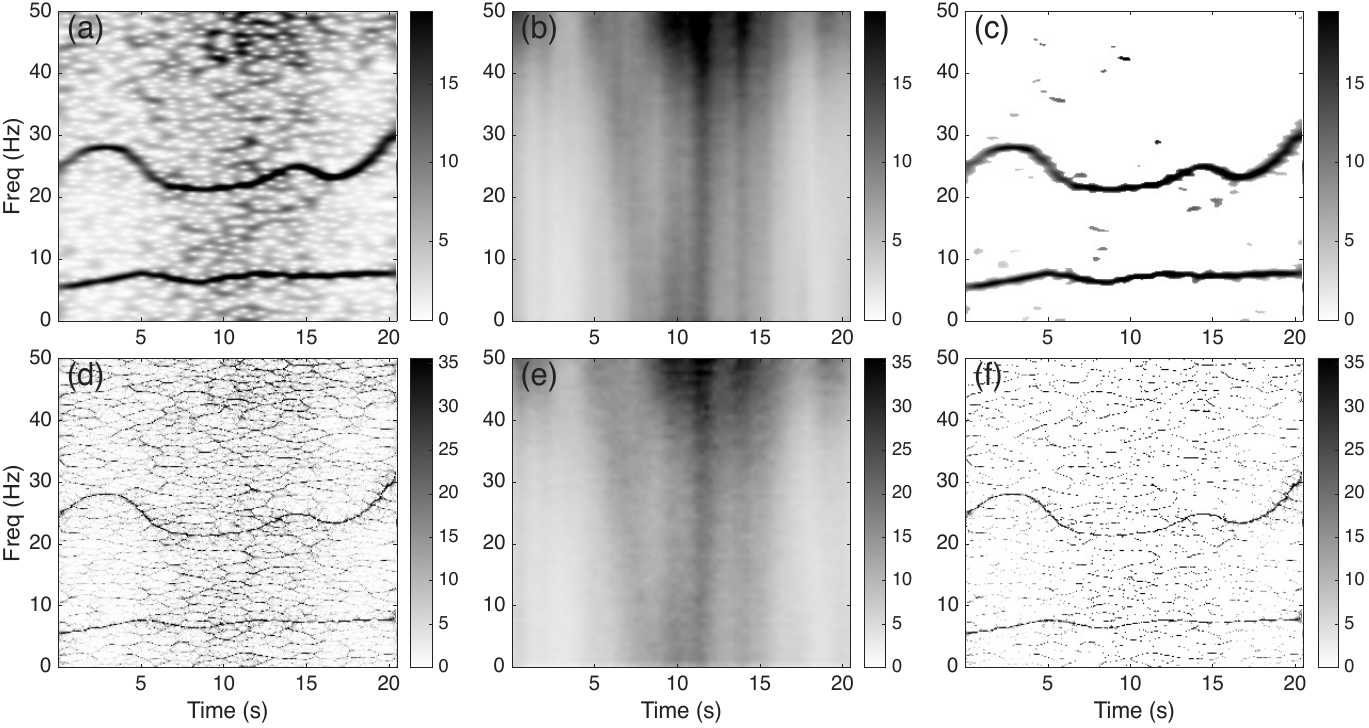}
\caption{\small Thresholding results. (a) The TFR of a nonnull signal determined by STFT. (b) 99\% percentile of the bootstrapping as the threshold, with interpolation to the whole grid. (c) Thresholded (a) by (b).
(d) The TFR of a nonnull signal determined by SST. (e) 99\% percentile of the bootstrapping as the threshold, with interpolation to the whole grid. (f) Thresholded (d) by (e).\label{Figure: BS P SST nonnull}}  
\end{figure}

The bootstrap-based UQ results are shown in Figure \ref{Figure: UQ P SST nonnull}. We generate bootstrapped signals, denoted as $\tilde{x}^{(*m)}_i:=\tilde{f}(i/\sqrt{n})+{\epsilon}^{(*m)}_{i}$, where $m=1,\ldots,1000$, and run STFT and SST on $\tilde{x}^{(*m)}_{i}$. 
Figures \ref{Figure: UQ P SST nonnull}(a)-(b) and (c)-(d) jointly depict the pointwise 95\% confidence intervals (CI) of the TFRs obtained from STFT and SST respectively, using $1000$ realizations of the random process model, while panels (e)-(f) and (g)-(h) present analogous 95\% CIs derived from bootstrap resampling of the noise and reconstructed signal.
Notably, the bootstrap-based CIs closely match those obtained from the true model. While two curves associated with the true IFs are visible in the noisy TFR in the left panel of Figure \ref{Figure: BS P SST nonnull}, the 97.5\% confidence plots shown in Figures \ref{Figure: UQ P SST nonnull}(f) and (h) offer further evidence about their existence. The difference between the 2.5\% confidence plots of the STFT and SST, particularly the presence of visible ridges in the STFT plot but less clear in the SST plot over regions corresponding to the true IFs, reflects the theoretical understanding that SST sharpens and concentrates the TFR.

\begin{figure}[bht!]
\centering
\includegraphics[trim=0 0 0 0,clip,width=0.99\textwidth]{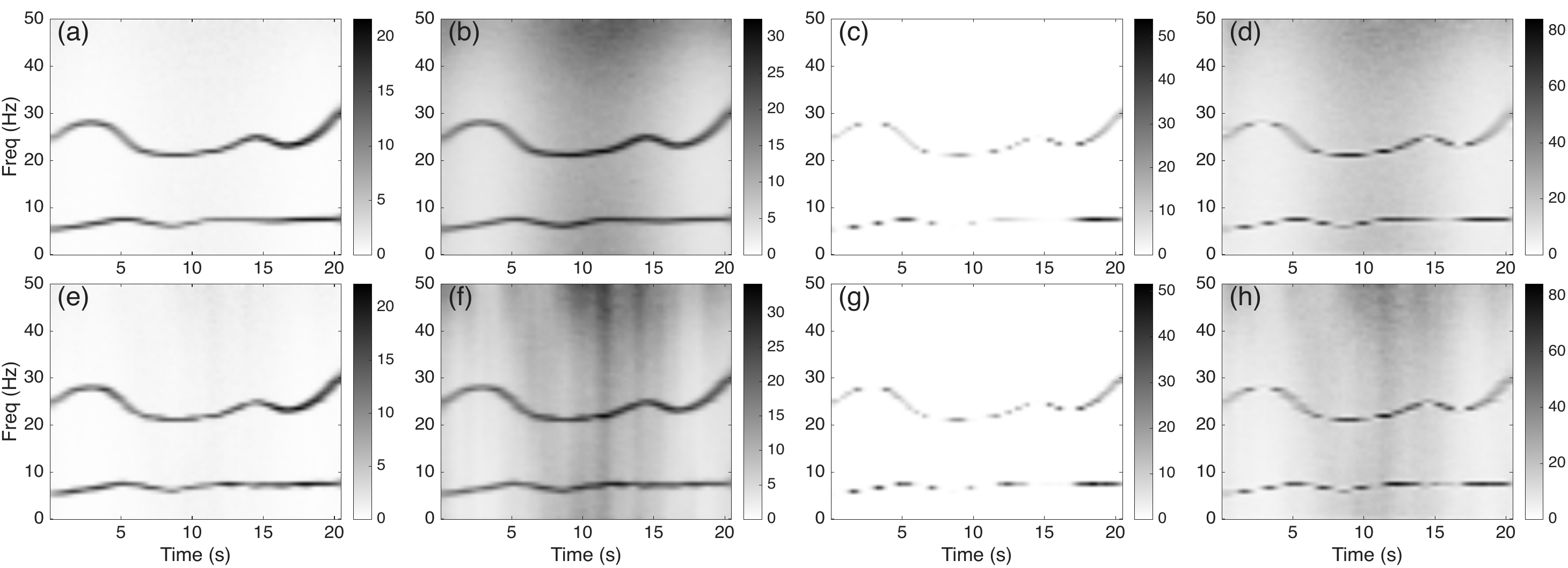}
\caption{\small UQ of TFR. Panels (a)-(b) and (c)-(d) jointly depict the pointwise 95\% confidence intervals of the TFRs obtained from STFT and SST respectively, using $5000$ realizations of the random process model. Panels (a) and (c) show the 2.5\% percentile maps, whereas (b) and (d) are the corresponding 97.5\% percentiles. 
Panels (e)-(f) and (g)-(h) present analogous 95\% confidence intervals derived from bootstrap resampling of the noise and reconstructed signal (5 000 repetitions). Panels (e) and (g) show the 2.5\% percentile maps, and (f) and (h) the 97.5\% percentile maps.\label{Figure: UQ P SST nonnull}}  
\end{figure}

Finally, we demonstrate the application of the bootstrap procedure to detect existence of oscillatory signals using the STFT-SCR and SST-SCR statistics. Consider $a=k/4$, where $k=0,1,\ldots,8$ in the signal $X^{(a)}_i$. With the significance level $0.05$, the simulated rejection rate result is shown in Figure \ref{Figure: SCR result}, where we compare the considered bootstrap approach \cite{ding2023autoregressive} with others, including \cite{HafnerKirch2017} and \cite{Poskitt2025}. We can clearly see that SST-SCR has a higher rejection rate compared with the STFT-SCR, and the considered bootstrap overall behaves better.

\begin{figure}[bht!]
\centering
\begin{minipage}{0.5\textwidth}
\includegraphics[trim=0 0 0 0,clip,width=1\textwidth]{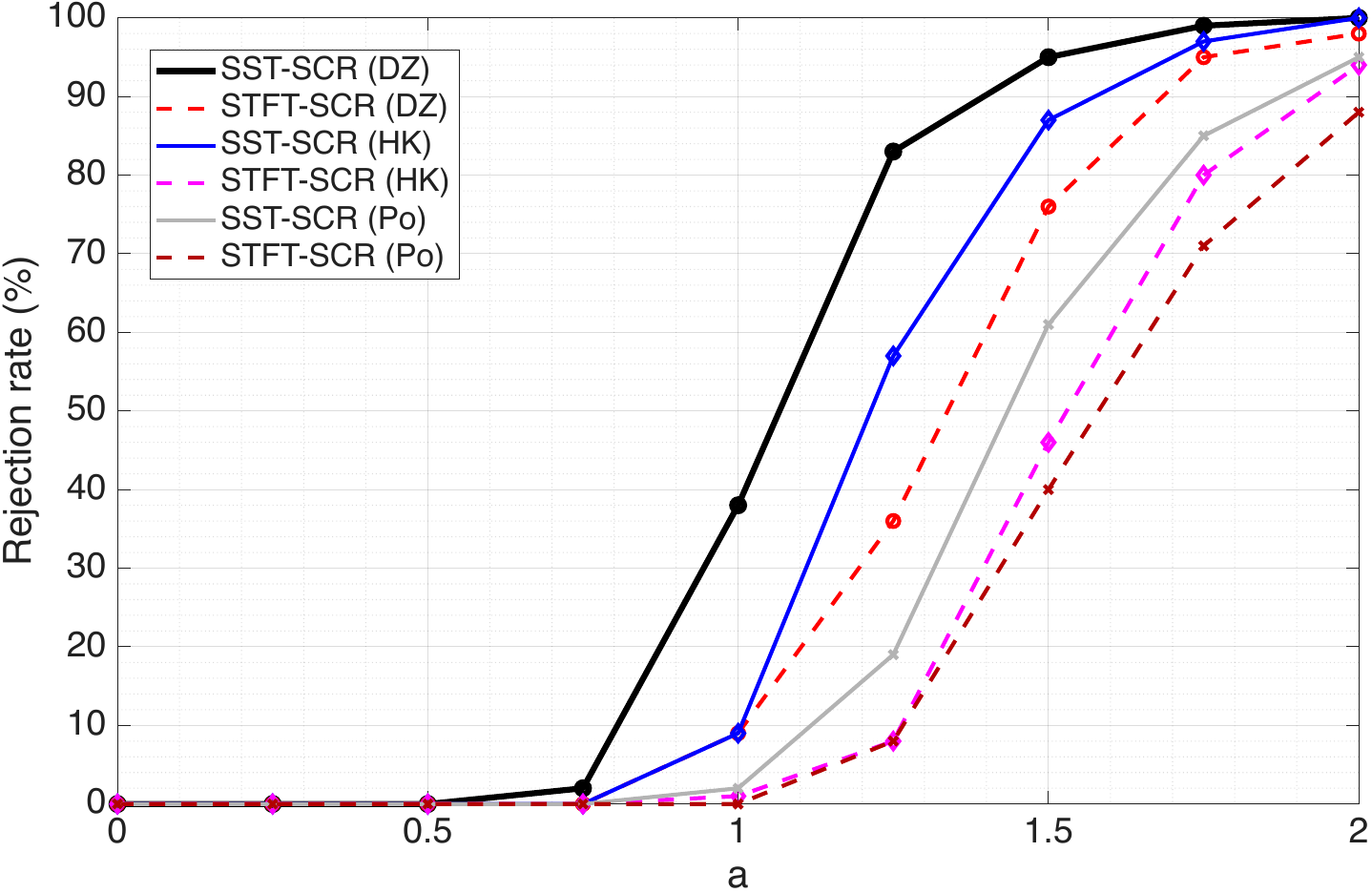}
\end{minipage}
\begin{minipage}{0.4\textwidth}
\caption{\small The rejection rate of the STFT-SCR and SST-SCR over a series of simulated signals, with the signal amplitude $a$ ranging from 0 to 2. The dashed (solid resp.) curves are based on the STFT-SCR (SST-SCR resp.) with different bootstrapping algorithms, where DZ, HK, and Po indicates bootstrapping algorithm from \cite{ding2023autoregressive}, \cite{HafnerKirch2017}, and \cite{Poskitt2025}.  \label{Figure: SCR result}}  
\end{minipage}
\end{figure}

\subsection{Application to sleep spindle analysis}

\begin{figure}[bht!]
\centering
\includegraphics[trim=5 175 5 10, clip,width=\textwidth]{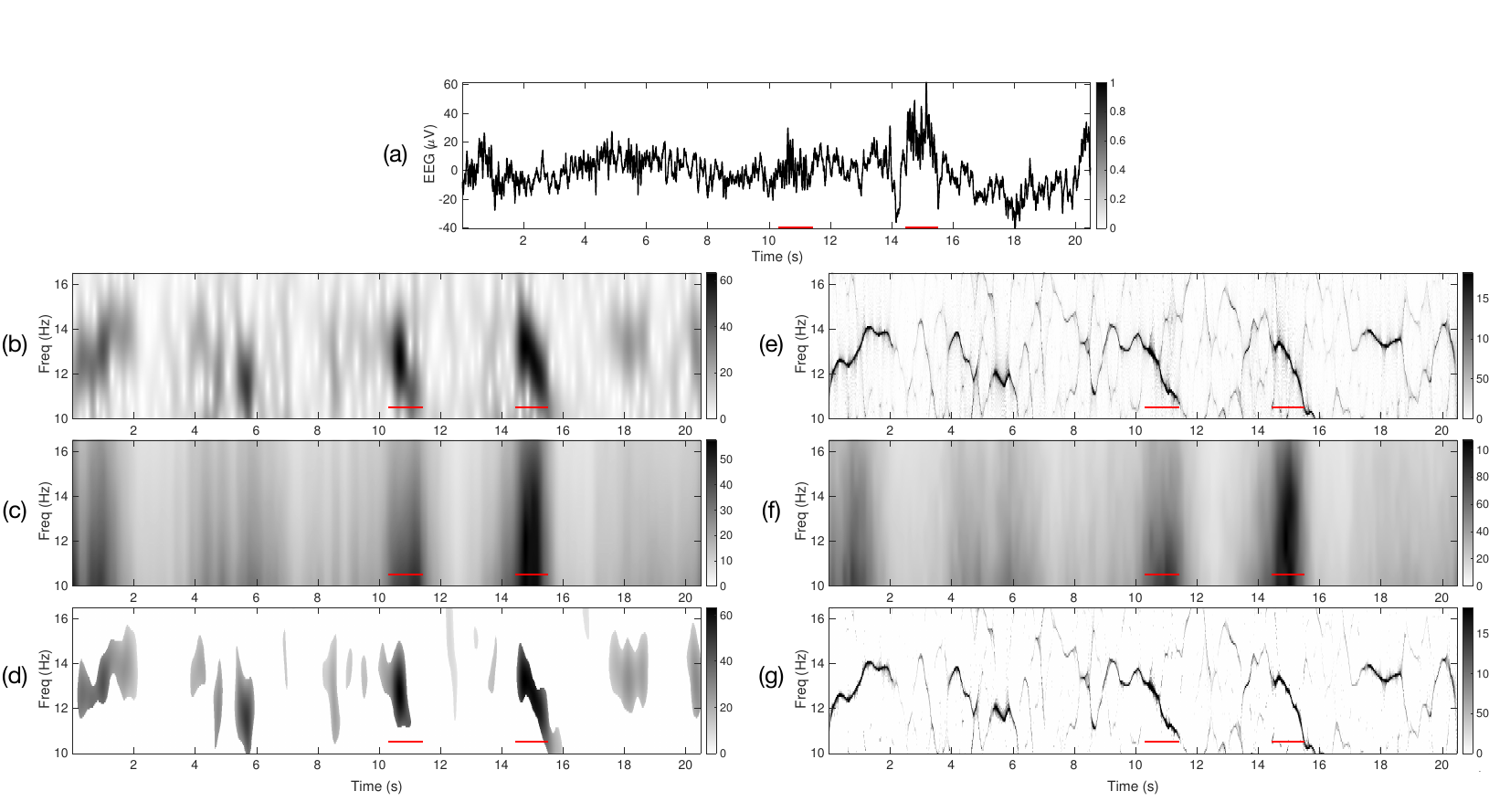}
\caption{\small An illustration of spindle analysis using the proposed bootstrapping algorithm. (a) the EEG signal recorded during the N2 sleep stage with the experts' spindle label marked in red. (b)-(d): the spectrogram, the threshold determined by the 95\% percentile of noise distribution determined by the bootstrap algorithm, and the thresholded TFR. (e)-(g): the synchrosqueezing transform, the threshold determined by the 95\% percentile of noise distribution determined by the bootstrap algorithm, and the thresholded TFR. \label{Figure: Spindle figure}}  
\end{figure}

Sleep spindles are brief bursts of activity in the sigma frequency range (around 11-16 Hz) of EEG, lasting 0.5 to 2 seconds \cite{iber2007aasm}. They are characteristics of the N2 stage of non-rapid eye movement sleep, marking the transition between light and deep sleep. Understanding sleep spindles is key to decoding sleep architecture, memory consolidation, and cognitive functions, and can shed light on healthy sleep patterns and neurological disorders \cite{weiner2016spindle}. Notably, spindle frequencies vary and are characterized by IF. The absence of spindle deceleration is linked to disorders like sleep apnea \cite{carvalho2014loss} and autism in children \cite{tessier2015intelligence}. Inter-expert agreement on spindle identification is limited, but reliability can be improved using qualitative confidence scores \cite{wendt2015inter}.  Estimating spindle duration is particularly challenging due to the thin-tail structure of spindles. A nonlinear multitaper method, a variation of SST called concentration of frequency and time (ConceFT), has been shown efficient in handling this issue \cite{shimizu2024unveil}. We conjecture that our proposed bootstrap algorithm could help handle this challenge and provide quantitative confidence. 
We show how the proposed bootstrap algorithm works on a segment from the open-access DREAMS database\footnote{\url{https://zenodo.org/record/2650142}}. Figure \ref{Figure: Spindle figure} shows the spectrogram, SST, and threshold determined by the 95th percentile of the noise distribution via the bootstrap algorithm. The red bars indicate spindles identified by experts. In the spectrogram, spindle dynamics, especially the IF, are difficult to discern. However, the dynamics, particularly the IF, can be visualized in the SST as curves. Both labeled spindles show decreasing IFs. The thresholding further clarifies the TFR, reducing background noise and revealing clear IF traces. Although not labeled, there is likely a spindle in the first 2 seconds with increasing, nonlinear IF dynamics. Given the raw EEG signal's complexity, it is understandable why this spindle was missed in expert's label--it overlaps with a bump and does not resemble a typical spindle. We conjecture that the UQ could assist expert annotation and explore spindle dynamics, and this clinical topic will be explored in future work.

\section{Theoretical support}

\subsection{Gaussian approximation of discrete STFT of general nonstationary noise}\label{section GA of STFT on NSN}

The distribution of discrete STFT of a locally stationary random process has been studied \cite{yang2022spectral} when we consider a fixed number of time-frequency pairs. However, when the input is a more general nonstationary time series, like the nonstationary random process satisfying the NSN model, or when the number of time-frequency pairs is growing as $n\to \infty$, it is still an open problem. 
In this section, we extend the results of \cite{yang2022spectral}  to the NSN model with diverging number of time-frequency points. 
The main technical tool we rely on is the recently developed high-dimensional Gaussian approximation \cite{LiuYangZhou2024} for HDNS time series that generalizes earlier results from \cite{Mies2024}.

First, we extend the sequential Gaussian approximation to the HDNS model improving the results in \cite[Theorem 2.2]{Mies2024}. 
The proof is postponed to Section \ref{subsection: proof of sequential ga our version}.

\begin{thm}\label{thm:OurSequentialGaussianApproximationTheorem}
Suppose the HDNS time series $\{z_i\}_{i=1}^n$ satisfy Assumption \ref{model assumption 4}  with $A > \sqrt{\chi} + 1$. Further, assume that $\|z_i\|_p\leq B_d$ for some $B_d>0$. On a sufficiently rich probability space, there exist $\{\hat z_i\}_{i=1}^n$ so that $\hat z_i \stackrel{\mathcal{D}}{=} z_i$, and Gaussian random vectors $y_i$ such that $y_i\sim \mathcal{N}(0,\texttt{cov}(z_i))$, such that 
\begin{equation}
 \left\|\max_{k=1,\ldots,n}\left|\frac{1}{\sqrt{n}}\sum_{j=1}^k(\hat{z}_j-y_j)\right|\right\|_2 =O\left(\left(\frac{d^{\frac{3}{4}-\frac{1}{2s}-\frac{1}{p}}}{n^{\frac{1}{4}-\frac{1}{2p}-\frac{1}{2s}+\frac{1}{ps}}}\right)^{\frac{1}{1-\frac{1}{s}-\frac{1}{p}}} \log(n)^2\right)\,,
\end{equation}
where the implied constant depends on $p$ and the dependence and moment of $z_i$.

\end{thm}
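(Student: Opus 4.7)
The plan is to combine a big-block / small-block decomposition in the spirit of \cite{csorgHo1975new} with the high-dimensional Gaussian approximation of \cite{LiuYangZhou2024} applied only at block endpoints, and then upgrade the coupling from the coarse block grid to every partial sum $k\le n$ via within-block maximal inequalities. First, I would partition $\{1,\ldots,n\}$ into consecutive blocks of size $L_n$, each further split into a ``big'' block of length $L_n-\ell_n$ followed by a ``small'' buffer of length $\ell_n$. The polynomial decay in Assumption \ref{model assumption 4} then allows truncation of the causal filter $\mathcal{G}_i$ at lag $\ell_n$, producing $\ell_n$-dependent surrogates whose $\mathcal{L}^p$ distance to $z_i$ is controlled by $\Theta_p(\ell_n)$; this forces the big-block sums built from the surrogates to be exactly independent across non-adjacent blocks, once the buffers are set aside.

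Next, I would apply the high-dimensional Gaussian approximation of \cite{LiuYangZhou2024} to the resulting sequence of block sums: on a sufficiently rich probability space this yields a coupling of $\{z_i\}$ with Gaussians $y_i\sim\mathcal{N}(0,\texttt{cov}(z_i))$ whose partial sums agree at the block endpoints up to a quantitative rate in $n$, $d$, $p$, and $L_n$. The contribution of the discarded buffers on the $z$-side is handled by a Burkholder/Rosenthal-type moment bound driven by $\theta_p(\cdot)$, and its Gaussian counterpart by a standard Gaussian maximal inequality, both entering at an order that can be made negligible compared to the main term once $\ell_n$ is chosen suitably smaller than $L_n$. To promote the endpoint coupling to $\max_{k=1,\ldots,n}$, I would bound, within each block, the maximum of the partial coupling error by the block-sum modulus plus an oscillation term controlled again by $\Theta_p$ summability, and then take a union bound over the $\lceil n/L_n\rceil$ blocks.

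The main obstacle will be the careful balancing of three competing error contributions: the truncation/buffer error, which improves as $\ell_n$ grows; the Gaussian approximation error from \cite{LiuYangZhou2024}, which improves with larger $L_n$ but deteriorates with the number $\lceil n/L_n\rceil$ of endpoints and with the dimension $d$; and the within-block oscillation, which grows with $L_n$. The hypothesis $A > \sqrt{\chi}+1$ is exactly what is needed to render the $(\log n)^{-A}$ tail in $\theta_p$ absorbable against the $\log(n)^2$ factor arising from the maximal-inequality step when it is iterated across blocks and dimensions. Optimizing $L_n$ and $\ell_n$ under this constraint should then reproduce the announced rate $\bigl(d^{\frac{3}{4}-\frac{1}{2s}-\frac{1}{p}}/n^{\frac{1}{4}-\frac{1}{2p}-\frac{1}{2s}+\frac{1}{ps}}\bigr)^{\frac{1}{1-\frac{1}{s}-\frac{1}{p}}}\log(n)^2$. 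Finally, the distributional identity $\hat z_i\stackrel{\mathcal{D}}{=} z_i$ is produced by gluing the block-wise couplings via a Berkes--Philipp / Strassen construction on a sufficiently rich Polish product space, lifting the $y_i$ from the block-wise Gaussian approximation.
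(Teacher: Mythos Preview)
Your blocking-plus-coupling outline is in the right spirit, but it diverges from the paper's argument in two substantive ways, and one of your justifications is incorrect.

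First, the paper does \emph{not} use a big-block/small-block scheme with buffers or an $\ell_n$-dependent truncation of the filter.  Instead it works with a single block length $L$ and invokes the martingale-embedding construction behind \cite[Theorem 3.1]{LiuYangZhou2024} to produce a \emph{joint} coupling for which every interval sum $\sum_{j=a}^b (Z_{j,L}-Y_{j,L})$ is controlled (this is the paper's Lemma~\ref{lem:sub}).  With that interval-sum bound in hand, the maximum over block endpoints is obtained not by a union bound but by the M\'oricz/Billingsley bisection induction (the paper's Theorem~\ref{prop:submax}), which is what produces the $\log_2(2n/L)$ factor.  Your truncation-plus-buffer route can in principle reach a similar destination---after truncation the big-block coupling errors are independent and mean zero, so Doob's $L^2$ maximal inequality would handle the endpoint maximum---but you then owe separate bounds for (i) the truncation remainder $\max_k|\sum_{j\le k}(z_j-z_j^{(\ell_n)})|/\sqrt n$ and (ii) the accumulated buffer sums, and you must verify that both can be driven below the announced rate after optimizing $(L_n,\ell_n)$.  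The paper's route sidesteps all of this bookkeeping.

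Second, your reading of the hypothesis $A>\sqrt{\chi}+1$ is off.  That condition is not used to absorb $\log(n)^{-A}$ tails against the $\log(n)^2$ coming from a maximal step; it is inherited verbatim from \cite[Theorem 3.2]{LiuYangZhou2024} and is precisely what is needed there for the exponent $s$ in the block-level rate $dL^{1/s-1/2}\log L$ to be well defined.  In the paper's proof the $\log(n)^2$ arises simply as the product of $\log L$ (from the block-level Gaussian approximation) and $\log_2(2n/L)$ (from the bisection), and no cancellation against $(\log k)^{-A}$ is invoked.  The within-block step you describe does match the paper: it bounds the intra-block oscillation in $\mathcal L^p$ via a Rosenthal-type inequality (Lemma~\ref{lemma: rosenthal ineq we need}) and pays an $M^{1/p}$ factor for the maximum over $M=n/L$ blocks; balancing this against $dL^{1/s-1/2}$ yields the stated exponent.
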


This result generalizes \cite[Theorem 3.1]{Mies2024}, offering an improved convergence rate. When $z_i$ has i.i.d. entries, $d$ is large, and $p,s\to\infty$, the bound simplifies to $d^{3/4}n^{-1/4}$, up to a logarithmic factor. This suggests that the bound asymptotically converges to $0$ when $d\asymp n^{1/3-\gamma}$ for any small $\gamma$. On the other hand, for time series with sufficiently short memory and light tail, the approximation bound established in \cite{Mies2024} asymptotically vanishes for $d$ as large as $O(n^{1/4-\gamma})$ for any small $\gamma$ and their bound converges at the rate $n^{-1/6}$ (within a logarithm factor) as the sample size increases. 
While it is possible that this bound could be further improved using alternative techniques, we find it sufficient for our current application and leave the investigation of sharper rates to future work.

To obtain a Gaussian approximation for the discrete STFT, we require the following lemma. When a nonstationary time series satisfying the NSN model is analyzed via the discrete STFT, it is transformed into a HDNS time series with $d>1$. The lemma shows that this transformed series preserves the moment and dependence conditions of the original process. The proof is deferred to Section~\ref{subsection: property of the rp for stft}.

\begin{lemma} \label{lemme: for theorem of STFT distribution like Gaussian}
Suppose $\{\epsilon_i\}_{i=1}^n$ satisfies the NSN model and Assumption  \ref{model assumption 4}, with the condition $A > \sqrt{\chi} + 1$ fulfilled.
Suppose $d=d(n)$ that diverges when $n\to\infty$ and
consider $0<\eta_1<\ldots<\eta_d$. Define a high dimensional time series
$\mathbf X_j:=[\epsilon_j \cos(2\pi \eta_1 j),\ldots,\epsilon_j \cos(2\pi \eta_d j),\,\epsilon_j \sin(2\pi \eta_1 j),\ldots,\epsilon_j \sin(2\pi \eta_d j)]^\top\in \mathbb{R}^{2d}$, 
where $j=1,\ldots,n$. Then, the time series $\mathbf X_j$ is HDNS satisfying Assumption \ref{model assumption 4}, with the condition $A > \sqrt{\chi} + 1$ fulfilled. 
\end{lemma}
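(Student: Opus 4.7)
The plan is to verify, in order, (i) that $\mathbf{X}_j$ can be written in the filtration form required by the HDNS model, (ii) that the entrywise uniform $p$-th moment bound in \eqref{entrywise moment bound of the rp} is inherited from $\epsilon_j$, and (iii) that the uniform functional dependence measure $\theta_p(k)$ of $\mathbf{X}_j$ is controlled by that of $\epsilon_j$ and hence satisfies \eqref{model assumption 4: polynomial decay} with the same $\chi$ and $A$. The basic observation that drives every step is that the trigonometric coefficients $\cos(2\pi\eta_k j)$ and $\sin(2\pi\eta_k j)$ are deterministic scalars bounded in absolute value by $1$, so multiplying the scalar NSN entry $\epsilon_j$ by them cannot enlarge the relevant quantities.

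For (i), since $\{\epsilon_i\}_{i=1}^n$ satisfies the NSN model, there exists a measurable function $\mathcal{G}_j\colon \mathbb{R}^\infty\to\mathbb{R}$ with $\epsilon_j=\mathcal{G}_j(\mathcal{F}_j)$. I would define the measurable map $\mathcal{H}_j=(\mathcal{H}_{j,1},\ldots,\mathcal{H}_{j,2d})\colon\mathbb{R}^\infty\to\mathbb{R}^{2d}$ by
\begin{equation*}
\mathcal{H}_{j,k}(\mathcal{F}_j):=\mathcal{G}_j(\mathcal{F}_j)\,c_{j,k},\qquad c_{j,k}:=\begin{cases}\cos(2\pi\eta_k j)&1\leq k\leq d,\\ \sin(2\pi\eta_{k-d}j)&d+1\leq k\leq 2d.\end{cases}
\end{equation*}
Then $\mathbf{X}_j=\mathcal{H}_j(\mathcal{F}_j)$, which is the HDNS form \eqref{noise model in the HDNS setup} in dimension $2d$.

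For (ii), since $|c_{j,k}|\leq 1$, one has $|\mathbf{X}_{j,k}|\leq |\epsilon_j|$ pointwise. Thus
\begin{equation*}
\sup_{1\leq j\leq n}\max_{1\leq k\leq 2d}\mathbb{E}[|\mathbf{X}_{j,k}|^p]\leq \sup_{1\leq j\leq n}\mathbb{E}[|\epsilon_j|^p]<B,
\end{equation*}
so the entrywise moment bound in \eqref{entrywise moment bound of the rp} is inherited with the same constant $B$. For (iii), with $\hat{\mathcal{F}}_{j,j-l}$ the coupled version, factoring out the deterministic $c_{j,k}$ gives
\begin{equation*}
\mathcal{H}_{j,k}(\mathcal{F}_j)-\mathcal{H}_{j,k}(\mathcal{F}_{j,j-l})=c_{j,k}\bigl(\mathcal{G}_j(\mathcal{F}_j)-\mathcal{G}_j(\mathcal{F}_{j,j-l})\bigr),
\end{equation*}
so $\|\mathbf{X}_{j,k}(\mathcal{F}_j)-\mathbf{X}_{j,k}(\mathcal{F}_{j,j-l})\|_p\leq \|\epsilon_j(\mathcal{F}_j)-\epsilon_j(\mathcal{F}_{j,j-l})\|_p$. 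Taking the supremum over $j$ and the maximum over $k$ as in \eqref{definition of dependence measure} yields $\theta_p^{\mathbf{X}}(l)\leq \theta_p^{\epsilon}(l)=O((l+1)^{-(\chi+1)}(\log(l+1))^{-A})$, with the same $\chi$ and $A$. This gives Assumption \ref{model assumption 4} for $\mathbf{X}_j$, including $A>\sqrt{\chi}+1$.

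There is no real obstacle here: the lemma is essentially a bookkeeping statement that deterministic bounded modulation by the trigonometric weights $c_{j,k}$ preserves both the moment and the coupling-based dependence bounds. The only point worth being careful about is that $\theta_p$ in the HDNS setting is a maximum over coordinates rather than an $\ell^r$ aggregate, so the inequality $|c_{j,k}|\leq 1$ gives the bound coordinate by coordinate without introducing any factor that depends on $d$; this is exactly what is needed to apply the Gaussian approximation of Theorem \ref{thm:OurSequentialGaussianApproximationTheorem} later.
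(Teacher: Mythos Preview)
Your proposal is correct and takes essentially the same approach as the paper: both define the HDNS filter by multiplying the scalar filter $\mathcal{G}_j$ by the deterministic trigonometric weights, then use $|c_{j,k}|\leq 1$ to transfer both the entrywise $p$-th moment bound and the coordinatewise functional dependence bound from $\epsilon_j$ to $\mathbf{X}_j$. Your write-up is in fact more explicit than the paper's, which simply asserts these inheritances in one line each.
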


With the above preparation, we are ready to state our main theorem about the approximate Gaussianity of discrete STFT if the input is a non-Gaussian and non-stationary random process. The proof is postponed to Section \ref{subsection proof of main theorem of stft}.

\begin{thm} \label{Theorem: STFT distribution like Gaussian} 
Suppose $\{\epsilon_i\}_{i=1}^n$ satisfies the NSN model, Assumption  \ref{model assumption 4}, and $A > \sqrt{\chi} + 1$.
Suppose $d=d(n)\to \infty$ when $n\to\infty$ and
consider $0<\eta_1<\ldots<\eta_d$. Denote a complex random vector associated with the discrete STFT  as
$\mathbf V_l:=[\mathbf V^{(\mathbf{h})}_\epsilon(l,\eta_1),\ldots,\mathbf V^{(\mathbf{h})}_\epsilon(l,\eta_d),\mathbf V^{(\mathbf{Dh})}_\epsilon(l,\eta_1),\ldots,\mathbf V^{(\mathbf{Dh})}_\epsilon(l,\eta_d)]^\top\in \mathbb{C}^{2d}$,
where $\mathbf{h}$ and $\mathbf{Dh}$ are defined in Section \ref{section background SST STFT} 
%\eqref{continuous STFT discretization h} and \eqref{continuous STFT discretization Dh} 
with $m:=\lceil \beta q\rceil$.
Suppose $\{\hat{\epsilon}_i\}_{i=1}^n$ is a Gaussian process defined on a potentially different probability space that shares the same covariance structure of $\{\epsilon_i\}_{i=1}^n$ and denote the associated discrete STFT as $\hat{\mathbf V}_l$. 
We have
\begin{align*}
\mathbb{E}\left(\max_l \Big|\frac{1}{\sqrt{n}}(\mathbf V_l-\hat{\mathbf V}_l)\Big|^2\right)
\leq {C'}\left(\frac{d^{\frac{3}{4}-\frac{1}{2s}-\frac{1}{p}}}{n^{\frac{1}{4}-\frac{1}{2p}-\frac{1}{2s}+\frac{1}{ps}}}\right)^{\frac{2}{1-\frac{1}{s}-\frac{1}{p}}} \log(n)^4 m^{-2}\,,\nonumber
\end{align*}
where $C'$ depends on $\mathsf{h}$ and the moment and dependence structures of $\epsilon_i$.
\end{thm}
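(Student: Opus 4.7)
The plan is to reduce the assertion to the sequential Gaussian approximation of Theorem~\ref{thm:OurSequentialGaussianApproximationTheorem} by lifting the scalar noise $\{\epsilon_i\}$ to a modulated HDNS, and then transferring the resulting coupling through the linear STFT map. First I would apply Lemma~\ref{lemme: for theorem of STFT distribution like Gaussian} to the time series
\[
\mathbf X_j=[\epsilon_j\cos(2\pi\eta_1 j),\ldots,\epsilon_j\cos(2\pi\eta_d j),\,\epsilon_j\sin(2\pi\eta_1 j),\ldots,\epsilon_j\sin(2\pi\eta_d j)]^\top\in\mathbb{R}^{2d},
\]
which inherits the required moment and dependence bounds, and then invoke Theorem~\ref{thm:OurSequentialGaussianApproximationTheorem} to obtain a coupling $(\hat{\mathbf X}_j,y_j)$ with $\hat{\mathbf X}_j\stackrel{\mathcal D}{=}\mathbf X_j$, $y_j\sim\mathcal N(0,\operatorname{cov}(\mathbf X_j))$, and $L^2$-control $M_n:=\|\max_{k\le n}|\tfrac{1}{\sqrt n}\sum_{j=1}^k(\hat{\mathbf X}_j-y_j)|\|_2$ matching the stated rate.

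Because $\operatorname{cov}(\mathbf X_j)$ agrees with the covariance of the lift of any Gaussian process sharing $\operatorname{cov}(\epsilon)$, the Gaussian surrogate $\hat\epsilon$ in the theorem statement can be realized on the enlarged probability space so that $y_j$ is precisely its modulated lift $[\hat\epsilon_j\cos(2\pi\eta_k j),\hat\epsilon_j\sin(2\pi\eta_k j)]_k$. Under this identification the difference $\mathbf V_l-\hat{\mathbf V}_l$ becomes a deterministic windowed linear functional of $\Delta\mathbf X_j:=\hat{\mathbf X}_j-y_j$: using the factorization
\[
\mathbf V^{(\mathbf h)}_{\epsilon}(l,\eta_k)-\mathbf V^{(\mathbf h)}_{\hat\epsilon}(l,\eta_k)=e^{\ii 2\pi\eta_k l}\sum_{j=l-m}^{l+m}\mathbf h(j-l)\bigl(\Delta\mathbf X_{j,k}-\ii\Delta\mathbf X_{j,d+k}\bigr),
\]
and the analogous identity with $\mathbf{Dh}$, I would apply Abel summation to each weighted sum. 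The smoothness and normalization $\mathbf h(k)=q^{-1/2}\mathsf h(-\beta+(k-1)/q)$ together with the Schwartz decay of $\mathsf h$ and $\mathsf h'$ give $\|\mathbf h\|_\infty+\operatorname{TV}(\mathbf h)=O(m^{-1/2})$, and a matching bound for $\mathbf{Dh}$. Coupled with the fact that each windowed partial sum $S_j^{(l)}:=\sum_{i=l-m}^j\Delta\mathbf X_i$ is uniformly dominated in $l$ by $2\sqrt n$ times $\max_k|\tfrac{1}{\sqrt n}\sum_{i=1}^k\Delta\mathbf X_i|$, an iterated Abel step exploiting the higher-order smoothness of $\mathsf h$ should yield the weight $m^{-1}$ on this supremum; squaring and summing the $2d$ coordinate contributions then reduces the bound to $\mathbb E[M_n^2]$, to which Theorem~\ref{thm:OurSequentialGaussianApproximationTheorem} supplies the desired rate (with $\log(n)^4$ arising automatically from squaring $\log(n)^2$).

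The main obstacle is controlling the maximum over the $n$ window centers $l$ and the $2d$ coordinates simultaneously without losing an extraneous factor of $d$. A naive exchange $\max_l\sum_k\le\sum_k\max_l$ costs a factor of $d$ that would overwhelm the stated rate; instead, I would keep the Euclidean-norm partial sum $|S_k|$ intact throughout the Abel argument so that the sup over $l$ and the sum over the $2d$ coordinates both collapse into a single invocation of the $L^2$ bound supplied by Theorem~\ref{thm:OurSequentialGaussianApproximationTheorem}. A secondary subtlety is verifying that the Gaussian surrogate $\hat\epsilon$ can indeed be realized with the prescribed joint covariance so that its STFT equals the $y_j$-induced STFT; this uses that matching $\operatorname{cov}(\mathbf X_j)$ for every $j$ determines the covariance of $\hat\epsilon$ modulo the harmless degeneracy on the modulated components.
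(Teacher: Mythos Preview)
Your strategy matches the paper's: lift $\epsilon_j$ to the $2d$-dimensional modulated series $\mathbf X_j$ via Lemma~\ref{lemme: for theorem of STFT distribution like Gaussian}, couple by Theorem~\ref{thm:OurSequentialGaussianApproximationTheorem}, then write $\mathbf V_l-\hat{\mathbf V}_l$ through a single summation by parts so that only partial sums $\mathbf S_j-\hat{\mathbf S}_j$ of $\hat{\mathbf X}_j-y_j$ remain, weighted by the window increments. Keeping the Euclidean norm of the $2d$-vector intact throughout is exactly how the paper avoids the extraneous factor of $d$, and your remark on realizing $\hat\epsilon$ so that its modulated lift coincides with $y_j$ is the right identification (the paper simply assumes this).

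Where your sketch drifts is in extracting the $m^{-2}$ weight. Your computation $\mathrm{TV}(\mathbf h)=O(m^{-1/2})$ is correct, but squaring it yields only $m^{-1}$, and the proposed iterated Abel step does not recover the missing factor: a second summation by parts replaces $\Delta\mathbf S_j:=\mathbf S_j-\hat{\mathbf S}_j$ by its running sum $\sum_{i\le j}\Delta\mathbf S_i$, whose size is $O(m\max_i|\Delta\mathbf S_i|)$, which exactly cancels the gain from passing to second-order increments of $\mathbf h$. The paper stays with \emph{one} Abel step and instead exploits the $\ell^2$ (not $\ell^1$) norm of the increments: since $\mathbf h(j)-\mathbf h(j-1)\approx q^{-3/2}\mathsf h'(\cdot)$, a Riemann-sum computation gives
\[
\sum_{j}|\mathbf h(j)-\mathbf h(j-1)|^2+|\mathbf{Dh}(j)-\mathbf{Dh}(j-1)|^2=\Theta(m^{-2}),
\]
and the bound is then obtained by pulling $\max_j\bigl|\tfrac{1}{\sqrt n}(\mathbf S_j-\hat{\mathbf S}_j)\bigr|^2$ out of $\bigl|\sum_j\Delta\mathbf S_j\,\Delta\mathbf h_j\bigr|^2$ against this $\ell^2$-weight. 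Replace the iterated-Abel idea with this direct $\ell^2$ increment estimate and the rest of your outline goes through.
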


When $p,s\to\infty$, the bound becomes $d^{3/2}n^{-1/2}m^{-2}$ up to a logarithmic factor. Thus, depending on the window size $m$, the number of frequencies we can control is up to the order of $n^{1/3}m^{4/3}$ up to a logarithmic factor. Also, by construction, the variance of each entry of $\hat{\mathbf{V}}_l$ is of order $1$, and the quantity $\frac{d^{\frac{3}{4}-\frac{1}{2s}-\frac{1}{p}}}{n^{\frac{1}{4}-\frac{1}{2p}-\frac{1}{2s}+\frac{1}{ps}}}$ decays polynomially to zero. We conclude that when $m\asymp n^{1/2}$, $\mathbb{E}\left(\max_l \Big|\mathbf V_l-\hat{\mathbf V}_l\Big|\right)\leq {C'}\left(\frac{d^{\frac{3}{4}-\frac{1}{2s}-\frac{1}{p}}}{n^{\frac{1}{4}-\frac{1}{2p}-\frac{1}{2s}+\frac{1}{ps}}}\right)^{\frac{1}{1-\frac{1}{s}-\frac{1}{p}}} \log(n)^2$, which means the discrete STFT of a NSN time series can be well approximated by its Gaussian companion under mild conditions.

\subsection{Robustness of reconstruction formula}\label{section:reconstruction formula proof}

We impose the following assumption on the window function.

\begin{assumption}\label{assumption window function main theorem}
Let $\mathsf{h}_0$ be a nonnegative and symmetric Schwartz function compactly supported on $[-1,1]$ and normalized to have unit $L^2$ norm.
Assume the Fourier transform of $\mathsf{h}_0$, denoted as $\widehat{\mathsf{h}}_0$, satisfies $|\widehat{\mathsf{h}}_0(\eta)|>\delta_1$ on $[-\Delta_0,\Delta_0]$, where $\delta_1>0$ and $\Delta_0>1$, and $\int_{\Delta_0}^\infty |\widehat{\mathsf{h}}_0(\eta)| d\eta\leq \delta_2$, where $\delta_2>0$ is a small constant. 
Take $\beta>0$ so that $\Delta:=\Delta_0/\beta<\Xi_s/2$ and define $\mathsf{h}(t):=\mathsf{h}_0(t/\beta)/\sqrt{\beta}$.  
\end{assumption}

These seeming complicated assumptions have clear interpretation and can be easily fulfilled in practice. While the conditions in this assumption can be relaxed (e.g. allowing non-compact or lower regularity), doing so would would only complicate notation and proofs without additional insight. To simplify the proof, we retain these assumptions. 
The parameters $\delta_1$ and $\delta_2$  jointly describe the shape of the window function $\mathsf{h}$. By the Schwartz condition, we can choose $\Delta_0$ to be of order one so that both $\delta_1$ and $\delta_2$ are small. As shown below, these parameters quantify the reconstruction accuracy. Recall that for any $\delta\in (0,1)$ and $C>0$, there exists a real, nonnegative, symmetric Schwartz function $h$ supported on $[-1,1]$ such that $C_1e^{-(1+\epsilon)C\xi^\delta}\leq \hat{h}(\xi)\leq C_2e^{-(1-\epsilon)C\xi^\delta}$ for any small $\epsilon>0$ and some $C_1,C_2>0$ \cite{tlas2022bump}. For such a kernel, we may choose $\delta_1= C_1e^{-(1+\epsilon)C\Delta_0^\delta}$ and $\delta_2=\frac{C_2}{(1-\epsilon)C\delta}e^{-(1-\epsilon)C\Delta_0^\delta}$.
The parameter $\beta$ controls the spectral concentration of $\mathsf{h}$, mitigating spectral interference, which is important since the signal under consideration is real. This reflexes the common practice of using wider windows to analyze lower-frequency components. Note that $\beta$ scales with $\Delta_0$: achieving smaller $\delta_1$ and $\delta_2$ requires a larger $\Delta_0$, and hence typically a larger $\beta$. 
In practice, $\Delta_r$ in \eqref{alogithm:sst:reconstruction continuous formula} is much smaller than $\Delta$, since the TFR concentration is sharpened by the SST.

The robustness result of the reconstruction formula is stated in following theorem, and the proof is postponed to Section \ref{subsection proof of sst reconstruction robustness}.

\begin{thm}\label{section:theorem:stability}
Suppose $Y_i$ follows \eqref{eq:model} and \eqref{discretization of AHM function sqrt n Hz}, $f$ satisfies the AHM and Assumption \eqref{model assumption 1}, and $\epsilon_i$  satisfies the NSN model, Assumption  \ref{model assumption 4}, and $A > \sqrt{\chi} + 1$. Set $m=\lceil \beta \sqrt{n}\rceil$, $d= n^{1/3-\gamma}$, and $\sigma=\sigma(n)=n^{1/4-\gamma'}$, where $\gamma,\gamma'>0$ are small constants. Set the sampled frequencies as $\eta_k=\frac{k\Xi}{d}$, where $k=1,\ldots,d$. Suppose the window function $\mathsf{h}$ satisfies Assumption \ref{assumption window function main theorem} and $\Delta$ satisfies $\frac{2(E_f'+\Xi E_f)}{\Xi_a\sqrt{\beta}\delta_1}\leq 1/2$, where $E_f$ and $E'_f$ are constants defined in Lemma \ref{Lemma: SST on deterministic function AHM}. For SST, set $\alpha:=(\Delta_r/C_\alpha)^2$ for some $C_\alpha>1$, and $\nu :=\Xi_a\sqrt{\beta}\delta_1/2+\zeta_n$,
where $\zeta_n\asymp n^{-\gamma'}\sqrt{\log n}$. Then, when $\varepsilon\geq 0$ is sufficiently small and $n$ is sufficiently large, with probability greater than $1-n^{-2}$, the SST reconstruction formula with the threshold $\nu/2$ satisfies
\begin{align}
\max_{l=m+1\ldots,n-m}\left|\widetilde{f}_k^{\mathbb{C}}(t_l)-A_k(t_l)e^{\ii2\pi\phi_k(t_l)}\right|\leq  C_1\Xi_A+C_2\zeta_n\,,
\end{align}
where $C_1>0$ is a small constant comprises of terms linearly depending on $\delta_1,\delta_2,\varepsilon$ and $\texttt{erfc}(C_\alpha)$ respectively, $\texttt{erfc}(\cdot)$ is the complementary error function, and $C_2>0$, which might not be small. Details of $C_1$ and $C_2$ can be found in \eqref{proof SST recon noisy setup final bound er definition}.
\end{thm}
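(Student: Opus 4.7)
The plan is to decompose the reconstruction error into a pure-signal error and a stochastic noise error and to bound each term uniformly in $l$. Write $\widetilde f^{\mathbb C}(t_l) - A(t_l)e^{\ii 2\pi\phi(t_l)}$ as the sum of (i) $\widetilde f^{\mathbb C}_f(t_l)-A(t_l)e^{\ii 2\pi\phi(t_l)}$, where $\widetilde f^{\mathbb C}_f$ denotes the same discrete SST reconstruction fed with the clean signal $f$ alone, and (ii) the perturbation $\widetilde f^{\mathbb C}(t_l)-\widetilde f^{\mathbb C}_f(t_l)$ arising from the noise. Part (i) will produce the $C_1\Xi_A$ contribution and part (ii) the $C_2\zeta_n$ contribution.

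For part (i), by \eqref{STFT relationship between discrete and continuous}, \eqref{eq: SST} and \eqref{alogithm:sst:reconstruction}, $\widetilde f^{\mathbb C}_f(t_l)$ is a Riemann-sum discretization of the continuous reconstruction $\check f^{\mathbb C}(t_l)$ in \eqref{alogithm:sst:reconstruction continuous formula}. I would first invoke Lemma \ref{Lemma: SST on deterministic function AHM} to obtain $\check f^{\mathbb C}(t_l)=A(t_l)e^{\ii 2\pi\phi(t_l)}$ up to error terms linear in $\delta_1$, $\delta_2$, $\varepsilon$ and $\texttt{erfc}(C_\alpha)$ multiplied by $\Xi_A$; analytically these come respectively from the lower bound on $|\widehat{\mathsf h}_0|$ on the ridge, from spectral leakage of the negative-frequency conjugate copy, from the AHM slow-variation deviations, and from the tail of $g_\alpha$ outside $R_l$. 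The discretization remainders (approximating $V^{(\mathsf h)}_f$ by $V^{(\mathbf h)}_f$, replacing the inner $\eta$-integral by a $d$-node Riemann sum, and replacing the outer $\xi$-integral by a Riemann sum on $R_l$) are controlled by elementary estimates using the Schwartz decay of $\mathsf h$ together with $C^1$ smoothness of $A$ and $\phi'$, and are absorbed into $C_1\Xi_A$ thanks to $d=n^{1/3-\gamma}$ and $m\asymp\sqrt n$.

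For part (ii), the crucial input is Theorem \ref{Theorem: STFT distribution like Gaussian} applied to the $d$ frequencies $\eta_k=k\Xi/d$. Enlarging the probability space, I couple $\{V^{(\mathbf h)}_\epsilon(t_l,\eta_k),V^{(\mathbf{Dh})}_\epsilon(t_l,\eta_k)\}$ to a complex Gaussian random field of the same covariance, with approximation error that decays polynomially in $n$ in $L^2$; a Markov step converts this to a high-probability event. A standard maximal inequality for centered complex Gaussians with bounded variance then yields $\max_{l,k}(|V^{(\mathbf h)}_\epsilon|+|V^{(\mathbf{Dh})}_\epsilon|)=O(\sqrt{\log n})$ on that event. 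Combined with $\sigma=n^{1/4-\gamma'}$ and the $1/\sqrt q=n^{-1/4}$ normalization from \eqref{STFT relationship between discrete and continuous}, this gives $\max_{l,k}\sigma(|V^{(\mathbf h)}_\epsilon(t_l,\eta_k)|+|V^{(\mathbf{Dh})}_\epsilon(t_l,\eta_k)|)=O(n^{-\gamma'}\sqrt{\log n})=O(\zeta_n)$, matching the choice of $\zeta_n$ in the statement. Tuning the constants in the maximal inequality produces the $1-n^{-2}$ probability level.

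On this event, the threshold $\nu/2$ is calibrated so that $\{|V^{(\mathbf h)}_Y(t_l,\cdot)|>\nu/2\}$ coincides, uniformly in $l$, with a $\Delta$-neighborhood of the ridge $\phi'(t_l)$ where $|V^{(\mathbf h)}_f|\gtrsim \Xi_a\sqrt\beta\delta_1$ (from Assumption \ref{assumption window function main theorem} and the AHM linearization of $\phi$). On this active set, I expand the ratio $V^{(\mathbf{Dh})}_Y/V^{(\mathbf h)}_Y$ around $V^{(\mathbf{Dh})}_f/V^{(\mathbf h)}_f$ in the noise, producing an $O(\zeta_n)$ perturbation of the reassignment rule \eqref{eq: STFT-based reassignment rule}; Lipschitz continuity of $g_\alpha$ then gives an $O(\zeta_n)$ perturbation of $S^{(\mathbf h)}_Y(t_l,\xi_k)$ and, after the outer sum over the $O(\Delta_r d/\Xi)$ nodes of $R_l$, an $O(\zeta_n)$ perturbation of $\widetilde f^{\mathbb C}(t_l)$. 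The main obstacle will be the discontinuous indicator $\mathbf 1\{|V^{(\mathbf h)}_Y|>\nu/2\}$: near the boundary of the active set a tiny noise perturbation can flip the reassignment decision. I would handle this by showing that the boundary set has frequency measure $O(\zeta_n)$, using the Schwartz decay of $\widehat{\mathsf h}_0$ off the ridge to make $|V^{(\mathbf h)}_f|$ transverse to the threshold, and noting that $g_\alpha$ decays like $e^{-C_\alpha^2}$ outside $R_l$, so that such boundary frequencies contribute negligibly to the reconstruction integral. A union bound over $l=m+1,\ldots,n-m$ then completes the argument.
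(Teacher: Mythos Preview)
Your overall strategy is correct and uses the same key ingredients as the paper: the uniform sup-norm bound $\max_{l,k}\sigma|V^{(\mathbf h)}_{\epsilon}|\le\zeta_n$ with high probability via the Gaussian approximation (the paper packages this as Lemma \ref{Lemma: STFT on HDNS noise control size}), the stability of the reassignment rule under that perturbation, the Lipschitz control of $g_\alpha$, and the observation that the threshold boundary has frequency measure $O(\zeta_n)$. The paper, however, is organized differently: rather than splitting into a clean reconstruction $\widetilde f^{\mathbb C}_f$ plus a noise perturbation, it works with $Y$ throughout and begins by \emph{exchanging the order of the two sums}, writing
\[
\widetilde f^{\mathbb C}(t_l)=\frac{1}{\mathsf h(0)}\frac{\Xi}{d}\sum_k V^{(\mathbf h)}_Y(t_l,\eta_k)\Bigl[\frac{\Xi}{d}\sum_{\xi_j\in R_l}g_\alpha\bigl(\xi_j-O^{(\nu/2)}_Y(t_l,\eta_k)\bigr)\Bigr].
\]
Lemmas \ref{Lemma theorem:stability g_alpha integration clean case}--\ref{Lemma theorem:stability g_alpha integration} show the inner bracket equals $1+e_{g,l,k}$ with $|e_{g,l,k}|$ small on the active set, which reduces SST reconstruction to STFT reconstruction in a single pass; the $E_r$ estimate of Lemma \ref{Lemma: SST on deterministic function AHM} then finishes. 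Your route reaches the same place but carries the nonlinearity through both a clean-signal analysis and a separate perturbation analysis; the paper's order exchange collapses these into one step.

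One imprecision worth fixing: the active set $\{|V^{(\mathbf h)}_Y|>\nu/2\}$ does not ``coincide'' with the $\Delta$-neighborhood $B_l$ of the ridge, it only contains it. The paper explicitly decomposes the outer sum over $k\in B_l$, $k\in Q_l\setminus B_l$ (where $Q_l=\{|V^{(\mathbf h)}_{\mathbf f}|>\nu_0\}$), and $k\notin Q_l$. The intermediate region $Q_l\setminus B_l$ has measure $O(\Delta)$ and contributes a term of order $\Delta\nu_0\asymp\Delta\sqrt\beta\,\delta_1$, which is precisely where the $\delta_1$-dependence of $C_1$ comes from. Your sketch would need this three-region decomposition to make the constants explicit.
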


The maximum is taken over indices $m+1$ to $n-m$ to mitigate boundary effects, since only partial window support is available near the boundaries and reconstruction quality deteriorates there. Treating boundary effects is a separate problem (e.g., \cite{meynard2021efficient}); in many settings they are asymptotically negligible. We therefore omit them to keep the focus of the paper.

The noise $\sigma \epsilon_i$ with $\sigma=\sigma(n)=n^{1/4-\gamma'}$ may seem counterintuitive since its magnitude grows with $n$. As shown in the proof, the standard deviation of $V^{(\mathbf{h})}_{\epsilon}(t_l,\xi_k)$ is of order $n^{-1/4}$ up to a logarithmic factor. Thus, scaling by $\sigma$ normalizes the TFR so that its fluctuations are of order one as a random field.
This normalization is consistent with the continuous framework in \cite{chen2014non}, where the noise is modeled as a generalized random process $\Phi$ such that $\Phi(\mathsf{h})$ has standard deviation of order one. In this sense, our result extends the pointwise robustness analysis in \cite{chen2014non} to a uniform result in the discrete setting. When $\gamma'=1/4$, we recover the intuitive case in which the noise magnitude remains bounded.

The reconstruction error comprises two main contributions. $C_1\Xi_A$ comes from the AHM and window truncation in the reconstruction formula, through the definition of $R_l$. The second term, $C_2\zeta_n$, reflects the effect of  noise. Since $E_f$ and $E_f'$ in the constraint $\frac{2(E_f'+\Xi E_f)}{\Xi_a\sqrt{\beta}\delta_1}\leq 1/2$ depend linearly on $\Xi_A$, this constraint effectively imposes a lower bound on $\sqrt{\beta}\delta_1$, indicating that $\delta_1$ cannot be chosen arbitrarily small. Consequently, even in the purely harmonic case ($\varepsilon=0$), the term $C_1$ does not vanish. If we choose $\delta_2=C_{\mathsf{h}}\Delta \mathsf{h}(0)\beta\delta_1$, the error terms in $C_1$ can be simplified. In the absence of noise, that is, when $\epsilon_i=0$, we have $C_2\zeta_n=0$, and the theorem reduces to the noise-free setting. In this case, the error scales linearly with $\Xi_A$.

\subsection{Guarantees of the bootstrap algorithm}\label{section bootstrap theorem proof}
 we present a bootstrapping theorem under the locally stationary assumption. A process $Z_j$, $j\in\mathbb{Z}$ satisfies the {\em uniformly positive definite in covariance} (UPDC) condition if, for sufficiently large $n$, the smallest eigenvalue of the covariance matrix of $(Z_1,\ldots,Z_n)$ is bounded below by a constant $\kappa > 0$. Furthermore, if $\max_k |\mathrm{cov}(Z_k, Z_{k+r})| \leq r^{-\tau}$ for all $r\in\mathbb{N}$ and some $\tau \geq 0$, we call $\tau$ the {\em covariance decay speed}. Further background on covariance locally stationary processes, UPDC, and short-memory conditions can be found in \cite{ding2023autoregressive}. The proof is deferred to Section~\ref{subsection proof of bootstrapping for sst}. 

\begin{thm}\label{Bootstrapping main theorem}
Assume $\{\epsilon_i\}_{i=1}^n$ is locally stationary, the associated covariance function is smooth, the local spectral density is lower bounded and fulfills the UPDC condition, and its covariance decay speed is $\tau>2$. Grant conditions in Theorem \ref{Theorem: STFT distribution like Gaussian}.
Then there exists a probability space $(\Omega, \mathsf F, \mathbb P)$, where we could construct a Gaussian tvAR random process following Section \ref{section bootstrap algorithm}, $\{\epsilon^{(*)}_{i}\}_{i=1}^n$ from $\{\epsilon_i\}_{i=1}^n$.  
Take $d$ frequencies, $\mathcal{G}:=(\xi_1,\ldots,\xi_d)$, where $d=n^a$ and $a\geq 0$. Then, when 
$a=\min\left\{\gamma,\, \frac{\frac{1}{4}-\frac{1}{2p}-\frac{1}{2s}+\frac{1}{ps}}{\frac{5}{4}-\frac{1}{s}-\frac{3}{2p}}\right\}-\vartheta$, 
where $\vartheta>0$ is a small constant so that $a\geq0$, $\gamma=\frac{\tau-2}{\tau+1}\in(0, 1]$ and $b$ is chosen to fulfill $\frac{b}{\log(b)}\asymp n^{1/(\tau+1)}$, we have 
\begin{align*}
&\max_{l=1,\ldots,d}\max_{i=1,\ldots,n}\Big|S^{(\mathbf{h})}_\epsilon(t_i,\xi_l)-S^{(\mathbf{h})}_{\epsilon^{(*)}}(t_i,\xi_l)\Big|=o_p(1)\,.
\end{align*}
\end{thm}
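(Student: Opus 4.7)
The plan is to chain three approximations. First, invoke the tvAR approximation of \cite{ding2023autoregressive} together with the bootstrap construction in Section \ref{section bootstrap algorithm} to argue that, under the UPDC condition, smooth covariance, and polynomial covariance decay $\tau>2$, with $b/\log(b)\asymp n^{1/(\tau+1)}$, the covariance of the bootstrapped Gaussian tvAR process $\{\epsilon^{(*)}_i\}$ differs from that of $\{\epsilon_i\}$ uniformly at a rate that is polynomial in $n^{-\gamma}$ with $\gamma=(\tau-2)/(\tau+1)$. Second, apply Theorem \ref{Theorem: STFT distribution like Gaussian} (with $m=\lceil\beta\sqrt{n}\rceil$) to couple the discrete STFT vector $\mathbf V_l$ of $\{\epsilon_i\}$ (stacking both the $\mathbf{h}$ and $\mathbf{Dh}$ windows at all $d$ frequencies in $\mathcal G$) with a Gaussian random field $\hat{\mathbf V}_l$ sharing the same covariance, so that $\max_{l,i}|\mathbf V^{(\mathbf h)}_\epsilon(t_i,\xi_l)-\hat{\mathbf V}^{(\mathbf h)}_\epsilon(t_i,\xi_l)|=o_p(1)$ and similarly for $\mathbf{Dh}$, provided $d=n^a$ lies inside the admissible window in Theorem \ref{Theorem: STFT distribution like Gaussian}.

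Third, compare $\hat{\mathbf V}^{(\mathbf h)}_\epsilon$ with $\mathbf V^{(\mathbf h)}_{\epsilon^{(*)}}$. Since both are centered complex Gaussian fields on a grid of size $O(nd)$, the problem reduces to controlling the difference of covariances derived from $\mathrm{cov}(\epsilon_i,\epsilon_j)$ and $\mathrm{cov}(\epsilon^{(*)}_i,\epsilon^{(*)}_j)$. The bilinear form defining the STFT then inherits a covariance perturbation of the same polynomial order as in step one, and a Gaussian coupling argument (write one field as the other plus an independent Gaussian correction with small covariance, then apply a standard Gaussian maximal inequality over the $O(nd)$ grid) yields $\max_{l,i}|\hat{\mathbf V}^{(\mathbf h)}_\epsilon(t_i,\xi_l)-\mathbf V^{(\mathbf h)}_{\epsilon^{(*)}}(t_i,\xi_l)|=o_p(1)$, and analogously for the $\mathbf{Dh}$-STFT. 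The choice $a=\min\{\gamma,(\tfrac14-\tfrac{1}{2p}-\tfrac{1}{2s}+\tfrac{1}{ps})/(\tfrac54-\tfrac{1}{s}-\tfrac{3}{2p})\}-\vartheta$ in the theorem is exactly the intersection of the two admissibility constraints: the first term controls the tvAR covariance budget, and the second term controls the Gaussian-approximation budget from Theorem \ref{Theorem: STFT distribution like Gaussian}.

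The main obstacle is propagating uniform STFT closeness through the nonlinearity of SST, and this is where I expect the most technical work. The reassignment rule $O^{(\nu)}$ contains the ratio $V^{(\mathbf{Dh})}/V^{(\mathbf h)}$, which is singular where $|V^{(\mathbf h)}|$ is small. On the set $\{|V^{(\mathbf h)}_\epsilon(t_l,\eta_k)|\leq\nu\}$, both contributions in the SST sum vanish by definition (up to a symmetric-difference set which we show has vanishing measure in the grid using the uniform STFT coupling). On its complement, linearize
\[
\frac{a+\delta a}{b+\delta b}-\frac{a}{b}=\frac{\delta a}{b}-\frac{a\,\delta b}{b(b+\delta b)},
\]
and use the uniform lower bound $|V^{(\mathbf h)}_\epsilon|>\nu$ and the uniform bound on $|V^{(\mathbf{Dh})}_\epsilon|$ (of order $\sqrt{\log n}$ by Gaussian maximal inequalities applied to $\hat{\mathbf V}$) to conclude that the reassignment differences are $o_p(1)$ uniformly over the grid. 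Finally, since $g_\alpha$ is globally Lipschitz with constant depending only on $\alpha$, the differences in $g_\alpha(\xi_j-O^{(\nu)}(\cdot))$ are also $o_p(1)$ uniformly, and combining this with the uniform closeness of $V^{(\mathbf h)}$ and the Riemann-sum structure $\frac{\Xi}{d}\sum_k$ (which averages and does not accumulate) yields $\max_{l,i}|S^{(\mathbf h)}_\epsilon(t_i,\xi_l)-S^{(\mathbf h)}_{\epsilon^{(*)}}(t_i,\xi_l)|=o_p(1)$, as claimed.
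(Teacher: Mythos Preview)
Your overall chain-of-approximations plan (tvAR approximation $\to$ Gaussian approximation of the STFT $\to$ compare the two Gaussian STFTs) matches the paper's structure, and your identification of the two constraints that together produce the exponent $a$ is correct. However, your handling of the SST nonlinearity differs from the paper's and has a genuine gap.

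You propose to control the reassignment difference via linearization of $V^{(\mathbf{Dh})}/V^{(\mathbf h)}$ on the set $\{|V^{(\mathbf h)}_\epsilon|>\nu\}$, using $\nu$ as a uniform lower bound on the denominator, and then to invoke the Lipschitz constant of $g_\alpha$. The difficulty is that the input here is \emph{pure noise}: in the discretized STFT normalization used by the SST, $|V^{(\mathbf h)}_\epsilon(t_i,\eta_k)|$ is of order $n^{-1/4}\sqrt{\log n}$, not $\sqrt{\log n}$ as you write (the $1/\sqrt{q}=n^{-1/4}$ factor matters). If $\nu>0$ is a fixed constant, then the set $\{|V^{(\mathbf h)}_\epsilon|>\nu\}$ is eventually empty and both SSTs vanish, so the conclusion is vacuous; if instead $\nu$ is taken to scale like $n^{-1/4}$ so that the SST of noise is nontrivial, your linearization bound $|\delta a|/\nu$ is of order $n^{1/4-\gamma}$, which need not be $o(1)$. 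Moreover, on the symmetric-difference set one of the two $g_\alpha$ values is $0$ and the other is $O(1/\sqrt{\alpha})$; your claim that this set has ``vanishing measure in the grid'' would require a uniform-in-$(i,k)$ anti-concentration argument that you have not supplied and that is not obvious.

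The paper sidesteps all of this by \emph{not} trying to make the $g_\alpha$ difference small. It writes
\[
S^{(\mathbf h)}_\epsilon(t_i,\xi_l)-S^{(\mathbf h)}_{\epsilon^{(*)}}(t_i,\xi_l)=\tfrac{1}{d}\big[(\mathsf V_{\epsilon,i}-\mathsf V_{\epsilon^*,i})\cdot\mathbf g_i+\mathsf V_{\epsilon^*,i}\cdot(\mathbf g_i-\tilde{\mathbf g}_i)\big],
\]
bounds $|\mathbf g_i|$ and $|\mathbf g_i-\tilde{\mathbf g}_i|$ \emph{trivially} by $C\sqrt{d/\alpha}$, and exploits that the STFT of pure noise is itself small: $|\mathsf V_{\epsilon^*,i}|=O_p(\sqrt{d\log d}\,n^{-1/4})$. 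The factor $n^{-1/4}$ is what makes the ``dangerous'' term $\tfrac{1}{d}|\mathsf V_{\epsilon^*,i}||\mathbf g_i-\tilde{\mathbf g}_i|$ vanish, not any Lipschitz control of the reassignment. This is the key idea you are missing. A secondary point: for the comparison between the Gaussian STFT $\hat{\mathsf V}$ and the bootstrap STFT $\tilde{\mathsf V}$, the paper uses the \emph{pathwise} $\ell^2$ coupling $\|\epsilon^{(*)}_i-\hat\epsilon_i\|_2=O(n^{-\gamma})$ from \cite{ding2023autoregressive} (so $\hat\epsilon-\epsilon^{(*)}$ is itself Gaussian with entrywise small variance and one applies a Gaussian tail bound to its STFT), not a covariance comparison. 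Your proposed coupling ``write one field as the other plus an independent Gaussian correction with small covariance'' is not valid in general unless one covariance matrix dominates the other.
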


When $\epsilon_i$ satisfies $p,s\to\infty$ and $\gamma$ is close to $1$, the exponent $a$ approaches $1/5$. This rate is slightly worse than the Gaussian approximation for STFT, as we used crude bounds to control the approximation error to simplify the proof. According to \cite{sourisseau2022asymptotic}, if we further assume that $\epsilon_i$ is stationary, $S^{(\mathbf{h})}_{\epsilon^{(*)}}(t_i,\xi_l)$ is complex Gaussian with nontrivial variation. Therefore, $\max\limits_{l=1,\ldots,d}\max\limits_{i=1,\ldots,n}\Big|S^{(\mathbf{h})}_\epsilon(t_i,\xi_l)\Big|$ is away from zero, and the resulting error control is meaningful. 

\bibliographystyle{plain}
\bibliography{reference.bib}

\clearpage

\setcounter{page}{1}
	\setcounter{equation}{0}
	\renewcommand{\thepage}{SI.\arabic{page}}
	\renewcommand{\thesection}{SI.\arabic{section}}
	\renewcommand{\theequation}{SI.\arabic{equation}}
	\renewcommand{\thelemma}{SI.\arabic{lemma}}
	\renewcommand{\theprop}{SI.\arabic{proposition}}
	\renewcommand{\thetable}{SI.\arabic{table}}
	\renewcommand{\thefigure}{SI.\arabic{figure}}

\section{Proofs}\label{Section proof supp}

\subsection{Proof of Theorem \ref{thm:OurSequentialGaussianApproximationTheorem}: Sequantial Gaussian approximation under HDNS}\label{subsection: proof of sequential ga our version}

The main technical tool we need is a recently developed high-dimensional Gaussian approximation \cite{LiuYangZhou2024} for HDNS time series that generalizes results in \cite{Mies2024}, which is based on the martingale embedding technique \cite{eldan2020clt}. The following theorem is a generalization of \cite[Theorem 2.1]{Mies2024}.

\begin{thm}\cite[Theorem 3.2]{LiuYangZhou2024} \label{LiuYangZhou2024RecentResult}
Suppose the $d$-dimensional HDNS time series $\{z_i\}_{i=1}^n$ satisfy Assumption  \ref{model assumption 4}  with $A > \sqrt{\chi} + 1$. 
Consider $Z_n:=\sum_{i=1}^nz_i\in \mathbb{R}^d$, and assume that the smallest eigenvalue of $\texttt{cov}(\frac{1}{\sqrt{n}}Z_n)$ is bounded below by some constant $\lambda_*>0$.
Then, on a sufficiently rich probability space,  one can construct $\hat Z_n$ such that $\hat Z_n \stackrel{\mathcal{D}}{=}Z_n$ and a Gaussian random vector $Y_n$ with the same mean and covariance matrix as $Z_n$. We have
\begin{equation}
\left\|\frac{1}{\sqrt{n}}(\hat Z_n- Y_n)\right\|_2 =O(dn^{1/s-1/2} \log(n)), 
\end{equation}
where $1/s:=\max\left\{\frac{1}{p},\,\frac{1}{\sqrt{\chi+1}p}+
\left(\frac{1}{2}-\frac{1}{\sqrt{\chi+1}p}\right)\max \left\{\frac{2}{\sqrt{\chi+1}p},\,
\frac{1}{\chi}\left(\frac{1}{2}-\frac{1}{p}\right)\right\}\right\}$.
\end{thm}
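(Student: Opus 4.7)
The statement asks for a \emph{one-shot} Gaussian coupling of the single endpoint sum $Z_n = \sum_{i=1}^n z_i$ in $L^2$-norm, not a maximal coupling of partial sums. The natural route, pioneered by Eldan and adapted to short-memory time series by Mies, is a martingale-embedding scheme: approximate $Z_n$ by a sum of martingale differences, truncate to get a bounded martingale, then embed the bounded martingale into a Brownian path via stochastic localization and read off its Gaussian endpoint.

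First I would build the martingale approximation. Relative to the causal filtration $\mathcal F_i = \sigma(e_j : j\le i)$, define
\[
D_i := \sum_{j\ge i}\bigl(\mathbb E[z_j\mid\mathcal F_i] - \mathbb E[z_j\mid \mathcal F_{i-1}]\bigr),
\]
so $(D_i)$ is a martingale-difference sequence in $\mathbb R^d$ and $\sum_{i=1}^n D_i$ agrees with $Z_n$ up to a remainder whose $L^2$ size is controlled coordinatewise by the cumulative dependence $\Theta_p(\cdot)$. Under the polynomial decay with $\chi>1$ and $A>\sqrt{\chi}+1$, this remainder already sits below the target rate. I then truncate and recenter: $D_i^T := D_i \mathbf{1}_{|D_i|_\infty \le T} - \mathbb E[D_i \mathbf{1}_{|D_i|_\infty \le T}\mid \mathcal F_{i-1}]$. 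The bias is bounded via Markov using $\|z_i\|_p\le B_d$, producing an $L^2$ contribution that scales like a positive power of $T^{-1}$ and will be balanced against $T$ at the end.

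Second, I would apply Eldan's martingale embedding to the bounded $\mathbb R^d$-valued martingale $\bigl(\sum_{i\le k} D_i^T\bigr)_{k\le n}$. On an enriched probability space this produces a Gaussian vector $Y_n$ whose covariance equals $\sum_i \mathrm{cov}(D_i^T)$ and a coupling bound
\[
\Bigl\|\tfrac{1}{\sqrt n}\bigl(\textstyle\sum_i D_i^T - Y_n\bigr)\Bigr\|_2 \;\le\; C\,\tfrac{d\,T\,\log n}{\sqrt n}.
\]
The linear dependence on $d$ (the key improvement over Stein/Lindeberg-type couplings) follows from sharp operator-norm control of the conditional covariance along the localization SDE. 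The spectral lower bound $\lambda_*>0$ on $\mathrm{cov}(Z_n/\sqrt n)$ then permits a small independent Gaussian correction that aligns $\mathrm{cov}(Y_n)$ with $\mathrm{cov}(Z_n)$ exactly, at negligible cost. Finally I balance three error sources---martingale approximation, truncation bias, and embedding---by choosing $T$ optimally. The interior critical point is $T \asymp n^{1/s - 1/2}$ up to logarithmic factors, and the nested-max formula for $1/s$ in the statement reflects which subregime binds: the $1/p$ branch when moment control dominates, the more intricate branch when dependence decay governs. This yields the overall rate $O(d\,n^{1/s-1/2}\log n)$.

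The hardest step is the embedding itself in the nonstationary HDNS setting. Eldan's original argument was tailored to i.i.d. or stationary martingale differences, whereas here one must run the stochastic localization with \emph{time-varying} conditional-covariance dynamics while preserving both the $O(d)$ dimension factor and only a $\log n$ cost in $n$. This requires a uniform-in-time spectral bound on the accumulated conditional covariance matrices, and it is precisely at this step that the assumption $A > \sqrt{\chi}+1$ is consumed: the extra logarithmic strengthening of the polynomial dependence decay is what makes the temporal fluctuations of the conditional covariance driven by the time-varying filter $\mathcal G_i$ summable enough not to erode the embedding rate.
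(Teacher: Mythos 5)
This is a cited result---the paper does not prove Theorem~\ref{LiuYangZhou2024RecentResult} but imports it verbatim from \cite[Theorem~3.2]{LiuYangZhou2024}, so there is no in-paper proof to compare against. The paper does tell you, however, that the underlying technique is Eldan's martingale embedding (\cite{eldan2020clt}), generalizing \cite{Mies2024}, and your sketch is consistent with that route: the projection martingale decomposition $D_i=\sum_{j\ge i}(\mathbb E[z_j\mid\mathcal F_i]-\mathbb E[z_j\mid\mathcal F_{i-1}])$, truncation to a bounded martingale, stochastic-localization embedding with a per-step operator-norm control that gives the $O(d)$ dimension factor, and finally balancing the truncation bias against the embedding cost. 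The spectral floor $\lambda_*$ on $\mathrm{cov}(Z_n/\sqrt n)$ is indeed the device that lets you patch the covariance of the Gaussian endpoint to match $\mathrm{cov}(Z_n)$ exactly without blowing up the $L^2$ error.

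Two small cautions. First, your statement that the optimal truncation level is $T\asymp n^{1/s-1/2}$ is inconsistent with the claimed embedding cost $dT\log n/\sqrt n$: matching that to the final rate $dn^{1/s-1/2}\log n$ forces $T\asymp n^{1/s}$, not $n^{1/s-1/2}$; you likely conflated the truncation level with the resulting residual. Second, you assert that the logarithmic strengthening $A>\sqrt\chi+1$ is ``consumed'' entirely at the embedding step. That is a plausible gloss but is not something the present paper certifies; in dependence-based martingale approximations such log conditions typically arise in summing $\Theta_p$-tails (or in the remainder of the projection identity), and the nested-max expression for $1/s$ reflects a case split coming from where the martingale approximation error versus the truncation bias dominates, which is not visible from the embedding alone. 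One further detail worth noting, which you did not mention: the paper explicitly flags (right after the theorem) that \cite{LiuYangZhou2024} state the result in $2$-Wasserstein distance, but the $L^2$ coupling is what their proof actually establishes; your sketch correctly produces a concrete $L^2$ coupling via the embedding, so it is aligned with the form used here.
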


Note that the authors used the 2-Wasserstein distance in the statement of \cite[Theorem 3.2]{LiuYangZhou2024}, while in the proof it is the $L^2$ distance that is proved. 
When $\chi$ is sufficiently large, the convergence rate becomes $dn^{1/p-1/2}$, which is further reduced to $dn^{-1/2}$ when $p$ is large. This allows us the nearly optimal convergence rate $d\asymp n^{1/2}$ \cite{portnoy1986central}.

We also need a generalized Rosenthal inequality for the HDNS model, as developed in \cite[Section 4]{liu2013probability}, which generalizes \cite[Theorem 1]{liu2013probability} and is of independent interest. This result can also be viewed as a different version of \cite[Theorem 3.2]{Mies2024}. 

\begin{lemma}[Rosenthal inequality]\label{lemma: rosenthal ineq we need}
Suppose the $d$-dimensional HDNS time series $\{z_i\}_{i=1}^n$ satisfy Assumption  \ref{model assumption 4}. Denote $S_i:=\sum_{l=1}^iz_l$. Then, for $2\leq p<\infty$, we have
\begin{align}
\left\|\max_{1\leq i\leq n}|S_i|\right\|_p\leq C\sqrt{dn}\,,
\end{align}
where $C=C_p\left[\Theta_p(1)+B^{1/p}\right]$ for some constant $C_p$ that depends only on $p$.
\end{lemma}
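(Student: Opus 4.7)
The statement is a vector-valued, entrywise-uniform extension of the scalar Rosenthal-type inequality for causal processes developed in Liu--Wu (\cite{liu2013probability}, Thm.~1). The strategy I would pursue is to first establish the coordinatewise inequality $\|\max_{1\le i\le n}|S_{i,j}|\|_p \le C_p[\Theta_p(1)+B^{1/p}]\sqrt n$ for each $j$, and then aggregate across the $d$ coordinates via Euclidean-norm monotonicity.

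\textbf{Step 1: coordinatewise reduction.} Since $|S_i|^2=\sum_{j=1}^d S_{i,j}^2$, I would use the bound
\begin{equation*}
\Bigl\|\max_{1\le i\le n}|S_i|\Bigr\|_p^2
=\Bigl\|\max_{1\le i\le n}|S_i|^2\Bigr\|_{p/2}
\le \sum_{j=1}^d \Bigl\|\max_{1\le i\le n}|S_{i,j}|^2\Bigr\|_{p/2}
=\sum_{j=1}^d \Bigl\|\max_{1\le i\le n}|S_{i,j}|\Bigr\|_p^2,
\end{equation*}
using $\max_i\sum_j(\cdot)\le\sum_j\max_i(\cdot)$ and the triangle inequality in $\mathcal L^{p/2}$. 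Hence once the scalar bound is established uniformly in $j$, summing gives the factor $\sqrt d$.

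\textbf{Step 2: scalar Rosenthal inequality via martingale decomposition.} Fix coordinate $j$ and drop it. Write $\mathcal P_k \cdot=\mathbb E[\cdot\mid\mathcal F_k]-\mathbb E[\cdot\mid\mathcal F_{k-1}]$, and decompose $z_l=\sum_{k\le l}\mathcal P_k z_l$, which is valid because $z_l$ is measurable with respect to $\mathcal F_l$ (under integrability, guaranteed by Assumption~\ref{model assumption 4}). Interchanging summations,
\begin{equation*}
S_i=\sum_{l=1}^i z_l=\sum_{k=-\infty}^{i} M_{k,i}, \qquad M_{k,i}:=\sum_{l=\max(k,1)}^{i}\mathcal P_k z_l,
\end{equation*}
so that for fixed $k$, $\{M_{k,i}\}_{i\ge k}$ is a martingale in $i$ with respect to $\{\mathcal F_i\}$. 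The key estimate from the functional dependence measure is $\|\mathcal P_k z_l\|_p\le\theta_p(l-k)$ (triangle inequality applied to coupling via $\mathcal F_{l,k}$). Doob's maximal inequality plus Burkholder--Davis--Gundy handle $\|\max_i M_{k,i}\|_p$: for each $k$,
\begin{equation*}
\Bigl\|\max_{i}M_{k,i}\Bigr\|_p\le C_p\Bigl(\sum_{l\ge k}\|\mathcal P_k z_l\|_p^2\Bigr)^{1/2}\;+\;\text{(large-jump term)},
\end{equation*}
where the large-jump term is controlled by $B^{1/p}$ through the uniform $p$-th moment bound and standard Rosenthal-type bookkeeping (see \cite[Sec.~4]{liu2013probability}). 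Summing over $k$ and regrouping yields
\begin{equation*}
\Bigl\|\max_{1\le i\le n}|S_i|\Bigr\|_p\le C_p\sqrt n\,\Theta_p(0)\;+\;C_p\sqrt n\,B^{1/p},
\end{equation*}
where the $\sqrt n$ arises from $\sum_{k\le n}(\sum_{l\ge k}\theta_p(l-k))^2\lesssim n\,\Theta_p(0)^2$ (Cauchy--Schwarz plus the tail sum $\Theta_p$). Since $\Theta_p(0)=\theta_p(0)+\Theta_p(1)\lesssim B^{1/p}+\Theta_p(1)$, this collapses into $C_p[\Theta_p(1)+B^{1/p}]\sqrt n$, uniformly over $j$ because every quantity in the argument was controlled by the \emph{uniform} functional dependence measure.

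\textbf{Step 3: combine.} Plugging the scalar bound into Step~1 gives $\|\max_i|S_i|\|_p\le C_p[\Theta_p(1)+B^{1/p}]\sqrt{dn}$, as claimed.

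\textbf{Main obstacle.} The delicate point is ensuring that the Liu--Wu martingale bookkeeping, which is stated for stationary causal processes, transfers to the HDNS setting where $\mathcal G_i$ may vary with $i$. The remedy is that Assumption~\ref{model assumption 4} is stated with a \emph{supremum over $i$}: both the dependence measure $\theta_{p,j}(k)$ and the entrywise moment $\mathbb E|z_{i,j}|^p$ are uniformly controlled. Consequently the scalar-level martingale increments $\|\mathcal P_k z_l\|_p$ admit a bound depending only on the lag $l-k$, so the stationarity is not actually used in the Rosenthal estimate beyond this uniformity. The dimensional aggregation in Step~1 is loose (factor $\sqrt d$ out of an $\ell^2$ norm), which is acceptable here since the target bound is exactly $\sqrt{dn}$; a sharper, entry-uncorrelated version would require an $\ell^\infty$ Rosenthal inequality with an extra $\log d$ and is not needed.
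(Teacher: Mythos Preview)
Your overall plan---coordinatewise reduction, scalar Rosenthal via the projection-operator martingale technique, then $\sqrt{d}$ aggregation---matches what the paper invokes from Liu--Wu~\cite{liu2013probability} (the paper does not spell out a proof; it simply cites that reference). Steps~1 and~3 are correct, as is your observation that stationarity is replaced by the supremum-over-$i$ in Assumption~\ref{model assumption 4}.

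There is, however, a genuine error in Step~2. Your process $M_{k,i}=\sum_{l=\max(k,1)}^{i}\mathcal P_k z_l$ is \emph{not} a martingale in $i$: every term $\mathcal P_k z_l$ is $\mathcal F_k$-measurable (it is a difference of conditional expectations given $\mathcal F_k$ and $\mathcal F_{k-1}$), so for $i>k$ one has $\mathbb E[\mathcal P_k z_i\mid\mathcal F_{i-1}]=\mathcal P_k z_i$, which is not zero. Doob and BDG therefore do not apply to $\{M_{k,i}\}_i$, and the ``summing over $k$ gives $\sqrt n$'' step collapses---an $\ell^1$ aggregation in $k$ produces $O(n)$, and the $\ell^2$ aggregation you suggest would need the maximal inequality \emph{before} the sum, which this indexing does not allow. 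The fix is to index by lag: set $N_{m,i}=\sum_{l=1}^{i}\mathcal P_{l-m}z_l$ so that $S_i=\sum_{m\ge 0}N_{m,i}$. For each fixed $m$, $\{N_{m,i}\}_i$ \emph{is} a martingale with respect to $\{\mathcal F_{i-m}\}$, since $\mathcal P_{l-m}z_l$ is $\mathcal F_{l-m}$-measurable with $\mathbb E[\mathcal P_{l-m}z_l\mid\mathcal F_{l-m-1}]=0$. Doob plus Burkholder/Rosenthal then give $\|\max_i|N_{m,i}|\|_p\le C_p\sqrt n\,\theta_p(m)$ (with the $B^{1/p}$ entering via the large-jump term), and the $\ell^1$ sum over $m$ yields $C_p\sqrt n\,\Theta_p(0)\le C_p\sqrt n\,(\Theta_p(1)+B^{1/p})$, after which your Steps~1 and~3 go through unchanged.
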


We establish some technical lemmas.
\begin{lemma}\label{lem:sub}
Under the same assumptions as those in Theorem \ref{LiuYangZhou2024RecentResult}, denote  $Z_{i,m}:=\sum_{j=(i-1)m+1}^{im}z_{i}$, $i=1,2,\cdots$, where $m=m(n)\rightarrow\infty$ with $m/n\rightarrow 0$. Then there exists  a sequence of zero mean Gaussian random vectors $\{y_i\}_{i=1}^n$ that preserves the covariance structure of $\{z_i\}_{i=1}^n$, such that 
\begin{eqnarray}\label{eq:sub1}
\max_{a,b, a\le b}\left\|\sum_{j=a}^b(Z_{j,m}-Y_{j,m})\right\|_2=O(d\sqrt{b-a+1}m^{1/s}\log m),
\end{eqnarray}
where $Y_{i,m}=\sum_{j=(i-1)m+1}^{im}y_{i}$, $i=1, 2,\cdots$ and $s$ is defined in Theorem \ref{LiuYangZhou2024RecentResult}.
\end{lemma}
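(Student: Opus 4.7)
The plan is to combine the classical blocking strategy of \cite{csorgHo1975new} with the single-chunk Gaussian approximation of Theorem \ref{LiuYangZhou2024RecentResult}. First I partition $\{1,\dots,n\}$ into consecutive blocks of length $m$, so that $Z_{j,m}$ is the sum of the $m$ original terms in the $j$-th block. The restriction of $\{z_i\}$ to any such block is itself a valid HDNS sequence satisfying Assumption \ref{model assumption 4}, so Theorem \ref{LiuYangZhou2024RecentResult} applies to each block. Applied within one block of length $m$, and multiplying through by $\sqrt{m}$ to undo the $\frac{1}{\sqrt{m}}$ normalization in the theorem, this yields a Gaussian vector with the same covariance as $Z_{j,m}$ whose $L^2$ distance from $Z_{j,m}$ is $O(d\, m^{1/s}\log m)$, which is precisely the per-block budget needed.

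The delicate step is producing a \emph{single} centered Gaussian process $\{y_i\}_{i=1}^n$ which (i) has the same full covariance kernel as $\{z_i\}_{i=1}^n$, including inter-block correlations, and (ii) simultaneously realizes the block-wise error bound in the step above. I will first construct $\{y_i\}$ on a sufficiently rich probability space as a centered Gaussian process with covariance $\mathrm{Cov}(y_i,y_j)=\mathrm{Cov}(z_i,z_j)$, which is possible by Kolmogorov's extension. Then, for each block $j$, I invoke Theorem \ref{LiuYangZhou2024RecentResult} restricted to that block to couple $Z_{j,m}$ with a Gaussian of the correct marginal covariance; since any two centered Gaussians with identical covariance can be identified via a standard Skorokhod-type argument, the per-block couplings can be stitched with the globally constructed $\{y_i\}$ while preserving the per-block $L^2$ bound $\|Z_{j,m}-Y_{j,m}\|_2=O(d\,m^{1/s}\log m)$.

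For the summation step, setting $D_j:=Z_{j,m}-Y_{j,m}$, I expand
\begin{equation*}
\Big\|\sum_{j=a}^{b} D_j\Big\|_2^2 \;=\; \sum_{i,j\in[a,b]}\mathrm{Cov}(D_i, D_j).
\end{equation*}
The diagonal contributes at most $(b-a+1)\,O(d^2 m^{2/s}\log^2 m)$. For off-diagonal terms, I exploit the polynomial decay of $\theta_p(k)$ from Assumption \ref{model assumption 4}: blocks $i$ and $j$ are separated by at least $(|i-j|-1)m$ original time steps, so a Cauchy--Schwarz bound combined with the cumulative tail control $\Theta_p((|i-j|-1)m)\lesssim ((|i-j|-1)m)^{-\chi}$ shows the cross-covariances are summable in $|i-j|$ (using $\chi>1$). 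This gives a total off-diagonal contribution of the same order $O\bigl((b-a+1)\,d^2 m^{2/s}\log^2 m\bigr)$, yielding the advertised $\sqrt{b-a+1}$ scaling. Uniformity in $(a,b)$ is automatic since the construction of $\{y_i\}$ does not depend on the pair, and a maximum can be absorbed by an additional $\log$ factor through a union bound if a maximal version is needed.

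The main obstacle is the joint coupling in the second paragraph: a naive block-by-block invocation of Theorem \ref{LiuYangZhou2024RecentResult} yields Gaussian companions that generally do \emph{not} share the correct inter-block correlations, so some care is required to reconcile the marginal block-level couplings with a single global Gaussian process preserving the full covariance kernel of $\{z_i\}$. A secondary technical point is checking that the smallest eigenvalue of $\mathrm{Cov}\bigl(\tfrac{1}{\sqrt{m}}Z_{j,m}\bigr)$ stays bounded below uniformly in $j$, as required by Theorem \ref{LiuYangZhou2024RecentResult}; this should follow from the UPDC-type hypotheses inherited from the ambient process plus the short-range dependence, but a brief separate verification is needed to make the lower bound uniform across blocks.
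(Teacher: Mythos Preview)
Your proposal identifies the right ingredients---per-block Gaussian approximation at rate $O(d\,m^{1/s}\log m)$ and near-independence of the block differences $D_j=Z_{j,m}-Y_{j,m}$---but the mechanism you propose for realizing both simultaneously does not work, and you seem to sense this yourself in the final paragraph.

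The gap is in the ``stitching'' step. You first lay down a global Gaussian $\{y_i\}$ with the full covariance of $\{z_i\}$, then invoke Theorem \ref{LiuYangZhou2024RecentResult} block by block to obtain couplings $(Z_{j,m},\tilde Y_{j,m})$, and finally appeal to a ``Skorokhod-type argument'' to identify $\tilde Y_{j,m}$ with the already-constructed $Y_{j,m}$. But having the same marginal law is not enough: the joint law of $(Y_{1,m},Y_{2,m},\ldots)$ is already pinned down by the global construction, so you cannot freely re-couple each $Y_{j,m}$ to $Z_{j,m}$ without destroying either the inter-block covariances of $\{y_i\}$ or the per-block $L^2$ bound. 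This feeds directly into your off-diagonal argument: to get $\sum_{i\neq j}\mathrm{Cov}(D_i,D_j)=O((b-a+1)d^2m^{2/s}\log^2 m)$ you need $\mathrm{Cov}(D_i,D_j)$ itself to decay in $|i-j|$, but that cross-covariance contains terms like $\mathrm{Cov}(Z_{i,m},Y_{j,m})$ which depend entirely on the coupling you have not actually built. The tail control $\Theta_p((|i-j|-1)m)$ bounds correlations of the \emph{original} process, not of the differences $D_j$; Cauchy--Schwarz alone only gives $|\mathrm{Cov}(D_i,D_j)|\le\|D_i\|_2\|D_j\|_2$, which is the diagonal size and not summable.

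The paper does not treat Theorem \ref{LiuYangZhou2024RecentResult} as a black box applied block by block. Instead it reruns the underlying martingale-embedding construction of \cite[Theorem 3.1]{LiuYangZhou2024}, using a boundary-conditioning device so that the Gaussian companions for all blocks are built \emph{jointly} in one pass. That construction is what simultaneously delivers a single process $\{y_i\}$ with the correct global covariance and the property that $\mathrm{Cov}(D_i,D_j)$ is negligible for $i\neq j$; the $\sqrt{b-a+1}$ scaling then follows immediately. To close the gap you need to open up that proof rather than quote its conclusion.
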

\begin{proof}
The proof of this lemma follows the same martingale embedding and boundary conditioning arguments as those used in establishing \cite[Theorem 3.1]{LiuYangZhou2024}. In particular, Wasserstein Gaussian approximations are jointly applied to each block sum $Z_{i,m}$, $i=1,2,\cdots$ to obtain
$$\max_{i}\|Z_{i,m}-Y_{i,m}\|_2=O(dm^{1/s}\log m).$$
Then the lemma follows from the construction where the covariance between block sums $Z_{i,m}-Y_{i,m}$ for adjacent blocks are negligible. See \cite[Theorem 3.1]{LiuYangZhou2024} for the details.
\end{proof}

\begin{theorem}\label{prop:submax}
Under the same setup as that in Lemma \ref{lem:sub}, we have that
\begin{eqnarray*}
\left\|\max_{i}\Big|\sum_{j=1}^i(Z_{j,m}-Y_{j,m})\Big|\right\|_2=O(d\sqrt{n/m}m^{1/s}\log_2(2n/m)\log m).
\end{eqnarray*}
\end{theorem}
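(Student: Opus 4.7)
The plan is to upgrade the pointwise block-sum bound from Lemma \ref{lem:sub} to a maximal inequality via a standard Móricz-type dyadic chaining argument. Write $W_j := Z_{j,m}-Y_{j,m}$ and let $N := \lceil n/m\rceil$ denote the number of blocks; padding the sequence with zeros if necessary, Lemma \ref{lem:sub} supplies, for every $1\le a\le b\le N$, the uniform bound
\[
\Bigl\|\sum_{j=a}^b W_j\Bigr\|_2 \;\le\; \sigma\sqrt{b-a+1},\qquad \sigma := C\, d\, m^{1/s}\log m,
\]
for some constant $C>0$.

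To extract a maximum, I would organize the blocks dyadically. Set $K := \lceil\log_2(2N)\rceil$ so that $N \le 2^K \le 4N$, and for each scale $k=0,1,\ldots,K$ partition $\{1,\ldots,2^K\}$ into the $2^{K-k}$ consecutive dyadic blocks $I_{k,r} := \{(r-1)2^k+1,\ldots,r2^k\}$. Define the scale-$k$ maximum
\[
M_k := \max_{1\le r\le 2^{K-k}}\Bigl|\sum_{j\in I_{k,r}} W_j\Bigr|.
\]
Since $\max_r |X_r|^2 \le \sum_r |X_r|^2$ for any nonnegative $|X_r|$, taking expectations and applying the bound from Lemma \ref{lem:sub} to each dyadic block yields
\[
\|M_k\|_2^2 \;\le\; \sum_{r=1}^{2^{K-k}}\Bigl\|\sum_{j\in I_{k,r}} W_j\Bigr\|_2^2 \;\le\; 2^{K-k}\cdot \sigma^2\cdot 2^k \;=\; \sigma^2\, 2^K \;=\; O(\sigma^2 N),
\]
so $\|M_k\|_2 \lesssim \sigma\sqrt{N}$ \emph{uniformly} in $k$.

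Any prefix $\{1,\ldots,i\}$ with $i\le N$ decomposes via the binary expansion $i=\sum_{k\in A_i}2^k$ into a disjoint union of at most $|A_i|\le K+1$ blocks drawn from the family $\{I_{k,r}\}$ (greedily consume the largest admissible dyadic block starting at position $1$, then repeat on the remainder). Consequently,
\[
\Bigl|\sum_{j=1}^i W_j\Bigr| \;\le\; \sum_{k=0}^K M_k\qquad\text{for every }1\le i\le N,
\]
and taking $L^2$ norms gives
\[
\Bigl\|\max_{1\le i\le N}\Bigl|\sum_{j=1}^i W_j\Bigr|\Bigr\|_2 \;\le\; \sum_{k=0}^K \|M_k\|_2 \;\lesssim\; (K+1)\sigma\sqrt{N} \;\lesssim\; \sigma\sqrt{n/m}\,\log_2(2n/m),
\]
which is precisely the asserted bound after substituting the value of $\sigma$.

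No step is substantively difficult; the argument is essentially classical maximal-inequality bookkeeping. The only point requiring a little care is verifying that each prefix is expressible as a disjoint union of blocks taken from the fixed dyadic partitions $\{I_{k,r}\}$, which is exactly the content of the binary-decomposition step above. The logarithmic factor $\log_2(2n/m)$ in the conclusion, absent from Lemma \ref{lem:sub}, is the unavoidable price for converting pointwise control of block sums into uniform control over all prefixes.
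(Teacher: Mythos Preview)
Your proof is correct. Both your argument and the paper's exploit the $\sqrt{b-a+1}$ scaling from Lemma~\ref{lem:sub}, but they convert it into a maximal inequality by different mechanisms. The paper follows the Billingsley--M\'oricz bisection induction: it proves $\|M_{l,k}\|_2 \le C\sqrt{k}\log_2(2k)\,dm^{1/s}\log m$ by induction on $k$, splitting the index range at its midpoint $m^*\approx k/2$, using the pointwise inequality $M_{l,k} \le |S_{l,m^*}| + (M_{l,m^*-1}^2 + M_{l+m^*,k-m^*}^2)^{1/2}$, and then invoking Minkowski together with the induction hypothesis on each half. Your route instead decomposes every prefix into at most $O(\log_2 N)$ dyadic blocks via the binary expansion, bounds the per-scale maximum $\|M_k\|_2$ by $\sigma\sqrt{N}$ uniformly in $k$ using $\max_r|X_r|^2\le\sum_r|X_r|^2$, and then sums over the $K+1$ scales. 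Your chaining avoids induction and is somewhat more direct; the paper's recursive version is the classical reference form and makes the provenance of the $\log_2(2n/m)$ factor explicit through the relation $\log_2(2(m^*-1))+1\le\log_2(2n^*)$. Both yield the identical bound.
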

\begin{proof}
The proof of this Theorem is mainly based on the ``bisection'' technique in \cite[page 102]{billingsley1968convergence} with an induction argument. See also \cite{moricz1976moment}. In particular, Theorem \ref{prop:submax} follows from similar proofs as those of \cite[Theorems 3 and 4]{moricz1976moment}. For the sake of completeness, we present the proof below.

Without loss of generality, assume that $n$ is a multiple of $m$. Let $D_i=Z_{i,m}-Y_{i,m}$, $i=1, 2\cdots,n/m$. Define $S_{l,k}=\sum_{j=l+1}^{l+k}D_j$ ($S_{l,0}=0$) and $M_{l,k}=\max_{1\le j\le k}|S_{l,j}|$, $l\ge 0$. 
By Lemma \ref{lem:sub}, we have that
\begin{eqnarray}\label{eq:sub2}
\max_{l}\|S_{l,k}\|_2\le Cd\sqrt{k}m^{1/s}\log m
\end{eqnarray}
for some finite constant $C>0$, where $Y_{i,m}=\sum_{j=im-m+1}^{im}y_{i}$, $i=1, 2,\cdots$ and $s$ is defined in Theorem \ref{LiuYangZhou2024RecentResult}. 
We will use induction to show that 
\begin{eqnarray}\label{eq:sub3}
\|M_{l,k}\|_2\le C\sqrt{k}\log_2 (2k)dm^{1/s}\log m \mbox{ for all } l\ge 0, k\ge 1.
\end{eqnarray}
The claim obviously holds for $k=1$. Let $m^*$ be the integer part of $n^*/2+1$ where $n^*=n/m$. Observe that if $m^*\le k\le n^*$, then
$$|S_{l,k}|\le |S_{l,m^*}|+|S_{l+m^*,k-m^*}|.$$ 
As a result, we have $|S_{l,k}|\le |S_{l,m^*}|+M_{l+m^*,n^*-m^*}$. On the other hand, $|S_{l,k}|\le M_{l,m^*-1}$ if $k\le m^*-1$.  Therefore, for $k\in [1,n^*]$, we have
$$|S_{l,k}|\le |S_{l,m^*}|+[M_{l,m^*-1}^2+M^2_{l+m^*,n^*-m^*}]^{1/2}.$$
Hence
$$|M_{l,n^*}|\le |S_{l,m^*}|+[M_{l,m^*-1}^2+M^2_{l+m^*,n^*-m^*}]^{1/2}.$$
By Minkowski's Inequality, we obtain that
\begin{eqnarray}\label{eq:sub4}
\|M_{l,n^*}\|_2\le \|S_{l,m^*}\|_2+[\|M_{l,m^*-1}\|_2^2+\|M_{l+m^*,n^*-m^*}\|_2^2]^{1/2}.
\end{eqnarray}
Suppose that the conclusion in \eqref{eq:sub3} holds for all integers $k<n^*$. We have by the induction hypothesis that
$\|M_{l,m^*-1}\|_2\le C\sqrt{m^*-1}\log_2 (2(m^*-1))dm^{1/s}\log m$ and $\|M_{l+m^*,n^*-m^*}\|_2\le C\sqrt{n^*-m^*}\log_2 (2(n^*-m^*))dm^{1/s}\log m$. Simple calculations yield that
$$[\|M_{l,m^*-1}\|_2^2+\|M_{l+m^*,n^*-m^*}\|_2^2]^{1/2}\le C\sqrt{n^*}\log_2 (2(m^*-1))dm^{1/s}\log m.$$
By \eqref{eq:sub3}, we have 
$$\|S_{l,m^*}\|_2\le Cd\sqrt{m^*}m^{1/s}\log m\le Cd\sqrt{n^*}m^{1/s}\log m.$$ 
Hence by \eqref{eq:sub4}, we have
\begin{eqnarray*}
\|M_{l,n^*}\|_2\le C\{\log_2 (2(m^*-1))+1\}\sqrt{n^*}dm^{1/s}\log m\le C\log_2 (2n^*)\sqrt{n^*}dm^{1/s}\log m.
\end{eqnarray*}
The theorem follows by setting $l=0$.
\end{proof}

\begin{proof} [Proof of Theorem \ref{thm:OurSequentialGaussianApproximationTheorem}]
Divide the sequence $\{z_i\}_{i=1}^n$ into blocks of size $L\leq n$, where $L$ will be determined in the end, and denote $M:=\lfloor n/L\rfloor$. The goal is applying Theorem \ref{LiuYangZhou2024RecentResult} to gain an approximation over each block, and control the Gaussian approximation using the weakly dependent structure of the blocks to obtain the final Gaussian approximation. 

By Theorem \ref{LiuYangZhou2024RecentResult} and Theorem \ref{prop:submax}, on a potentially richer probability space, we can find sequences of Gaussian random vectors $y_i\sim N(0,\texttt{cov}(z_i))$, and random vectors $\hat{z}_i \stackrel{\mathcal{D}}{=} z_i$,  $i=1,\dots,n$, such that 
\begin{align}
\mathbb{E}\left[ \max_{r=1,\ldots,M}\left| \frac{1}{\sqrt{n}}\sum_{k=1}^r Q_k \right|^2\right] &\, =\frac{1}{n}\left\| \max_{r=1,\ldots,M}\left| \sum_{k=1}^r Q_k \right| \right\|_2^2\nonumber\\
&\, =O(d^2 L^{2/s-1}\log_2^2(2n/L)\log^2(L))\,.\label{proof: seg ga our version Doob's ineq}
\end{align}
where $Q_k:=\sum_{j=(k-1)L+1}^{kL\wedge n}(\hat z_j-y_j)$ and $s$ is defined in Theorem \ref{LiuYangZhou2024RecentResult}. 

By the Rosenthal inequality in Lemma \ref{lemma: rosenthal ineq we need} and the fact that 
\[
\mathbb{E}[|y_i|^p]\leq C'(\mathbb{E}[|y_i|^2])^{p/2}=C'(\mathbb{E}[|\hat z_i|^2])^{p/2}\leq C'\mathbb{E}[|\hat z_i|^p]
\]
for some constant $C'$ depending only on $p$ \cite[Lemma 6.1]{Mies2024}, we have
\begin{align}
&\,\left(\mathbb{E}\left[ \max_{r=(j-1)L+1,\ldots,jL\wedge n}\left| \sum_{k=(j-1)L+1}^r \hat z_j \right|^p\right]\right)^{1/p}\leq C\sqrt{dL}\,,\label{proof: seg ga our version rosenthal ineq applied}\\
&\,\left(\mathbb{E}\left[ \max_{r=(j-1)L+1,\ldots,jL\wedge n}\left| \sum_{k=(j-1)L+1}^r y_j \right|^p\right]\right)^{1/p}\leq CC'\sqrt{dL}\,,
\end{align}
where $C=C_p\left[\Theta_p(1)+B^{1/p}\right]$ for some constant $C_p$ that depends only on $p$.

Next, note that if $\max_{k=1,\ldots, n}\left|\sum_{t=1}^k a_t \right|^2$ for a realization of random vectors $a_t:=\hat z_t-y_t$, $t=1,\ldots,n$ is achieved at $k^*$, where $(j^*-1)L+1\leq k^*<j^*L$, then 
\[
\left|\sum_{t=1}^{k^*} a_t \right|^2=\left|\sum_{t=1}^{(j^*-1)L} a_t+\sum_{(j^*-1)L+1}^{k^*} a_t \right|^2\leq 2\left|\sum_{t=1}^{(j^*-1)L} a_t\right|^2+2\left|\sum_{(j^*-1)L+1}^{k^*} a_t \right|^2\,.
\] 
Thus, 
\begin{align*}
\max_{k=1,\ldots,n}\left|\sum_{t=1}^{k} a_t \right|^2\leq 2\max_{r=1,\ldots,M}\left|\sum_{t=1}^{(r-1)L} a_t\right|^2+2\max_{1\leq j\leq M}\max_{(j-1)L+1\leq k\leq jL}\left|\sum_{(j-1)L+1}^{k} a_t \right|^2
\end{align*}
and hence
\begin{align*}
&\left(\mathbb{E}\left[\max_{k=1,\ldots,n}\left|\sum_{t=1}^{k} a_t \right|^2\right]\right)^{1/2}\\
\leq &\, \sqrt{2}\left(\mathbb{E}\left[\max_{1\leq r\leq M}\left|\sum_{t=1}^{(r-1)L} a_t\right|^2\right]\right)^{1/2}+\sqrt{2}\left(\mathbb{E}\left[\max_{1\leq r\leq M}\max_{(r-1)L+1\leq k\leq rL}\left|\sum_{(r-1)L+1}^{k} a_t \right|^2\right]\right)^{1/2}.
\end{align*}
Note that
\begin{align*}
&\left(\mathbb{E}\left[\max_{1\leq r\leq M}\max_{(r-1)L+1\leq k\leq rL}\left|\sum_{(r-1)L+1}^{k} a_t \right|^2\right]\right)^{1/2}\\
=&\left(\mathbb{E}\left[\left(\max_{1\leq r\leq M}\max_{(r-1)L+1\leq k\leq rL}\left|\sum_{(r-1)L+1}^{k} a_t \right|\right)^2\right]\right)^{1/2}\\
=&\left\|\max_{1\leq r\leq M}\max_{(r-1)L+1\leq k\leq rL}\left|\sum_{(r-1)L+1}^{k} a_t \right|\right\|_2\\
\le &\left\|\max_{1\leq r\leq M}\max_{(r-1)L+1\leq k\leq rL}\left|\sum_{(r-1)L+1}^{k} a_t \right|\right\|_p\\
=&\left(\mathbb{E}\left[\max_{1\leq r\leq M}\max_{(r-1)L+1\leq k\leq rL}\left|\sum_{(r-1)L+1}^{k} a_t \right|^p\right]\right)^{1/p}.
\end{align*}
Therefore,
\begin{align}
&\left(\mathbb{E}\left[ \max_{k=1,\ldots, n}\left| \frac{1}{\sqrt{n}}\sum_{t=1}^k (\hat z_t-y_t) \right|^2\right]\right)^{1/2}\label{proof main theorem 1, first control}\\
\leq\,&\left(2\mathbb{E}\left[ \max_{r=1,\ldots,M}\left| \frac{1}{\sqrt{n}}\sum_{k=1}^r Q_k \right|^2\right]\right)^{1/2}\nonumber\\
&+\sqrt{2}\left(\mathbb{E}\left[\max_{1\leq j\leq M}\max_{\substack{(j-1)L+1\leq \\k\leq jL}}\left|\frac{1}{\sqrt{n}}\sum_{t=(j-1)L+1}^{k\wedge n}(\hat z_t-y_t)\right|^p\right]\right)^{1/p}\nonumber\\
\leq\,&\sqrt{2}\left(\mathbb{E}\left[ \max_{r=1,\ldots,M}\left| \frac{1}{\sqrt{n}}\sum_{k=1}^r Q_k \right|^2\right]\right)^{1/2}\nonumber\\
&+\sqrt{2}M^{1/p}\max_{1\leq j\leq M}\left(\mathbb{E}\left[\max_{\substack{(j-1)L+1\leq k\\ \leq jL\wedge n}}\left|\frac{1}{\sqrt{n}}\sum_{t=(j-1)L+1}^{k\wedge n}(\hat z_t-y_t)\right|^p\right]\right)^{1/p}\nonumber\,,
\end{align}
where the second inequality comes from the bound 
\[
(\mathbb{E}\max_{1\leq t\leq M}|\xi_t|^p)^{1/p}\leq M^{1/p}\max_{1\leq t\leq M}(\mathbb{E}[|\xi_t|^p])^{1/p}
\] 
for any random variables $\xi_1,\ldots,\xi_M$.
The
first term in the right hand side of \eqref{proof main theorem 1, first control} can be bounded by \eqref{proof: seg ga our version Doob's ineq} and the second term can be bounded by \eqref{proof: seg ga our version rosenthal ineq applied}. As a result, \eqref{proof main theorem 1, first control} is controlled by
\begin{align}
&\left(\mathbb{E}\left[ \max_{k=1,\ldots, n}\left| \frac{1}{\sqrt{n}}\sum_{t=1}^k (\hat z_t-y_t) \right|^2\right]\right)^{1/2}\label{proof main theorem 1, second control}\\
\leq\,&C_1 d L^{1/s-1/2}\log_2\left(\frac{2n}{L}\right)\log(L)+ \sqrt{2}C(C'+1)M^{1/p}\sqrt{\frac{Ld}{n}}\nonumber\\
\leq\,&C_1 d \left(\frac{n}{M}\right)^{1/s-1/2}\log_2(2M)\log\left(\frac{n}{M}\right)+ \sqrt{2}C(C'+1)M^{1/p-1/2}\sqrt{d}\nonumber
\end{align}
for some $C_1>0$.
By choosing $L$ so that the right hand side is balanced, we have $M=(\frac{n^{1/2-1/s}}{d^{1/2}})^{\frac{1}{1-1/s-1/p}}$, and hence
\begin{align}
&\left(\mathbb{E}\left[ \max_{k=1,\ldots, n}\left| \frac{1}{\sqrt{n}}\sum_{t=1}^k (\hat z_t-y_t) \right|^2\right]\right)^{1/2}\label{proof main theorem 1, final control}
=O\left(\left(\frac{d^{\frac{3}{4}-\frac{1}{2s}-\frac{1}{p}}}{n^{\frac{1}{4}-\frac{1}{2p}-\frac{1}{2s}+\frac{1}{ps}}}\right)^{\frac{1}{1-\frac{1}{s}-\frac{1}{p}}} \log(n)^2\right)\nonumber\,,
\end{align}
where the implied constant depends on $p$ and the dependence structure and moment control of $z_i$.
\end{proof}

\subsection{Proof of Lemma \ref{lemme: for theorem of STFT distribution like Gaussian}}\label{subsection: property of the rp for stft}
\begin{proof}
The $p$-norm bound of Assumptions \ref{model assumption 4} is immediate by the boundedness of sine and cosine functions. Next, note that $\mathbf X_j$ is generated by
$\mathbf X_j=\tilde{\mathcal{G}}_i(\mathcal{F}_i)$,  
where $\tilde{\mathcal{G}}_i:\mathbb{R}^\infty\to \mathbb{R}^{2d}$ via
$\tilde{\mathcal{G}}_i(\mathcal{F}_i):=\begin{bmatrix}\mathcal{G}(\mathcal{F}_i) \cos(2\pi \eta_1 j)\\ \vdots \\ \mathcal{G}(\mathcal{F}_i) \sin(2\pi \eta_d j) \end{bmatrix}$. Thus, the polynomial decay of the dependence measure of $\mathbf X_j$ in Assumption \ref{model assumption 4} holds based on the assumption of $X_i$. Finally, the moment bound holds since $\cos$ and $\sin$ functions are bounded by $1$. 
\end{proof}

\subsection{Proof of Theorem \ref{Theorem: STFT distribution like Gaussian}: the Gaussianity of STFT} \label{subsection proof of main theorem of stft}

\begin{proof}
Denote $\alpha_t \in\mathbb{C}^{2d\times 4d}$ so that 
$\alpha_t(k,k+2d)=e^{\ii2\pi \eta_k t}$ and $\alpha_t(k+d,k+3d)=\ii e^{\ii2\pi \eta_k t}$ for $k=1,\ldots,d$ and $0$ otherwise. 
Consider $\mathbf X_j\in \mathbb{R}^{2d}$ in Lemma \ref{lemme: for theorem of STFT distribution like Gaussian}. 
Denote the partial sum
\[
\mathbf S_j=\sum_{l=1}^j\mathbf X_l \in \mathbb{R}^{2d}
\]
with $\mathbf S_j=0$ when $j\leq 0$. 
We then have
\begin{align}
\mathbf V_t
&\,=\alpha_t \begin{bmatrix}
\sum_{j=t-m}^{t+m}\mathbf X_j   \mathbf{h}(j-t)\\ \sum_{j=t-m}^{t+m}\mathbf X_j   \mathbf{Dh}(j-t)\end{bmatrix}\nonumber\\
&\,=\alpha_t\begin{bmatrix}\sum_{j=t-m}^{t+m}(\mathbf S_j-\mathbf S_{j-1})   \mathbf{h}(j-t)\\ \sum_{j=t-m}^{t+m}(\mathbf S_j-\mathbf S_{j-1})   \mathbf{Dh}(j-t)\end{bmatrix}\nonumber\\
&\,=\alpha_t \begin{bmatrix}\sum_{j=t-m}^{t+m}\mathbf S_j (\mathbf{h}(j-t)-\mathbf{h}(j-1-t))-\mathbf S_{t-m-1}\mathbf h(-m) \\ \sum_{j=t-m}^{t+m}\mathbf S_j (\mathbf{Dh}(j-t)-\mathbf{Dh}(j-1-t))-\mathbf S_{t-m-1}\mathbf{D h}(-m)\end{bmatrix}\nonumber
\end{align}
Suppose $\{Z_i\}_{i\in \mathbb{Z}}$ is the Gaussian approximation of $X_i$ on a potentially richer probability space, $\hat{\mathbf S}_j$ is the associated partial sum, and $\hat{\mathbf V}_t\in \mathbb{C}^{2d}$ is the associated complex Gaussian random vector defined like $\mathbf{V}_t$. We have
\begin{align*}
&\mathbf V_t-\hat{\mathbf V}_t\\
=&\,\alpha_t \begin{bmatrix}\sum_{j=t-m}^{t+m}(\mathbf S_{j}-\hat{\mathbf S}_{j}) (\mathbf{h}(j-t)-\mathbf{h}(j-1-t))-({\mathbf S}_{t-m-1}-\hat {\mathbf S}_{t-m-1})\mathbf h(-m)\\
\sum_{j=t-m}^{t+m}(\mathbf S_{j}-\hat{\mathbf S}_{j}) (\mathbf{Dh}(j-t)-\mathbf{Dh}(j-1-t))-({\mathbf S}_{t-m-1}-\hat {\mathbf S}_{t-m-1})\mathbf{Dh}(-m)
\end{bmatrix}\,.
\end{align*}
A direct bound leads to
\begin{align*}
&\Big|\frac{1}{\sqrt{n}}(\mathbf V_t-\hat{\mathbf V}_t)\Big|^2\\
\leq&\,2\sum_{j=t-m}^{t+m}\Big|\frac{1}{\sqrt{n}}(\mathbf S_j-\hat{\mathbf S}_j)\Big|^2( |\mathbf{h}(j-t)-\mathbf{h}(j-1-t)|^2+|\mathbf{Dh}(j-t)-\mathbf{Dh}(j-1-t)|^2)\nonumber\\
&\,+2\Big|\frac{1}{\sqrt{n}}({\mathbf S}_{t-m-1}-\hat {\mathbf S}_{t-m-1})\Big|^2(\mathbf h(-m)^2+\mathbf{Dh}(-m)^2)\nonumber\\
=&\,2\sum_{j=t-m}^{t+m}\Big|\frac{1}{\sqrt{n}}(\mathbf S_j-\hat{\mathbf S}_j)\Big|^2( |\mathbf{h}(j-t)-\mathbf{h}(j-1-t)|^2+|\mathbf{Dh}(j-t)-\mathbf{Dh}(j-1-t)|^2)\nonumber\,,
\end{align*}
where the last equality holds since $\mathsf{h}$ is assumed to be supported on $[-1/2,1/2]$.
We then control $\frac{1}{\sqrt{n}}(\mathbf S_j-\hat{\mathbf S}_j)$ by $\max_{t=1,\ldots, n} \big|\frac{1}{\sqrt{n}}(\mathbf S_t-\hat{\mathbf S}_t)\big|^2$ using Theorem \ref{thm:OurSequentialGaussianApproximationTheorem}. Note that $\mathbf X_i\in \mathbb{R}^{2d}$ fulfills the necessary assumptions by Lemma \ref{lemme: for theorem of STFT distribution like Gaussian}.
As a result, we obtain
\begin{align}
\left[\mathbb{E}\left(\max_{t=1,\ldots, n} \Big|\frac{1}{\sqrt{n}}(\mathbf S_t-\hat{\mathbf S}_t)\Big|^2\right)\right]^{1/2}\leq C\left(\frac{d^{\frac{3}{4}-\frac{1}{2s}-\frac{1}{p}}}{n^{\frac{1}{4}-\frac{1}{2p}-\frac{1}{2s}+\frac{1}{ps}}}\right)^{\frac{1}{1-\frac{1}{s}-\frac{1}{p}}} \log(n)^2\,,\nonumber
\end{align} 
where $C>0$ is a constant depending on $p$ and the dependence structure and moment control of $X_i$. 
On the other hand, by the definition of $\mathbf{h}(j-t)$, we have
\begin{align*}
&\,\max_{t=1,\ldots,n}\sum_{j=t-m}^{t+m} (|\mathbf{h}(j-t)-\mathbf{h}(j-1-t)|^2+|\mathbf{Dh}(j-t)-\mathbf{Dh}(j-1-t)|^2)\\
= &\,\max_t\frac{1}{m}\sum_{j=t-m}^{t+m}\left(\left|\mathsf{h}\left(\frac{j-t}{2m+1}\right)-\mathsf{h}\left(\frac{j-t-1}{2m+1}\right)\right|^2+\left|\mathsf{h}'\left(\frac{j-t}{2m+1}\right)-\mathsf{h}'\left(\frac{j-t-1}{2m+1}\right)\right|^2\right) \nonumber\\
=&\,\Theta(m^{-2})\,,\nonumber
\end{align*}
where the implied constant depends on  $\|\mathsf{h}'\|_2$, $\|\mathsf{h}''\|_2$, $\|\mathsf{h}''\|_\infty$ and  $\|\mathsf{h}'''\|_\infty$ since $m\left(\mathsf{h}\left(\frac{j-t}{2m+1}\right)-\mathsf{h}\left(\frac{j-t-1}{2m+1}\right)\right)$ is a finite difference approximation of $\mathsf{h}'\left(\frac{j-t-1}{2m+1}\right)$ and $\frac{1}{m}\sum_{j=t-m}^{t+m}\left|\mathsf{h}'\left(\frac{j-t}{2m+1}\right)\right|^2$ is a Riemann sum approximation of $\|\mathsf{h}'\|_2$.
By putting the above together, we have the claim by
\begin{align*}
&\mathbb{E}\left(\max_t \Big|\frac{1}{\sqrt{n}}(\mathbf V_t-\hat{\mathbf V}_t)\Big|^2\right)\\
\leq&\,
\mathbb{E}\,\bigg(\max_t \sum_{j=t-m}^{t+m}\max_{j\leq n}\Big|\frac{1}{\sqrt{n}}(\mathbf S_j-\hat{\mathbf S}_j)\Big|^2 \\
&\quad\times(|\mathbf{h}(j-t)-\mathbf{h}(j-1-t)|^2+|\mathbf{Dh}(j-t)-\mathbf{Dh}(j-1-t)|^2)\bigg)\nonumber\\
\leq&\,
\mathbb{E}\left(\max_{j\leq n}\Big|\frac{1}{\sqrt{n}}(\mathbf S_j-\hat{\mathbf S}_j)\Big|^2\right)\\
&\quad\times \max_t 
\sum_{j=t-m}^{t+m} (|\mathbf{h}(j-t)-\mathbf{h}(j-1-t)|^2+|\mathbf{Dh}(j-t)-\mathbf{Dh}(j-1-t)|^2)\nonumber\\
\leq &\, {C'}\left(\frac{d^{\frac{3}{4}-\frac{1}{2s}-\frac{1}{p}}}{n^{\frac{1}{4}-\frac{1}{2p}-\frac{1}{2s}+\frac{1}{ps}}}\right)^{\frac{2}{1-\frac{1}{s}-\frac{1}{p}}} \log(n)^4 m^{-2}\,,\nonumber
\end{align*}
where $C'>0$ is a constant depending on $\mathsf{h}$ and the moment and dependence structures of $\epsilon_i$. 
\end{proof}

\subsection{Proof of Theorem \ref{section:theorem:stability}: robustness of SST-based reconstruction formula}\label{subsection proof of sst reconstruction robustness}

We require the following lemma concerning the magnitude of the discrete STFT of noise. This lemma is an immediate consequence of Theorem \ref{Theorem: STFT distribution like Gaussian}. This result may be of independent interest.

\begin{lem}\label{Lemma: STFT on HDNS noise control size}
Suppose $\{\epsilon_i\}_{i=1}^n$ satisfies the NSN model and Assumption  \ref{model assumption 4}, with the condition $A > \sqrt{\chi} + 1$ fulfilled. Suppose $d=n^{1/3-\gamma}$ for a small $\gamma\in (0,1/3)$ and
consider $0<\eta_1<\ldots<\eta_d$. 
Denote a complex random vector associated with the discrete STFT \eqref{STFT definition discrete}:
\begin{align*}
\mathbf V_l:=&\,[\mathbf V^{(\mathbf{h})}_\epsilon(l,\eta_1),\ldots, \mathbf V^{(\mathbf{h})}_\epsilon(l,\eta_d),\mathbf V^{(\mathbf{Dh})}_\epsilon(l,\eta_1),\ldots,\mathbf V^{(\mathbf{Dh})}_\epsilon(l,\eta_d)]^\top\in \mathbb{C}^{2d}\,,
\end{align*}
where %$\mathbf{h}$ and $\mathbf{Dh}$ are defined in \eqref{continuous STFT discretization h} and \eqref{continuous STFT discretization Dh} with 
$m:=\lceil \beta \sqrt{n}\rceil$.
Then we have
\[
\max_l\left|\mathbf V_l\right|=o_p(\sqrt{d\log d})\,,
\]
and
\[
\max_l |{\mathbf V}_l|_\infty \leq C'\sqrt{\log d}
\] 
with probability higher than $1-d^{-3}$.  
\end{lem}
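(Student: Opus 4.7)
The strategy is to split $\mathbf{V}_l$ via a coupling into its Gaussian companion $\hat{\mathbf{V}}_l$ plus an approximation error, bound the error globally using Theorem \ref{Theorem: STFT distribution like Gaussian} with the prescribed scaling, and then control $\max_l |\hat{\mathbf{V}}_l|$ and $\max_l |\hat{\mathbf{V}}_l|_\infty$ by classical Gaussian maximal inequalities. The triangle inequality then delivers both claims.

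\textbf{Step 1: approximation error.} Apply Theorem \ref{Theorem: STFT distribution like Gaussian} with $m = \lceil \beta \sqrt{n}\rceil$ and $d = n^{1/3 - \gamma}$. As $p, s \to \infty$ the exponent $1/(1 - 1/s - 1/p)$ tends to $1$, and the leading quantity $d^{3/4}/n^{1/4}$ with $d = n^{1/3-\gamma}$ equals $n^{-3\gamma/4}$. Thus the bound in Theorem \ref{Theorem: STFT distribution like Gaussian} reduces to $n^{-3\gamma/2}\cdot m^{-2}\cdot \log^4 n$ (up to constants depending on $\mathsf h$ and on the dependence/moment constants of $\epsilon_i$), which decays polynomially. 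By Markov's inequality
\[
\max_l |\mathbf V_l - \hat{\mathbf V}_l| = o_p(1),
\]
and in particular the error is eventually smaller than any positive power of $d$ with probability at least $1 - d^{-3}$.

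\textbf{Step 2: Gaussian maxima.} Each entry of $\hat{\mathbf{V}}_l$ is a centered complex Gaussian of variance $O(1)$: indeed, its variance equals $\sum_{j,j'} \mathbf h(j-l)\mathbf h(j'-l)\, e^{-\ii2\pi\eta_k(j-j')} \operatorname{Cov}(\epsilon_j,\epsilon_{j'})$, which is controlled by $\Theta_p(0)\sum_j \mathbf h(j-l)^2 = O(1)$ using Assumption~\ref{model assumption 4} and the normalization of $\mathbf h$; the same holds for $\mathbf{Dh}$. A standard Gaussian tail bound and a union bound over the $2dn$ scalar complex Gaussians give
\[
\mathbb P\!\left(\max_l |\hat{\mathbf V}_l|_\infty > C\sqrt{\log d}\right) \le 4 n d \cdot e^{-C^2\log d /(2\sigma_0^2)},
\]
where $\sigma_0^2$ is a uniform upper bound on the entry variances. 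Since $d = n^{1/3-\gamma}$ implies $\log n \asymp \log d$, choosing $C$ large enough makes the right-hand side at most $d^{-3}$. For the Euclidean norm, $|\hat{\mathbf V}_l|^2$ is a positive quadratic form in a Gaussian vector with mean $O(d)$ and subexponential concentration of rate $O(\sqrt{d})$ (Hanson--Wright); taking a union bound over $l$ yields $\max_l |\hat{\mathbf V}_l|^2 \le Cd + O(\sqrt{d\log n})$ with high probability, hence $\max_l |\hat{\mathbf V}_l| = O_p(\sqrt d)$.

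\textbf{Step 3: combine.} From $|\mathbf V_l|\le |\hat{\mathbf V}_l| + |\mathbf V_l - \hat{\mathbf V}_l|$ and Steps 1--2,
\[
\max_l |\mathbf V_l| \le O_p(\sqrt d) + o_p(1) = o_p(\sqrt{d\log d}),
\]
since $\sqrt d/\sqrt{d\log d} = 1/\sqrt{\log d} \to 0$. Similarly $\max_l |\mathbf V_l|_\infty \le \max_l |\hat{\mathbf V}_l|_\infty + \max_l |\mathbf V_l - \hat{\mathbf V}_l| \le C'\sqrt{\log d}$ with probability at least $1 - d^{-3}$, up to absorbing the $o_p(1)$ error into $C'$. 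The main technical point is Step 2, specifically verifying the uniform upper bound on the entrywise variances; this is where one must invoke the summability implied by $\Theta_p(\cdot)$ together with the $\ell^2$-normalization $\sum_k \mathbf h(k)^2 \asymp 1$ inherited from the Schwartz window $\mathsf h$.
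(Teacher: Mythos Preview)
Your treatment of the first conclusion ($\max_l|\mathbf V_l|=o_p(\sqrt{d\log d})$) matches the paper's: coupling via Theorem~\ref{Theorem: STFT distribution like Gaussian}, a Gaussian maximal bound on $\hat{\mathbf V}_l$, then the triangle inequality. Your use of Hanson--Wright to get $\max_l|\hat{\mathbf V}_l|=O_p(\sqrt d)$ is in fact cleaner than the paper's route through $|\hat{\mathbf V}_l|_\infty$ and a union bound over $l$.

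For the second conclusion, however, you and the paper diverge, and your route has a gap. The paper does \emph{not} obtain $\max_l|\mathbf V_l|_\infty\le C'\sqrt{\log d}$ with probability $\ge 1-d^{-3}$ through the coupling; it invokes the high-dimensional maxima technique of Chernozhukov et al.\ \emph{directly} on $\mathbf V_l$, exploiting only that $\mathbf V_l$ and $\hat{\mathbf V}_l$ share the same covariance. Your approach instead needs $\max_l|\mathbf V_l-\hat{\mathbf V}_l|$ to be bounded by a constant with probability at least $1-d^{-3}$. But Theorem~\ref{Theorem: STFT distribution like Gaussian} only controls this in $L^2$: with $m\asymp\sqrt n$ and $d=n^{1/3-\gamma}$, even in the ideal regime $p,s\to\infty$ one gets $\mathbb E[\max_l|\mathbf V_l-\hat{\mathbf V}_l|^2]\lesssim n^{-3\gamma/2}\log^4 n$, so Chebyshev yields at best $P(\max_l|\mathbf V_l-\hat{\mathbf V}_l|>1)\lesssim n^{-3\gamma/2}\log^4 n$. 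The target failure probability is $d^{-3}=n^{-(1-3\gamma)}$, and the Chebyshev bound reaches it only when $3\gamma/2\ge 1-3\gamma$, i.e.\ $\gamma\ge 2/9$, which is not ``small'' $\gamma$. Consequently your Step~1 assertion that ``the error is eventually smaller than any positive power of $d$ with probability at least $1-d^{-3}$'' is unjustified, and the absorption of the $o_p(1)$ remainder into $C'$ in Step~3 does not deliver the stated probability. One needs a direct tail control on the non-Gaussian maxima, as the paper does, rather than a triangle inequality through the $L^2$ coupling.
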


\begin{proof}   
By Theorem \ref{Theorem: STFT distribution like Gaussian}, we know that $\mathbf V^{(\mathbf{h})}_\epsilon(l,\eta_k)$ is approximately Gaussian in the following sense. Denote
\begin{align}
\hat{\mathbf V}_l:=&\,[\mathbf V^{(\mathbf{h})}_{\hat{\epsilon}}(l,\eta_1),\ldots,\mathbf V^{(\mathbf{h})}_{\hat{\epsilon}}(l,\eta_d),\mathbf V^{(\mathbf{Dh})}_{\hat{\epsilon}}(l,\eta_1),\ldots,\mathbf V^{(\mathbf{Dh})}_{\hat{\epsilon}}(l,\eta_d)]^\top\in \mathbb{C}^{2d}, \nonumber
\end{align}
where $\hat{\epsilon}$ is the Gaussian approximation of $\epsilon$ stated in Theorem \ref{Theorem: STFT distribution like Gaussian}.
Then, we have
\begin{align*}
\mathbb{E}\left(\max_l \Big|\frac{1}{\sqrt{n}}(\mathbf V_l-\hat{\mathbf V}_l)\Big|^2\right)
\leq {C'}d \left(\frac{d^{1/2}}{n^{1/2-1/s}}\right)^{\frac{1-2/p}{3/2-1/s-1/p}} (\log n)^2m^{-2}\,,\nonumber
\end{align*}
where $C'$ is a universal constant. 
Since
\[
|\mathbf V^{(\mathbf{h})}_\epsilon(l,\eta_k)|\leq |\mathbf V^{(\mathbf{h})}_{\hat{\epsilon}}(l,\eta_k)|+|\mathbf V^{(\mathbf{h})}_\epsilon(l,\eta_k)-\mathbf V^{(\mathbf{h})}_{\hat{\epsilon}}(l,\eta_k)|
\]
and a similar bound holds when the superscript $\mathsf{h}$ is replaced by $\mathsf{h}'$,
we have
\[
\max_l|\mathbf V_l|\leq \max_l|\hat{\mathbf V}_l|+\max_l|\mathbf V_l-\hat{\mathbf V}_l|\,.
\]
By the Chebychev inequality, we have
\[
\max_l \Big|\frac{1}{\sqrt{n}}(\mathbf V_l-\hat{\mathbf V}_l)\Big|
=o_p\left(\left(\frac{d^{\frac{3}{4}-\frac{1}{2s}-\frac{1}{p}}}{n^{\frac{1}{4}-\frac{1}{2p}-\frac{1}{2s}+\frac{1}{ps}}}\right)^{\frac{1}{1-\frac{1}{s}-\frac{1}{p}}} \frac{\log(n)}{m}\right)\,.
\]
On the other hand, 
since $\hat{\mathbf V}_l$ is a non-degenerate $2d$-dim complex Gaussian random vector whose entrywise variance is of  order $1$, we have
\[
|\hat{\mathbf V}_l|_\infty \leq C\sqrt{\log d}\,,
\] 
and hence
\[
|\hat{\mathbf V}_l|\leq C\sqrt{d\log d}\,,
\]
with probability higher than $1-d^{-3}$. 
By a union bound, $\max_l \Big|\frac{1}{\sqrt{n}}\hat{\mathbf V}_l\Big|^2$ can be bounded by $\frac{Cd\log d}{n}$ for some constant $C>0$ with probability higher than $1-nd^{-3}$. 
Therefore, we have
\[
\max_l\left|\mathbf V_l\right|=o_p\left(\left(\frac{d^{\frac{3}{4}-\frac{1}{2s}-\frac{1}{p}}}{n^{\frac{1}{4}-\frac{1}{2p}-\frac{1}{2s}+\frac{1}{ps}}}\right)^{\frac{1}{1-\frac{1}{s}-\frac{1}{p}}} \frac{(\log n)^2\sqrt{n}}{m}+\sqrt{d\log d}\right)\,.
\]

To obtain the uniform bound, note that by construction, $\hat{\mathbf V}_l$ and ${\mathbf V}_l$ share the same covariance matrix. Thus, ${\mathbf V}_l$ a non-degenerate $2d$-dim complex random vector whose entrywise variance is of order $1$, by the same argument used in \cite{wu2024frequency} following the technique for maxima control \cite{chernozhukov2013,chernozhukov2015}, we obtain
\[
\max_l |{\mathbf V}_l|_\infty \leq C'\sqrt{\log d}
\] 
with probability higher than $1-d^{-3}$, where $C'>0$ depends on the moment and dependence structure of $\epsilon_i$.

\end{proof}

The following result is about how STFT behaves on an oscillatory signal satisfying AHM in the discretized setup \eqref{STFT definition continuous discretization}. We precisely describe its behavior and the associated numerical approximation error.

\begin{lem}\label{Lemma: SST on deterministic function AHM}
Suppose $\{f_i\}_{i=1}^n$ in model \eqref{eq:model} satisfies $f_i=f(t_i)$, where $f$ satisfies the AHM with Assumption \eqref{model assumption 1} fulfilled and $t_i=i/\sqrt{n}$. Denote $\mathbf f:=[f(t_1),\ldots,f(t_n)]^\top\in \mathbb{R}^n$. Assume Assumption \eqref{assumption window function main theorem} holds. Denote 
$J^{(m)}_k:=\int_{-\infty}^\infty |u|^k|\mathsf{h}^{(m)}(u)|du$, 
the $k\in \mathbb{N}\cup\{0\}$ th absolute moment of the $m$-th derivative of $\mathsf{h}$. 
Then, for any $\eta\in (0,\sqrt{n}/2)$,   
\begin{align}
E_{f,l,\eta}:=V^{(\mathsf{h})}_{\mathbf f}(t_l,\eta)-A(t_l)e^{\ii2\pi \phi(t_l)}\widehat{\mathsf{h}}(\eta-\phi'(t_l))\,,
\end{align}
satisfies
\begin{align}
E_f:=\max_l\sup_{\eta\in (0,\Xi)} |E_{f,l,\eta}|\leq \,&\varepsilon \Xi_A(J^{(0)}_1+\pi \Xi J^{(0)}_2)+\frac{\Xi_A((\varepsilon+\Xi)\|\mathsf{h}\|_\infty+\|\mathsf{h}'\|_\infty)\beta}{\sqrt{n}}\,,\nonumber
\end{align}
and 
\begin{align}
E'_{f,l,\eta}:= \frac{1}{2\pi \ii}V_{\mathbf{f}}^{(\mathsf{h}')}(t_l,\eta)-(\eta-\phi'(t_l))A(t_l)\widehat{\mathsf{h}}(\eta-\phi'(t_l))e^{\ii 2\pi\phi(t)} 
\end{align}
satisfies
\begin{align}
E'_f:=\,&\max_l\sup_{\eta\in (0,\Xi)} |E'_{f,l,\eta}|\nonumber\\
\leq \,&\varepsilon \Xi_A\left(\frac{J^{(0)}_0}{2\pi}+2\Xi J^{(0)}_1+\pi\Xi^2 J^{(0)}_2\right)+\frac{\Xi_A((\varepsilon+\Xi)\|\mathsf{h}'\|_\infty+\|\mathsf{h}''\|_\infty)\beta}{\sqrt{n}}\,.\nonumber
\end{align}
The reassignment rule satisfies that when $|V^{(\mathsf{h})}_{\mathbf f}(t_l,\eta)|>\tau$ for some $\tau\geq 0$, we have
\begin{align}\label{Proof stability recon SST Of}
E_{O,l,\eta}:=O^{(\tau)}_{\mathbf f}(t_l,\eta)-\phi'(t_l)
\end{align}
satisfies 
\begin{align}
E^{(\tau)}_O:=\max_{l}\max_{\substack{\eta\in(0,\Xi)\\|V^{(\mathsf{h})}_{\mathbf f}(t_l,\eta)|>\tau}}|E_{O,l,\eta}|\leq
\frac{E'_f+\Xi E_f}{\tau} \,.\nonumber
\end{align}
Set  $B_l:=[\phi'(t_l)-\Delta,\,\phi'(t_l)+\Delta]$. 
Then, we have
\begin{align}
E_{r,l}:=\frac{1}{\mathsf{h}(0)}\frac{\Xi}{d}\sum_{k\in B_l}V_{\mathbf{f}}^{(\mathsf{h})}(t_l,\eta_k)-A(t_l)e^{\ii2\pi \phi(t_l)}\label{definition Er error for STFT reconstruction}
\end{align}
satisfying
\begin{align*}
E_r:=\max_l|E_{r,l}|
\leq \,& \frac{\pi^2\Xi^2\Xi_A\Delta}{3d^2\mathsf{h}(0)}\left((1+ \varepsilon \beta)  J^{(0)}_2+\frac{4\beta((\varepsilon \|\mathsf{h}\|_\infty+\|\mathsf{h}'\|_\infty) \beta+\|\mathsf{h}\|_\infty)}{\sqrt{n}}\right)\\
& + \frac{2(E_f \Delta +\Xi_A \beta^{-1/2} \delta_2)}{\mathsf{h}(0)}\,,
\end{align*}
where $\frac{\Xi}{d}\sum_{k\in B_l}V_{\mathbf{f}}^{(\mathsf{h})}(t_l,\eta_k)$ is the reconstruction formula associated with the discretized STFT.
\end{lem}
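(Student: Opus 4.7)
The plan is to establish each of the four claims via a two-step decomposition of $V^{(\mathsf{h})}_{\mathbf{f}}(t_l,\eta)$: first, bound the discretization error between the Riemann sum and the continuous STFT $V^{(\mathsf{h})}_f(t_l,\eta)$; second, bound the AHM Taylor-expansion error between $V^{(\mathsf{h})}_f(t_l,\eta)$ and the leading term $A(t_l)e^{\ii 2\pi\phi(t_l)}\widehat{\mathsf{h}}(\eta-\phi'(t_l))$. Uniformity in $l$ and in $\eta\in(0,\Xi)$ follows immediately because all intermediate bounds depend only on the global constants $\Xi_A$, $\Xi$, and $\varepsilon\Xi$ supplied by Assumption \ref{model assumption 1}, rather than on local quantities.

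For $E_f$, write $\Psi_l(x):=f(x)\mathsf{h}(x-t_l)e^{-\ii 2\pi\eta(x-t_l)}$, supported in $[t_l-\beta,t_l+\beta]$. The discretization error is $\frac{1}{\sqrt n}\sum_{j=l-m}^{l+m}\Psi_l(t_j)-\int_{t_l-\beta}^{t_l+\beta}\Psi_l(x)\,dx$, controlled by a first-order Riemann-sum estimate $\beta\|\Psi_l'\|_\infty/\sqrt n$; the product rule together with $|f'|\lesssim \varepsilon\Xi_A+\Xi_A\Xi$ then yields the $\Xi_A((\varepsilon+\Xi)\|\mathsf{h}\|_\infty+\|\mathsf{h}'\|_\infty)\beta/\sqrt n$ term. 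For the AHM error, substitute $u=x-t_l$ and use $A(t_l+u)=A(t_l)+A'(\zeta_1)u$ with $|A'|\leq\varepsilon\Xi_A$ and $\phi(t_l+u)=\phi(t_l)+\phi'(t_l)u+\tfrac12\phi''(\zeta_2)u^2$ with $|\phi''|\leq \varepsilon\Xi$. Applying $|e^{\ii\theta}-1|\leq|\theta|$ to linearize the quadratic phase residual collapses the leading integral to $A(t_l)e^{\ii 2\pi\phi(t_l)}\widehat{\mathsf{h}}(\eta-\phi'(t_l))$ and yields residuals $\varepsilon\Xi_A J^{(0)}_1$ (amplitude) and $\pi\varepsilon\Xi_A\Xi J^{(0)}_2$ (quadratic phase).

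The $E_f'$ bound is proved identically with $\mathsf{h}$ replaced by $\mathsf{h}'$: the identity $\widehat{\mathsf{h}'}(\xi)=2\pi\ii\,\xi\,\widehat{\mathsf{h}}(\xi)$ identifies the leading factor $(\eta-\phi'(t_l))A(t_l)\widehat{\mathsf{h}}(\eta-\phi'(t_l))e^{\ii 2\pi\phi(t_l)}$ after dividing by $2\pi\ii$, and the additional $\Xi$ factors in the residuals come from bounding $|\eta|\leq\Xi$ when integrating the derivative against the various Taylor remainders. For $E^{(\tau)}_O$, inserting the decompositions $V^{(\mathsf{h})}_{\mathbf{f}}=A(t_l)e^{\ii2\pi\phi(t_l)}\widehat{\mathsf{h}}(\eta-\phi'(t_l))+E_{f,l,\eta}$ and $\tfrac{1}{2\pi\ii}V^{(\mathsf{h}')}_{\mathbf{f}}=(\eta-\phi'(t_l))A(t_l)\widehat{\mathsf{h}}(\eta-\phi'(t_l))e^{\ii 2\pi\phi(t_l)}+E'_{f,l,\eta}$ into the reassignment rule and simplifying algebraically gives
\[
O^{(\tau)}_{\mathbf{f}}(t_l,\eta)-\phi'(t_l)=-\frac{E'_{f,l,\eta}-(\eta-\phi'(t_l))E_{f,l,\eta}}{V^{(\mathsf{h})}_{\mathbf{f}}(t_l,\eta)},
\]
so $|E_{O,l,\eta}|\leq (E'_f+\Xi E_f)/\tau$ by the thresholding hypothesis and $|\eta-\phi'(t_l)|\leq \Xi$.

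For $E_r$, split the error into three pieces: (a) the quadrature-in-$\eta$ error $\tfrac{\Xi}{d}\sum_{k\in B_l}V^{(\mathsf{h})}_{\mathbf{f}}(t_l,\eta_k)-\int_{B_l}V^{(\mathsf{h})}_{\mathbf{f}}(t_l,\eta)\,d\eta$; (b) the AHM error $\int_{B_l}E_{f,l,\eta}\,d\eta$, bounded by $2\Delta E_f$; and (c) the window-tail error $A(t_l)e^{\ii 2\pi\phi(t_l)}\!\left[\tfrac{1}{\mathsf{h}(0)}\int_{B_l}\widehat{\mathsf{h}}(\eta-\phi'(t_l))\,d\eta-1\right]$. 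For (a), a midpoint-type quadrature bound gives $\tfrac{(2\Delta)(\Xi/d)^2}{24}\|\partial_\eta^2 V^{(\mathsf{h})}_{\mathbf{f}}\|_\infty$; differentiating twice in $\eta$ pulls down $(2\pi u)^2$ inside the integral, so $\|\partial_\eta^2 V^{(\mathsf{h})}_f\|_\infty\leq 4\pi^2\Xi_A(1+\varepsilon\beta)J^{(0)}_2$, while the residual Riemann-sum-in-$t$ error for $\partial_\eta^2$ produces the $\beta/\sqrt n$ bracket in the stated bound. For (c), the rescaling $\widehat{\mathsf{h}}(\xi)=\sqrt\beta\,\widehat{\mathsf{h}}_0(\beta\xi)$ together with $\int_{\mathbb R}\widehat{\mathsf{h}}(\xi)\,d\xi=\mathsf{h}(0)$ and Assumption \ref{assumption window function main theorem} yield $|\int_{|\xi|>\Delta}\widehat{\mathsf{h}}(\xi)\,d\xi|\leq 2\delta_2/\sqrt\beta$. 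Summing (a)--(c) and dividing by $\mathsf{h}(0)$ produces the stated bound. The principal obstacle is bookkeeping: each of the four bounds carries several layers of approximation whose constants are sensitive to how the $2\pi$ factors from the Fourier convention are absorbed into the moments $J_k^{(m)}$; in particular, the $\pi^2/3$ prefactor in the $E_r$ bound reflects the sharp quadrature constant $(b-a)h^2/24$ multiplied by the $4\pi^2$ arising from $\partial_\eta^2$.
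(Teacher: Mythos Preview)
Your proposal is correct and follows essentially the same strategy as the paper: Riemann-sum discretization error plus the AHM Taylor expansion for $E_f$ and $E_f'$ (the paper defers the latter to a citation of \cite{Wu2011adaptive} while you sketch it explicitly), algebraic substitution for the reassignment bound, and a three-way split for $E_r$. The only organizational difference is in the $E_r$ quadrature step---the paper first interchanges the $t$-sum and the $\eta$-sum and applies the second-order quadrature bound to the bare exponentials $e^{-\ii 2\pi\eta_k(t_j-t_l)}$, whereas you apply it directly to $V^{(\mathsf{h})}_{\mathbf f}(t_l,\cdot)$ via $\|\partial_\eta^2 V^{(\mathsf{h})}_{\mathbf f}\|_\infty$; since $\partial_\eta^2$ commutes with the $t$-sum these are equivalent and give the same constants.
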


\begin{remark}
In this lemma, we explicitly identify each error term, as each one plays a distinct role in establishing the uniform error bound required later. For instance, consider \eqref{definition Er error for STFT reconstruction}. Since $d=d(n)\to \infty$ as $n\to\infty$, when $n\to \infty$, the reconstruction error asymptotically approaches $ \frac{2(E_f \Delta +\Xi_A \beta^{-1/2} \delta)}{\mathsf{h}(0)}$. Moreover, when the signal is harmonic; that is, when $\varepsilon=0$, this error simplifies to $ \frac{2\Xi_A \beta^{-1/2} \delta}{\mathsf{h}(0)}$, which corresponds to the spectral truncation in the reconstruction formula. Notably, all error terms scale with $\Xi_A$, highlighting that the reconstruction error scales properly according to the signal's amplitude. 
\end{remark}

\begin{proof}
The behavior of $V^{(\mathsf{h})}_f(t_1,\eta)$ and $V^{(\mathsf{h}')}_f(t_1,\eta)$ is well known in the literature. See, for example, \cite{Wu2011adaptive}. It is worth noting that in the main result of interest, \cite[Theorem 2.3.14]{Wu2011adaptive}, the control over $|A'(t)|$ differs from ours. Here, we choose to emphasize the role of $A(t)$ directly, as it allows us to achieve a clearer and more uniform bound in the final result.

The study of $V^{(\mathbf{h})}_{\mathbf f}(t_l,\eta)$ and $V^{(\mathbf{Dh})}_{\mathbf f}(t_l,\eta)$ is via viewing them as Riemann sums and studying the numerical approximation error. Note that there are $\lfloor 2\beta/\sqrt{n}\rfloor$ sampled points over $[-\beta,\beta]$ since $\mathsf{h}$ is supported on $[-\beta,\beta]$. Also, we have $\|(f\mathsf{h})'\|_\infty\leq \|A'\|_\infty\|\mathsf{h}\|_\infty+\|A\|_\infty\|\phi'\|_\infty\|\mathsf{h}\|_\infty+\leq \|A\|_\infty\|\mathsf{h}'\|_\infty\leq \Xi_A((\varepsilon+\Xi)\|\mathsf{h}\|_\infty+\|\mathsf{h}'\|_\infty)$. By the Riemann sum error approximation, we have
\begin{align}
&\left|V_{\mathbf{f}}^{(\mathsf{h})}(t_l,\eta)-V_f^{(\mathsf{h})}(t_l,\eta)\right|\label{STFT discretization approximation by continous STFT by Riemann sum}\\
=\,&\left|\frac{1}{\sqrt{n}}\sum_{j=1}^n f(t_j)\mathsf{h}(t_j-t_l)e^{-\ii 2\pi \eta(t_j-t_l)}-V_f^{(\mathsf{h})}(t_l,\eta)\right|\nonumber\\
\leq\,& \frac{\Xi_A((\varepsilon+\Xi)\|\mathsf{h}\|_\infty+\|\mathsf{h}'\|_\infty)\beta}{\sqrt{n}}\nonumber\,,
\end{align}
and hence the claimed bound
\begin{align}
&\left|V_{\mathbf{f}}^{(\mathsf{h})}(t_l,\eta)-A(t_l)\widehat{\mathsf{h}}(\eta-\phi'(t_l))e^{\ii 2\pi\phi(t)}\right|\nonumber\\
\leq &\, \varepsilon \Xi_A(J^{(0)}_1+\pi \Xi J^{(0)}_2)+\frac{\Xi_A((\varepsilon+\Xi)\|\mathsf{h}\|_\infty+\|\mathsf{h}'\|_\infty)\beta}{\sqrt{n}}=:E_f\,,\nonumber
\end{align}
where the bound comes from \cite[Theorem 2.3.14]{Wu2011adaptive}. Note that in \cite[Theorem 2.3.14]{Wu2011adaptive}, the control of $|A'(t)|$ is different from what we consider here. Similarly, we have
\begin{align} 
\,&\left|\frac{1}{2\pi \ii}V_{\mathbf{f}}^{(\mathsf{h}')}(t_l,\eta)-\frac{1}{2\pi \ii}V_f^{(\mathsf{h}')}(t_l,\eta)\right|\leq \frac{\Xi_A((\varepsilon+\Xi)\|\mathsf{h}'\|_\infty+\|\mathsf{h}''\|_\infty)\beta}{\sqrt{n}}\nonumber
\end{align}
and hence with \cite[Theorem 2.3.14]{Wu2011adaptive}, we obtain
\begin{align}
&\left| \frac{1}{2\pi \ii}V_{\mathbf{f}}^{(\mathsf{h}')}(t_l,\eta)-(\eta-\phi'(t_l))A(t_l)\widehat{\mathsf{h}}(\eta-\phi'(t_l))e^{\ii 2\pi\phi(t)}\right| \nonumber\\
\leq \,&\varepsilon \Xi_A\left(\frac{J^{(0)}_0}{2\pi}+2\Xi J^{(0)}_1+\pi\Xi^2 J^{(0)}_2\right)+\frac{\Xi_A((\varepsilon+\Xi)\|\mathsf{h}\|_\infty+\|\mathsf{h}'\|_\infty)\beta}{\sqrt{n}}=:E_f'\nonumber\,.
\end{align}

For the reassignment rule, note that by the above Riemann sum approximation, when $n$ is sufficiently large, we have when $|V^{(\mathsf{h})}_{\mathbf f}(t_l,\eta)|> \tau$ for $\tau>0$, where $\eta\in(0,\Xi)$, 
\begin{align*}
\,&|O^{(\tau)}_{\mathbf f}(t_l,\eta)-\phi'(t_l)|=\left|\frac{-1}{2\pi i} 
\frac{V^{(\mathsf{h}')}_{\mathbf f}(t_l,\eta)}{V^{(\mathsf{h})}_{\mathbf f}(t_l,\eta)} -\phi'(t_l)\right|\\
=\,&\left|\frac{-A(t_l)\phi'(t_l)e^{\ii2\pi \phi(t_l)}\widehat{h}(\eta-\phi'(t_l))-E'_{f,l,k}}{V^{(\mathsf{h})}_{\mathbf f}(t_l,\eta)}-\phi'(t_l)\right| \\
\leq\,&\frac{E'_f+\phi'(t_l)E_f}{|V^{(\mathsf{h})}_{\mathbf f}(t_l,\eta)|}\leq \frac{E'_f+\Xi E_f}{|V^{(\mathsf{h})}_{\mathbf f}(t_l,\eta)|}\,
\end{align*}
and hence the claim. 

For the final claim about the discretized STFT reconstruction formula, rewrite 
\[
\frac{\Xi}{d}\sum_{k\in B_l}V_{\mathbf{f}}^{(\mathsf{h})}(t_l,\eta_k)
=\frac{1}{\sqrt{n}}\sum_{j=1}^n f(t_j)\mathsf{h}(t_j-t_l)\left[\frac{\Xi}{d}\sum_{k\in B_l}e^{-\ii 2\pi \eta_k(t_j-t_l)}\right]\,.
\]
By the Riemann sum approximation, we obtain
\begin{align*}
&\left|\frac{\Xi}{d}\sum_{k\in B_l}e^{-\ii 2\pi \eta_k(t_j-t_l)}-\int_{B_l}e^{-\ii 2\pi \eta(t_j-t_l)}d\eta\right|\leq \frac{\pi^2\Xi^2\Delta}{3d^2}(t_j-t_l)^2\,.
\end{align*}
Therefore, since $|f(t)|\leq A(t)$ and $\mathsf{h}$ is nonnegative by assumption, 
\begin{align}
&\left|\frac{\Xi}{d}\sum_{k\in B_l}V^{(\mathsf{h})}_{\mathbf f}(t_l,\eta_k)-\int_{B_l}V^{(\mathsf{h})}_{\mathbf f}(t_l,\eta)d\eta\right|\label{STFT reconstruction discret STFT vs discrete discretized bound} \\
\leq&\,\frac{\pi^2\Xi^2\Delta}{3d^2}\frac{1}{\sqrt{n}}\sum_{j=1}^n A(t_j)\mathsf{h}(t_j-t_l)(t_j-t_l)^2\nonumber\\
\leq &\,\frac{\pi^2\Xi^2\Xi_A\Delta}{3d^2}\left((1+ \varepsilon \beta)  J^{(0)}_2+\frac{4\beta((\varepsilon \|\mathsf{h}\|_\infty+\|\mathsf{h}'\|_\infty) \beta+\|\mathsf{h}\|_\infty)}{\sqrt{n}}\right)\,,\nonumber
\end{align}
where the last bound comes from the Riemann sum approximation
\begin{align*}
&\left|\frac{1}{\sqrt{n}}\sum_{j=1}^n A(t_j)\mathsf{h}(t_j-t_l)(t_j-t_l)^2I(|t_j-t_l|\leq \beta)-\int_{-\beta}^{\beta} A(t+t_l)\mathsf{h}(t)t^2dt\right|\\
\leq \,&\frac{4\Xi_A\beta((\varepsilon \|\mathsf{h}\|_\infty+\|\mathsf{h}'\|_\infty) \beta+\|\mathsf{h}\|_\infty)}{\sqrt{n}}
\end{align*}
and the approximation that
\[
\left|\int_{-\beta}^{\beta} A(t+t_l)\mathsf{h}(t)t^2dt-A(t_l)J^{(0)}_2\right|\leq \varepsilon \beta \Xi_A J^{(0)}_2\,.
\]
Moreover, 
\begin{align}
&\left|\frac{1}{\mathsf{h}(0)} \int_{B_l}V^{(\mathsf{h})}_{\mathbf f}(t_l,\eta)d\eta-
 A(t_l)e^{\ii2\pi \phi(t_l)} \right| \label{STFT reconstruction discret STFT vs final bound} \\
=\,&\frac{1}{\mathsf{h}(0)}\left|\int_{B_l}V^{(\mathsf{h})}_{\mathbf f}(t_l,\eta)d\eta-
\int_{-\infty}^\infty A(t_l)e^{\ii2\pi \phi(t_l)}\widehat{\mathsf{h}}(\eta-\phi'(t_l)) d\eta\right| \nonumber\\
\leq \,&\frac{1}{\mathsf{h}(0)}\left| \int_{B_l} \left[V^{(\mathsf{h})}_{\mathbf f}(t_l,\eta) - A(t_l)e^{\ii2\pi \phi(t_l)}\widehat{\mathsf{h}}(\eta-\phi'(t_l))\right] d\eta\right|\nonumber \\
&+ \frac{1}{\mathsf{h}(0)}\left|\int_{\mathbb{R}\backslash B_l} A(t_l)e^{\ii2\pi \phi(t_l)}\widehat{\mathsf{h}}(\eta-\phi'(t_l)) d\eta\right|\nonumber\\
\leq \,& \frac{2  E_f \Delta}{\mathsf{h}(0)} + \frac{\Xi_A}{\mathsf{h}(0)} \int_{\mathbb{R}\backslash B_l} |\widehat{\mathsf{h}}(\eta-\phi'(t_l))| d\eta\leq  \frac{2E_f \Delta +\Xi_A \beta^{-1/2} \delta_2}{\mathsf{h}(0)}\,.\nonumber
\end{align}
By putting \eqref{STFT reconstruction discret STFT vs discrete discretized bound}  and \eqref{STFT reconstruction discret STFT vs final bound} together, when $n$ is sufficiently large, we obtain 
\begin{align*}
&\left|\frac{1}{\mathsf{h}(0)}\frac{\Xi}{d}\sum_{k\in B_l}V_{\mathbf{f}}^{(\mathsf{h})}(t_l,\eta_k)-A(t_l)e^{\ii2\pi \phi(t_l)}\right|\\
\leq \,& \frac{\pi^2\Xi^2\Xi_A\Delta}{3d^2\mathsf{h}(0)}\left((1+ \varepsilon \beta)  J^{(0)}_2+\frac{4\beta((\varepsilon \|\mathsf{h}\|_\infty+\|\mathsf{h}'\|_\infty) \beta+\|\mathsf{h}\|_\infty)}{\sqrt{n}}\right) + \frac{2(E_f \Delta +\Xi_A \beta^{-1/2} \delta)}{\mathsf{h}(0)}\,,
\end{align*}
and hence the claimed bound.  
\end{proof}

\begin{corro}\label{corro: SST on deterministic function AHM}
Grant the same notation and assumptions in Lemma \ref{Lemma: SST on deterministic function AHM}. 
Set $B_l:=[\phi'(t_l)-\Delta,\,\phi'(t_l)+\Delta]$. 
Then, when $n$ is sufficiently large, we have  
\begin{align}
\min_l \min_{\eta\in B_l} |V^{(\mathsf{h})}_{\mathbf f}(t_l,\eta)|\geq \Xi_a\sqrt{\beta}\delta_1/2=:\nu_0 
\end{align}
for some constant $C'_{\mathsf{h}}>2$ depending on $\mathsf{h}$.

\end{corro}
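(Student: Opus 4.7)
The plan is to use Lemma \ref{Lemma: SST on deterministic function AHM} to write $V^{(\mathsf{h})}_{\mathbf f}(t_l,\eta)$ as its leading term plus an error bounded by $E_f$, and then lower bound the leading term using Assumption \ref{assumption window function main theorem} on a small neighborhood of $\phi'(t_l)$, followed by the reverse triangle inequality.

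First, by Lemma \ref{Lemma: SST on deterministic function AHM}, I can write
\begin{align*}
V^{(\mathsf{h})}_{\mathbf f}(t_l,\eta)=A(t_l)e^{\ii2\pi\phi(t_l)}\widehat{\mathsf{h}}(\eta-\phi'(t_l))+E_{f,l,\eta},
\end{align*}
with $|E_{f,l,\eta}|\leq E_f$ uniformly in $l$ and $\eta\in(0,\Xi)$. Second, for $\eta\in B_l$ we have $|\eta-\phi'(t_l)|\leq\Delta=\Delta_0/\beta$, and since $\mathsf{h}(t)=\mathsf{h}_0(t/\beta)/\sqrt{\beta}$, a direct change of variables gives $\widehat{\mathsf{h}}(\xi)=\sqrt{\beta}\,\widehat{\mathsf{h}_0}(\beta\xi)$, so that $\beta(\eta-\phi'(t_l))\in[-\Delta_0,\Delta_0]$. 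Invoking Assumption \ref{assumption window function main theorem}, this yields $|\widehat{\mathsf{h}}(\eta-\phi'(t_l))|>\sqrt{\beta}\,\delta_1$ uniformly over $\eta\in B_l$. Combined with the lower bound $A(t_l)\geq\Xi_a$ from Assumption \ref{model assumption 1}, the main term satisfies $|A(t_l)\widehat{\mathsf{h}}(\eta-\phi'(t_l))|\geq\Xi_a\sqrt{\beta}\,\delta_1$.

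Third, by the reverse triangle inequality, we conclude
\begin{align*}
|V^{(\mathsf{h})}_{\mathbf f}(t_l,\eta)|\geq\Xi_a\sqrt{\beta}\,\delta_1-E_f.
\end{align*}
To reach the claimed bound $\Xi_a\sqrt{\beta}\,\delta_1/2$, it suffices that $E_f\leq\Xi_a\sqrt{\beta}\,\delta_1/2$. Inspecting the expression
\begin{align*}
E_f=\varepsilon\Xi_A(J^{(0)}_1+\pi\Xi J^{(0)}_2)+\frac{\Xi_A((\varepsilon+\Xi)\|\mathsf{h}\|_\infty+\|\mathsf{h}'\|_\infty)\beta}{\sqrt{n}},
\end{align*}
the first term is controlled by choosing $\varepsilon$ small, while the second term vanishes as $n\to\infty$. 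Hence, under the hypotheses of Theorem \ref{section:theorem:stability} that $\varepsilon$ is sufficiently small and $n$ is sufficiently large (which is indeed compatible with the condition $\frac{2(E_f'+\Xi E_f)}{\Xi_a\sqrt{\beta}\,\delta_1}\leq 1/2$ imposed in that theorem), we have $E_f\leq\Xi_a\sqrt{\beta}\,\delta_1/2$, yielding the claim uniformly in $l$ and in $\eta\in B_l$.

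There is no real obstacle here since the result is essentially a quantitative version of the localization property of the STFT around the instantaneous frequency; the only care needed is to track how the scaling of $\mathsf{h}$ by $\beta$ propagates to $\widehat{\mathsf{h}}$ and to ensure the error term $E_f$ from Lemma \ref{Lemma: SST on deterministic function AHM} is subordinate to the main term $\Xi_a\sqrt{\beta}\,\delta_1$. I would remark that the constant $C'_{\mathsf{h}}$ appearing in the statement seems to play no role in the displayed inequality and can simply be dropped or absorbed into $\nu_0$.
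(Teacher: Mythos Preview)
Your proof is correct and follows essentially the same approach as the paper: apply the reverse triangle inequality to the decomposition from Lemma \ref{Lemma: SST on deterministic function AHM}, lower-bound the main term via $A(t_l)\geq\Xi_a$ and $|\widehat{\mathsf{h}}(\eta-\phi'(t_l))|\geq\sqrt{\beta}\,\delta_1$ on $B_l$, and absorb $E_f$ using smallness of $\varepsilon$ and largeness of $n$. Your explicit derivation of the scaling $\widehat{\mathsf{h}}(\xi)=\sqrt{\beta}\,\widehat{\mathsf{h}_0}(\beta\xi)$ and your observation that the constant $C'_{\mathsf{h}}$ is a vestigial artifact of the statement are both apt.
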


\begin{proof}
The proof is an immediate consequence of Lemma \ref{Lemma: SST on deterministic function AHM}. Since for any $l=1,\ldots,n$, by the assumption of $\mathsf{h}$, when $\eta\in B_l$, we have
\begin{align}
|V^{(\mathsf{h})}_{\mathbf f}(t_l,\eta)|\geq A(t_l) |\widehat{\mathsf{h}}(\eta-\phi'(t_l))|- 2\varepsilon \Xi_A(J^{(0)}_1+\pi \Xi J^{(0)}_2)\geq \Xi_a\sqrt{\beta}\delta_1/2\nonumber\,.
\end{align}

\end{proof}

The following lemma prepares a control of a key quantity that we need when we analyze the SST reconstruction formula. We start with a clean case, which will serve as the base for the noisy case.
\begin{lem}\label{Lemma theorem:stability g_alpha integration clean case}
Suppose  $f_i=f(t_i)$, $f$ satisfies the AHM with $\varepsilon>0$ and Assumption \eqref{model assumption 1} fulfilled, $t_i=i/\sqrt{n}$. Set $m=\lceil \beta \sqrt{n}\rceil$ and $d=d(n)\to \infty$ when $n\to \infty$.
Set $\eta_k=\frac{k\Xi}{d}$, where $k=1,\ldots,d$. Grant the kernel assumption in Assumption \ref{assumption window function main theorem} and choose $\Delta$ so that $\frac{2(E_f'+\Xi E_f)}{\Xi_a\sqrt{\beta}\delta_1}\leq 1/2$.
For $l\in\{1,\ldots,n\}$, denote 
\begin{align}\label{definition Ql, Rl and Bl in the proof}
R_l:=\,&[\phi'(t_l)-\Delta_r,\,\phi'(t_l)+\Delta_r]\\
B_l:=\,&[\phi'(t_l)-\Delta,\,\phi'(t_l)+\Delta]\nonumber\\
Q_l:=\,&\{\eta\in(0,\Xi)|\, |V^{(\mathsf{h})}_{\mathbf f}(t_l,\eta)|>\nu_0\}\,. \nonumber
\end{align}
We have $B_l\subset Q_l$. 
Set 
\begin{align}\label{definition of nu0 and alpha}
\nu_0=\Xi_a\sqrt{\beta}\delta_1/2\ \mbox{ and }\ \alpha=\left(\frac{\Delta_r}{2C_\alpha}\right)^2
\end{align}
for some $C_\alpha>1$.
Then, when $n$ is sufficiently large, we have
\begin{align}
E_{g,l,k}:=\frac{\Xi}{d}\sum_{\eta_j\in R_l} g_\alpha\big(\eta_j- O^{(\nu_0)}_{\mathbf{f}}(t_l,\eta_k)\big)-1\,,\label{definition of error term of galpha sum of clean signal Eg}
\end{align}
satisfying
\begin{align}
E_g:=\max_l\max_{\eta_k\in Q_l}|E_{g,l,k}|\leq 2\texttt{erfc}(C_\alpha)+\frac{\Xi C_\alpha^3}{\sqrt{\pi}\Delta_r^3 d}\,.\nonumber
\end{align}
When $\eta_k\notin Q_l$, we have
\[
\frac{\Xi}{d}\sum_{\eta_j\in R_l} g_\alpha\big(\eta_j- O^{(\nu_0)}_{\mathbf f}(t_l,\eta_k)\big)=0\,.
\]
\end{lem}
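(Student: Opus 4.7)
The plan is to split the analysis into three cases according to whether $\eta_k$ lies in $Q_l$, exploiting the fact that the lemma's content reduces to controlling a one-dimensional integral against the Gaussian kernel $g_\alpha$.

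First, the containment $B_l\subset Q_l$ is immediate from Corollary \ref{corro: SST on deterministic function AHM}: any $\eta\in B_l$ satisfies $|V^{(\mathsf{h})}_{\mathbf f}(t_l,\eta)|\geq \Xi_a\sqrt{\beta}\delta_1/2=\nu_0$, which is exactly the defining condition of $Q_l$. Next, when $\eta_k\notin Q_l$, the threshold $\nu_0$ forces the reassignment rule \eqref{eq: STFT-based reassignment rule} to output $O^{(\nu_0)}_{\mathbf f}(t_l,\eta_k)=-\infty$, and by the Gaussian decay of $g_\alpha$ every summand $g_\alpha(\eta_j-(-\infty))$ is zero, proving the second claim.

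The main case is $\eta_k\in Q_l$. Here I would invoke Lemma \ref{Lemma: SST on deterministic function AHM} at threshold $\tau=\nu_0$ to obtain, with $\delta:=O^{(\nu_0)}_{\mathbf f}(t_l,\eta_k)-\phi'(t_l)$,
\[
|\delta|\leq \frac{E_f'+\Xi E_f}{\nu_0}\leq \frac{1}{2},
\]
where the last inequality is the assumed constraint $\frac{2(E_f'+\Xi E_f)}{\Xi_a\sqrt{\beta}\delta_1}\leq 1/2$. Writing $O=O^{(\nu_0)}_{\mathbf f}(t_l,\eta_k)$, I would then decompose
\[
E_{g,l,k}=\left[\frac{\Xi}{d}\sum_{\eta_j\in R_l}g_\alpha(\eta_j-O)-\int_{R_l}g_\alpha(\eta-O)\,d\eta\right]+\left[\int_{R_l}g_\alpha(\eta-O)\,d\eta-1\right],
\]
i.e., a Riemann-sum error plus a Gaussian-tail error. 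For the tail, the change of variables $u=(\eta-O)/\sqrt{\alpha}$ with $\sqrt{\alpha}=\Delta_r/(2C_\alpha)$ converts the deficit into $\tfrac{1}{2}[\texttt{erfc}((\Delta_r-\delta)/\sqrt{\alpha})+\texttt{erfc}((\Delta_r+\delta)/\sqrt{\alpha})]$; provided $|\delta|\leq \Delta_r/2$, both arguments of $\texttt{erfc}$ exceed $C_\alpha$, and monotonicity of $\texttt{erfc}$ yields the bound $2\,\texttt{erfc}(C_\alpha)$. For the Riemann-sum error, a standard estimate over an interval of length $2\Delta_r$ with spacing $h=\Xi/d$ controls the error by a constant times $2\Delta_r\cdot h\cdot\|g_\alpha'\|_\infty$ up to boundary corrections; since $\|g_\alpha'\|_\infty=O(\alpha^{-1})=O(C_\alpha^2/\Delta_r^2)$, tracking constants and the boundary contribution gives the term of order $\Xi C_\alpha^3/(\sqrt{\pi}\Delta_r^3 d)$ claimed in the statement.

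The main obstacle is the interplay between the reassignment error bound and the effective width $\sqrt{\alpha}$ of $g_\alpha$: the argument requires $|\delta|\leq \Delta_r/2$ so that $O$ lies well inside $R_l$ and the Gaussian mass outside $R_l$ is of order $\texttt{erfc}(C_\alpha)$ rather than some non-decaying quantity. This compatibility must be read off from the hypothesis controlling $E_f'+\Xi E_f$ and the user's choice of $\Delta_r$. Careful bookkeeping is also needed for the Riemann-sum boundary contribution, since the fixed grid points $\eta_j$ generally do not land exactly on the endpoints of $R_l$, and this accounts for the final $C_\alpha^3/\Delta_r^3$ scaling.
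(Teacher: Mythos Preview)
Your proposal is correct and follows essentially the same route as the paper: split $E_{g,l,k}$ into a Riemann-sum error and a Gaussian-tail error, control the location of $O^{(\nu_0)}_{\mathbf f}$ via the reassignment estimate in Lemma~\ref{Lemma: SST on deterministic function AHM}, invoke Corollary~\ref{corro: SST on deterministic function AHM} for $B_l\subset Q_l$, and dispose of $\eta_k\notin Q_l$ through $O^{(\nu_0)}_{\mathbf f}=-\infty$. The compatibility issue you flag (needing $|\delta|\le\Delta_r/2$ rather than merely $|\delta|\le 1/2$) is handled identically---and just as informally---in the paper, which simply remarks that $\frac{E_f'+\Xi E_f}{\nu_0}$ is small when $\varepsilon$ is small; your caution there is well placed but not an obstruction.
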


\begin{remark}
Note that there are two dominant terms controlling $E_g$. The first term is regarding the truncation of integration in the frequency domain, and the second term is regarding the discretization of the frequency domain. 
\end{remark}

\begin{proof}

Since $\eta_j$ is a uniform discretization in the frequency axis, we approximate $\frac{\Xi}{d}\sum_{\eta_j\in R_l} g_\alpha\big(\eta_j- O^{(\nu_0)}_{\mathbf f}(t_l,\eta_k)\big)$ by $\int_{R_l} g_\alpha(\eta-O^{(\nu_0)}_{\mathbf f}(t_l,\eta_k))d\eta$. %
By the Riemann sum approximation, when $|V^{(\mathsf{h})}_{\mathbf f}(t_l,\eta)|>\nu_0$, we have  
\[
\left|\frac{\Xi}{d}\sum_{\eta_j\in R_l} g_\alpha\big(\eta_j- O^{(\nu_0)}_{\mathbf f}(t_l,\eta_k)\big) - \int_{R_l} g_\alpha(\eta-O^{(\nu_0)}_{\mathbf f}(t_l,\eta_k))d\eta\right|\leq \frac{\Xi C_\alpha^3}{\sqrt{\pi}\Delta_r^3 d}
\]
since the integrant is smooth and over $R_l$ there are $\lfloor \frac{2\Delta_r d}{\Xi}\rfloor$ uniform discrete points.

Take the reassignment control in Corollary \ref{corro: SST on deterministic function AHM} that 
\[
\max_{l}\max_{\substack{\eta\in(0,\Xi)\\|V^{(\mathsf{h})}_{\mathbf f}(t_l,\eta)|>\tau}}|O^{(\nu_0)}_{\mathbf f}(t_l,\eta_k)-\phi'(t_l)|\leq \frac{E_f'+\Xi E_f}{\nu_0}\,.
\]
By assumption, $\frac{E_f'+\Xi E_f}{\nu_0}\leq 1/2$, so we obtain
\[
\left|\int_{R_l} g_\alpha(\eta-O^{(\nu_0)}_{\mathbf{f}}(t_l,\eta_k))d\eta-1\right|\leq
2\texttt{erfc}(C_\alpha) \,,
\]
where we simply bound $\frac{E_f'+\Xi E_f}{\nu_0}<C_\alpha$ when $\varepsilon$ is sufficiently small. 
As a result, when $\eta_k\in Q_l$, we have that
\begin{align}\label{proof sum g_alpha when input is f result}
E_{g,l,k}:=\frac{\Xi}{d}\sum_{\eta_j\in R_l} g_\alpha\big(\eta_j- O^{(\nu_0)}_{\mathbf f}(t_l,\eta_k)\big)-1
\end{align}
satisfies
\begin{align}
E_{g}:=\max_l\max_{k\in Q_l}|E_{g,l,k}|\leq 2\texttt{erfc}(C_\alpha)+\frac{\Xi C_\alpha^3}{\sqrt{\pi}\Delta_r^3 d}\,.
\end{align}

When $\eta_k\notin Q_l$, 
we have $O^{(\nu_0)}_{\mathbf f}(t_l,\eta_k)=-\infty$, and hence $g_\alpha\big(\eta_j- O^{(\nu_0)}_{\mathbf f}(t_l,\eta_k)\big)=0$, in which case we have 
\[
\frac{\Xi}{d}\sum_{\eta_j\in R_l} g_\alpha\big(\eta_j- O^{(\nu_0)}_{\mathbf f}(t_l,\eta_k)\big)= 0\,.
\]
\end{proof}

\begin{lem}\label{Lemma theorem:stability g_alpha integration}
Grant the setup in Lemma \ref{Lemma theorem:stability g_alpha integration clean case}.
Consider $\{Y_i\}_{i=1}^n$ follows model \eqref{eq:model}, where $f_i=f(t_i)$, $f$ satisfies the AHM with $\varepsilon>0$ and Assumption \eqref{model assumption 1} fulfilled, $t_i=i/\sqrt{n}$, $\{\epsilon_i\}_{i=1}^n$ satisfies the NSN model and Assumption  \ref{model assumption 4}, with the condition $A > \sqrt{\chi} + 1$ fulfilled, and $\sigma=\sigma(n)=n^{1/4-\gamma'}$, where $\gamma'\in (0,1/4]$ is a small constant. %
Set
\[
\nu :=\nu_0+\zeta_n\,,
\]
where $\nu_0$ is defined in \eqref{definition of nu0 and alpha} and $\zeta_n\asymp n^{-\gamma'}\sqrt{\log n}$ is specified in \eqref{proof sum galpha Vepsilon V'epsilon uniform bound}. 
Then, when $n$ is sufficiently large, with probability greater than $1-n^{-2}$, we have
\begin{align}
e_{g,l,k}:=\frac{\Xi}{d}\sum_{\eta_j\in R_l} g_\alpha\big(\eta_j- O^{(\nu/2)}_Y(t_l,\eta_k)\big)-1\,,
\end{align}
satisfying
\[
e_g:=\max_l\max_{\eta_k\in Q_l}|e_{g,l,k}|\leq E_g+\frac{108(1+E_g)\Xi^2C_\alpha^2}{\nu \Delta_r^2} \zeta_n \,,
\]
where $Q_l$ and $R_l$ are defined in \eqref{definition Ql, Rl and Bl in the proof} and $E_g$ is defined in \eqref{definition of error term of galpha sum of clean signal Eg}.
When $\eta_k\notin Q_l$, we have
\begin{align*}
\left|\frac{\Xi}{d}\sum_{\eta_j\in R_l} g_\alpha\big(\eta_j- O^{(\nu/2)}_{Y}(t_l,\eta_k)\big)-1\right| \leq e_g
\end{align*} 
when $|V^{(\mathsf{h})}_{\mathbf f}(t_l,\eta_k)|\geq \nu/2+\zeta_n$
and
\begin{align*}
&\left|\frac{\Xi}{d}\sum_{\eta_j\in R_l} g_\alpha\big(\eta_j- O^{(\nu/2)}_{Y}(t_l,\eta_k)\big)\right|\\
\leq\,& \left\{
\begin{array}{ll}
0&\mbox{ when } |V^{(\mathsf{h})}_{\mathbf f}(t_l,\eta_k)|< \nu/2-\zeta_n\\
\frac{2C_\alpha}{\pi} & \mbox{ when } \nu/2-\zeta_n\leq |V^{(\mathsf{h})}_{\mathbf f}(t_l,\eta_k)|< \nu/2+\zeta_n.
\end{array}
\right.
\end{align*}
\end{lem}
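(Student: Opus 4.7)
The strategy is to reduce the noisy case to the clean case already handled in Lemma \ref{Lemma theorem:stability g_alpha integration clean case} by treating $Y = f + \sigma\epsilon$ as a perturbation of $f$, and then controlling the perturbation of both the magnitude $|V_Y^{(\mathsf{h})}|$ and the reassignment rule $O_Y^{(\nu/2)}$ uniformly in $l$ and $k$. First, I would apply Lemma \ref{Lemma: STFT on HDNS noise control size} with $d=n^{1/3-\gamma}$, $m=\lceil\beta\sqrt{n}\rceil\asymp\sqrt{n}$, and the sampling rate $q=\sqrt{n}$ to obtain that $|\mathbf V_\epsilon^{(\mathbf h)}(l,\eta_k)|$ and $|\mathbf V_\epsilon^{(\mathbf{Dh})}(l,\eta_k)|$ are simultaneously bounded by a constant multiple of $\sqrt{\log d}$ uniformly in $(l,k)$ on an event of probability at least $1-d^{-3}$. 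Translating this bound to the normalized STFT via \eqref{STFT relationship between discrete and continuous} yields $\max_{l,k}|V_\epsilon^{(\mathsf h)}(t_l,\eta_k)|\lesssim n^{-1/4}\sqrt{\log n}$, and multiplying by $\sigma=n^{1/4-\gamma'}$ produces a uniform bound of order $n^{-\gamma'}\sqrt{\log n}$, which I would take as the definition of $\zeta_n$. A union bound over $l$ together with the polynomial scaling of $d$ upgrades the probability to the stated $1-n^{-2}$.

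On this good event, the identity $V_Y^{(\mathsf h)}=V_{\mathbf f}^{(\mathsf h)}+\sigma V_\epsilon^{(\mathsf h)}$ and its analogue for $\mathsf h'$ immediately yield the three boundary regimes in the statement. If $|V_{\mathbf f}^{(\mathsf h)}(t_l,\eta_k)|<\nu/2-\zeta_n$, the triangle inequality forces $|V_Y^{(\mathsf h)}(t_l,\eta_k)|<\nu/2$ so $O_Y^{(\nu/2)}=-\infty$ and the summand vanishes; if $|V_{\mathbf f}^{(\mathsf h)}(t_l,\eta_k)|\geq\nu/2+\zeta_n$, then $|V_Y^{(\mathsf h)}(t_l,\eta_k)|\geq\nu/2$ and the reassignment is well-defined; in the intermediate strip of width $2\zeta_n$, I would control the contribution by the sup-norm bound $\|g_\alpha\|_\infty=(\pi\alpha)^{-1/2}=2C_\alpha/(\sqrt{\pi}\Delta_r)$ multiplied by the number $\lfloor 2\Delta_r d/\Xi\rfloor$ of grid points in $R_l$ times the weight $\Xi/d$, producing the claimed $2C_\alpha/\pi$ bound.

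For the interior case ($\eta_k\in Q_l$, so $|V_{\mathbf f}^{(\mathsf h)}|>\nu_0\geq\nu/2$), the main task is to bound $|O_Y^{(\nu/2)}(t_l,\eta_k)-O_{\mathbf f}^{(\nu_0)}(t_l,\eta_k)|$ and then to exploit the Lipschitz property of $g_\alpha$. Applying the elementary estimate $|A_1/B_1-A_2/B_2|\leq |A_1-A_2|/|B_1|+|A_2||B_1-B_2|/(|B_1||B_2|)$ with the lower bounds $|V_Y^{(\mathsf h)}|\geq\nu/2$, $|V_{\mathbf f}^{(\mathsf h)}|\geq\nu_0$ and the uniform bound $|V_{\mathbf f}^{(\mathsf h')}|\lesssim\Xi_A$ (from Lemma \ref{Lemma: SST on deterministic function AHM}) yields $|O_Y^{(\nu/2)}-O_{\mathbf f}^{(\nu_0)}|\lesssim\zeta_n/\nu$. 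Since the derivative of $g_\alpha$ satisfies $\|g_\alpha'\|_\infty\lesssim\alpha^{-3/2}\asymp(C_\alpha/\Delta_r)^3$, summing over the $O(\Delta_r d/\Xi)$ grid points of $R_l$ and comparing to Lemma \ref{Lemma theorem:stability g_alpha integration clean case} provides an additive correction of the order $\Delta_r\cdot\alpha^{-3/2}\cdot\zeta_n/\nu\asymp\Xi^2 C_\alpha^2\zeta_n/(\nu\Delta_r^2)$, with an extra factor $(1+E_g)$ arising because the perturbation bound is applied to the already-controlled sum $\frac{\Xi}{d}\sum g_\alpha(\eta_j-O_{\mathbf f}^{(\nu_0)})$ rather than to its $L^\infty$ majorant. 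Explicit bookkeeping yields the constant $108$ stated in the lemma.

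The main obstacle will be the third step, namely the ratio control: because $|V_{\mathbf f}^{(\mathsf h)}|$ can drop as low as $\nu_0$ (attained on the boundary of $Q_l$) while the noise perturbation is of order $\zeta_n$, the worst-case relative perturbation of the reassignment rule scales as $\zeta_n/\nu$, and both the numerator contributions (absolute perturbation in $V^{(\mathsf h')}$) and the denominator contributions (relative perturbation in $V^{(\mathsf h)}$ amplified by $|V^{(\mathsf h')}_{\mathbf f}|$) must be tracked through the $\alpha^{-3/2}$ Lipschitz factor of $g_\alpha$. A secondary care point is the passage from the discrete STFT norm bounds of Lemma \ref{Lemma: STFT on HDNS noise control size} to the normalization of $V_\epsilon^{(\mathsf h)}$: the factor $1/\sqrt{q}=n^{-1/4}$ couples cleanly with $\sigma=n^{1/4-\gamma'}$ to yield the $n^{-\gamma'}\sqrt{\log n}$ scaling of $\zeta_n$, but this must be verified simultaneously for both the $\mathsf h$ and $\mathsf h'$ windows and uniformly over $(l,k)$.
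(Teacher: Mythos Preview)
Your overall strategy---reduce to the clean-signal lemma, extract the uniform noise bound $\zeta_n$ from Lemma \ref{Lemma: STFT on HDNS noise control size}, split into regimes according to the size of $|V_{\mathbf f}^{(\mathsf h)}|$ relative to $\nu/2\pm\zeta_n$, and in the interior case control the shift $|O_Y^{(\nu/2)}-O_{\mathbf f}^{(\nu_0)}|$---is exactly the skeleton of the paper's proof. The one substantive difference is the device you use to pass from the reassignment perturbation to the $g_\alpha$-sum perturbation. You propose a Lipschitz bound $|g_\alpha(a)-g_\alpha(b)|\le\|g_\alpha'\|_\infty|a-b|$; the paper instead exploits the multiplicative structure $g_\alpha(a)=g_\alpha(b)\exp\!\big(-(|a|^2-|b|^2)/\alpha\big)$ and bounds $\big|e^{-u}-1\big|$ by Taylor. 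The paper's route is what produces the factor $(1+E_g)$ organically: after pulling out the exponential perturbation, the remaining sum is exactly $\frac{\Xi}{d}\sum_{\eta_j\in R_l}g_\alpha(\eta_j-O_{\mathbf f}^{(\nu_0)})\le 1+E_g$. Your Lipschitz route is a legitimate alternative, but the $(1+E_g)$ factor does not arise there---your explanation for it is pattern-matching to the stated constant rather than a consequence of the argument.

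Two concrete corrections. First, $\|g_\alpha'\|_\infty=\sqrt{2}e^{-1/2}(\pi^{1/2}\alpha)^{-1}\asymp\alpha^{-1}$, not $\alpha^{-3/2}$; with $\alpha=(\Delta_r/(2C_\alpha))^2$ your Lipschitz bound gives a correction of order $\Xi C_\alpha^2\zeta_n/(\nu\Delta_r)$, not $\Xi^2C_\alpha^2\zeta_n/(\nu\Delta_r^2)$---a different (in fact smaller in the regime $\Delta_r<\Xi$) constant than the paper's. Second, in the subcase $\eta_k\notin Q_l$ with $|V_{\mathbf f}^{(\mathsf h)}(t_l,\eta_k)|\ge\nu/2+\zeta_n$, you cannot perturb from $O_{\mathbf f}^{(\nu_0)}$ because it equals $-\infty$ there; the paper instead redoes the reassignment estimate directly against $\phi'(t_l)$ using the finite lower bound $|V_Y^{(\mathsf h)}|\ge\nu/2$ and then reruns the Riemann-sum-plus-$\texttt{erfc}$ argument. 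You need that extra half-step, not merely the observation that the reassignment is well-defined.
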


\begin{remark}

In general, note that the noise magnitude is controlled by $\sigma\zeta_n$, which assumed to decay to $0$ when $n\to \infty$. Thus, the impact of noise is asymptotically negligible. Particularly, the trivial bound over the region of $\nu/2-\zeta_n\leq |V^{(\mathsf{h})}_{\mathbf f}(t_l,\eta_k)|< \nu/2+\zeta_n$ does not contribute significantly in our upcoming analysis since the measure of this region decays to $0$ when $n\to \infty$. 

When the data is noise-free, we have $e_g=E_g$, since $\zeta_n$ in the last term in $e_g$ arises solely from controlling the noise magnitude. When $\eta_k\notin Q_l$, the bound 
$\left|\frac{\Xi}{d}\sum_{\eta_j\in R_l} g_\alpha\big(\eta_j- O^{(\nu/2)}_{Y}(t_l,\eta_k)\big)-1\right| \leq E_g$ holds
when $|V^{(\mathsf{h})}_{\mathbf f}(t_l,\eta_k)|\geq \nu_0/2$. 
Therefore, when $\epsilon=0$, we recover Lemma \ref{Lemma theorem:stability g_alpha integration clean case} with the truncation threshold set to $\nu_0/2$. 
\end{remark}

\begin{proof}
We start with some observations.
By the linearity of STFT and Lemma \ref{Lemma: STFT on HDNS noise control size}, we know that with probability greater than $1-n^{-3}$, $\max_l |\mathbf{V}_l|_\infty \leq C'\sqrt{\log d}$, and hence 
\begin{align}\label{proof sum galpha Vepsilon V'epsilon uniform bound}
&\max_{l,k}\max\{|V^{(\mathbf{h})}_{\sigma\epsilon}(t_l,\eta_k)|,\,|V^{(\mathbf{Dh})}_{\sigma\epsilon}(t_l,\eta_k)|\}\\
\leq \,&C'\sigma\sqrt{\log d}\lceil \beta \sqrt{n}\rceil^{-1/2}=:\zeta_n\asymp n^{-\gamma'}\sqrt{\log n}\,,\nonumber
\end{align}
where $C'>0$, with probability higher than $1-n^{-3}$. 
Denote the event subspace that \eqref{proof sum galpha Vepsilon V'epsilon uniform bound} holds as $\Omega$.
On the other hand, by Corollary \ref{corro: SST on deterministic function AHM}, we know 
$\min_l \min_{\eta\in Q_l} |V^{(\mathsf{h})}_{\mathbf f}(t_l,\eta)|\geq \nu_0$.
Clearly,  
since $\nu\geq\zeta_n$, when conditional on the event $\Omega$, we have 
\[
\max_{l,k}|V^{(\mathbf{h})}_{\sigma\epsilon}(t_l,\eta_k)|\leq \nu\,,
\]
and hence for any $\eta_k\in Q_l$, $|V^{(\mathbf{h})}_Y(t_l,\eta_k)|>||V^{(\mathbf{h})}_{\mathbf{f}}(t_l,\eta_k)|-|V^{(\mathbf{h})}_{\sigma\epsilon}(t_l,\eta_k)||\geq \nu/2$ holds since $\zeta_n\to 0$, and hence
\begin{align}\label{proof sum galpha VY lower bounded}
\min_l\min_{k\in Q_l}|V^{(\mathbf{h})}_Y(t_l,\eta_k)|> \nu/2\,.
\end{align}

With the above preparation, we consider two cases: when $\eta_k\in Q_l$ and when $\eta_k\notin Q_l$. For the first case when $\eta_k\in Q_l$, since
\begin{align}
&\frac{\Xi}{d}\sum_{\eta_j\in R_l}  g_\alpha\big(\eta_j- O^{(\nu/2)}_Y(t_l,\eta_k)\big)=\frac{\Xi}{d}\sum_{\eta_j\in R_l} g_\alpha\big(\eta_j- O^{(\nu_0)}_{\mathbf f}(t_l,\eta_k)\big)\nonumber\\
&\qquad\qquad+\frac{\Xi}{d}\sum_{\eta_j\in R_l} \left[g_\alpha\big(\eta_j- O^{(\nu/2)}_Y(t_l,\eta_k)\big)-g_\alpha\big(\eta_j- O^{(\nu_0)}_{\mathbf f}(t_l,\eta_k)\big)\right]\label{summation of galpha in the noisy case key control term}
\end{align}
and we have controlled $\frac{\Xi}{d}\sum_{\eta_j\in R_l} g_\alpha\big(\eta_j- O^{(\nu_0)}_{\mathbf f}(t_l,\eta_k)\big)$ in Lemma \ref{Lemma theorem:stability g_alpha integration clean case}, the result is obtained by controlling 
the second term \eqref{summation of galpha in the noisy case key control term}
caused by the NSN. Rewrite \eqref{summation of galpha in the noisy case key control term} as
\begin{align*}
&\frac{\Xi}{d}\sum_{\eta_j\in R_l} \left[g_\alpha\big(\eta_j- O^{(\nu/2)}_Y(t_l,\eta_k)\big)-g_\alpha\big(\eta_j- O^{(\nu_0)}_f(t_l,\eta_k)\big)\right]\\
=\,&\frac{\Xi}{d}\sum_{\eta_j\in R_l} \frac{1}{\sqrt{\pi\alpha}} e^{-\frac{1}{\alpha}|\eta_j- O^{(\nu_0)}_{\mathbf f}(t_l,\eta_k)|^2}\left(e^{-\frac{1}{\alpha}(|\eta_j- O^{(\nu/2)}_Y(t_l,\eta_k)|^2-|\eta_j- O^{(\nu_0)}_f(t_l,\eta_k)|^2)}-1\right)
\end{align*}
For any $l=1,\ldots,n$, by \eqref{Proof stability recon SST Of} and \eqref{proof sum galpha Vepsilon V'epsilon uniform bound}, we have
\begin{align*}
& |O^{(\nu/2)}_{Y}(t_l,\eta_k)-O^{(\nu_0)}_{\mathbf f}(t_l,\eta_k)|\\
=\,& \left|\frac{ V^{(\mathbf{h}')}_{\sigma \epsilon}(t_l,\eta_k)}{V^{(\mathbf{h})}_Y(t_l,\eta_k)}-\frac{ V^{(\mathbf{h})}_{\sigma\epsilon}(t_l,\eta_k)}{V^{(\mathbf{h})}_Y(t_l,\eta_k)}O^{(\nu_0)}_{\mathbf f}(t_l,\eta_k)\right|\leq  \frac{3\Xi}{\nu} \zeta_n \end{align*}
when conditional on the event $\Omega$, where we use the trivial bound $|O^{(\nu_0)}_{\mathbf f}(t_l,\eta_k)|+1\leq 1.5\Xi$ when $\varepsilon$ is sufficiently small and \eqref{proof sum galpha VY lower bounded}. 
Therefore, since $\max_l\max_{k\in Q_l}|O^{(\nu_0)}_{\mathbf{f}}(t_l,\eta_k)+\eta_k|\leq 2\phi'(t_l)$ when $\varepsilon$ is sufficiently small, we have $\max_l\max_{k\in Q_l}|O^{(\nu/2)}_{Y}(t_l,\eta_k)+\eta_k|\leq 3\phi'(t_l)$ when $n$ is sufficiently large. On the other hand, since $\eta_j\in R_l$, we have
$|O^{(\nu_0)}_{\mathbf{f}}(t_l,\eta_k)-\eta_j|\leq |O^{(\nu_0)}_{\mathbf{f}}(t_l,\eta_k)+\eta_k|+|\eta_k+\eta_j|\leq 4\phi'(t_l)$ and hence $|O^{(\nu/2)}_Y(t_l,\eta_k)-\eta_j|\leq  5\phi'(t_l)$.
Therefore,
\begin{align*}
&\max_l\max_{k\in Q_l}\Big||\eta_j- O^{(\nu/2)}_Y(t_l,\eta_k)|^2-|\eta_j- O^{(\nu_0)}_{\mathbf f}(t_l,\eta_k)|^2\Big|\\
\leq\,& \Big(|O^{(\nu/2)}_Y(t_l,\eta_k)-\eta_j|+|O^{(\nu_0)}_{\mathbf f}(t_l,\eta_k)-\eta_j|\Big)\Big|O^{(\nu/2)}_Y(t_l,\eta_k)-O^{(\nu_0)}_{\mathbf f}(t_l,\eta_k)\Big|\\
\leq\,& \frac{27\Xi^2}{\nu} \zeta_n\,.
\end{align*}
By Taylor expansion, we conclude that 
\begin{align*}
&\max_l\max_{k\in Q_l}\left|e^{-\frac{1}{\alpha}(|\eta_j- O^{(\nu/2)}_Y(t_l,\eta_k)|^2-|\eta_j- O^{(\nu_0)}_f(t_l,\eta_k)|^2)}-1\right|
\leq \frac{108\Xi^2C_\alpha^2}{\nu \Delta_r^2} \zeta_n\,.
\end{align*}
Hence, with \eqref{proof sum g_alpha when input is f result}, we conclude the desired bound
\begin{align*}
&\max_l\max_{k\in Q_l}\left|\frac{\Xi}{d}\sum_{\eta_j\in R_l} \Big(g_\alpha\big(\eta_j- O^{(\nu/2)}_Y(t_l,\eta_k)\big)-g_\alpha\big(\eta_j- O^{(\nu_0)}_{\mathbf f}(t_l,\eta_k)\big)\Big)\right|\\
\leq\,&\left(\max_l\max_{k\in Q_l}\frac{\Xi}{d}\sum_{\eta_j\in R_l} \frac{1}{\sqrt{\pi\alpha}} e^{-\frac{1}{\alpha}|\eta_j- O^{(\nu_0)}_{\mathbf f}(t_l,\eta_k)|^2}\right)\\
&\times \max_l\max_{k\in Q_l}\left|e^{-\frac{1}{\alpha}(|\eta_j- O^{(\nu/2)}_Y(t_l,\eta_k)|^2-|\eta_j- O^{(\nu_0)}_{\mathbf f}(t_l,\eta_k)|^2)}-1\right|\\
\leq\,& \frac{108(1+E_g)\Xi^2C_\alpha^2}{\nu \Delta_r^2} \zeta_n \,.
\end{align*}

For the second case when $\eta_k\notin Q_l$, note that we get $O^{(\nu_0)}_{\mathbf f}(t_l,\eta_k)=-\infty$ and hence $g_\alpha\big(\eta_j- O^{(\nu_0)}_{\mathbf f}(t_l,\eta_k)\big)=0$. 
By Corollary \ref{corro: SST on deterministic function AHM}, this case can only happen when $\eta_k\notin B_l$. 
We discuss three subcases: 
$|V^{(\mathsf{h})}_{\mathbf f}(t_l,\eta_k)|< \nu/2-\zeta_n$,
$\nu/2+\zeta_n\leq |V^{(\mathsf{h})}_{\mathbf f}(t_l,\eta_k)|< \nu_0$, and $\nu/2-\zeta_n\leq |V^{(\mathsf{h})}_{\mathbf f}(t_l,\eta_k)|< \nu/2+\zeta_n$.
In the first subcase, since we have $|V^{(\mathsf{h})}_{\epsilon}(t_l,\eta_k)|<\zeta_n$ when conditional on $\Omega$, $|V^{(\mathsf{h})}_{Y}(t_l,\eta_k)|\geq \nu/2$ cannot happen, and hence
\eqref{summation of galpha in the noisy case key control term} is reduced to $0$. 
In the second subcase, again, since we have $|V^{(\mathsf{h})}_{\epsilon}(t_l,\eta_k)|<\zeta_n$ when conditional on $\Omega$, $|V^{(\mathsf{h})}_{Y}(t_l,\eta_k)|\geq \nu/2$ always happen, and $O^{(\nu/2)}_Y(t_l,\eta_k)\neq -\infty$. We need to evaluate $O^{(\nu/2)}_Y(t_l,\eta_k)$. Since $|V^{(\mathsf{h})}_{\mathbf f}(t_l,\eta_k)|\geq \nu/2+\zeta_n$ and $E_f$ is of order $\varepsilon$, by the same argument as that for \eqref{Proof stability recon SST Of}, when $n$ is sufficiently large, 
\begin{align*}
|O^{(\nu/2)}_Y(t_l,\eta_k)-\phi'(t_l)|\leq &\,\frac{\varepsilon  \Xi_A\big(\frac{J^{(0)}_0}{2\pi}+3\Xi J^{(0)}_1+2\pi\Xi^2 J^{(0)}_2\big)+(1+\Xi)\zeta_n}{\nu/2} \\
&+\frac{\Xi_A(\varepsilon+\Xi+\|\mathsf{h}'\|_\infty+\|\mathsf{h}''\|_\infty) \beta}{\sqrt{n}\nu/2}
\end{align*}
Then, by the same argument as that for \eqref{proof sum g_alpha when input is f result}, we obtain
\begin{align*}
\left|\frac{\Xi}{d}\sum_{\eta_j\in R_l} g_\alpha\big(\eta_j- O^{(\nu/2)}_Y(t_l,\eta_k)\big)-1\right|\leq 2\texttt{erfc}(C_\alpha)+\frac{2\Xi C_\alpha^3}{ \sqrt{\pi}\varepsilon^{1/3} d}
\end{align*}
and hence the desired bound.
The third subcase is in general uncertain. Therefore, we only consider a trivial bound that
\begin{align*}
\left|\frac{\Xi}{d}\sum_{\eta_j\in R_l} g_\alpha\big(\eta_j- O^{(\nu/2)}_Y(t_l,\eta_k)\big)\right|\leq \frac{\Xi}{d}\sum_{\eta_j\in R_l}\frac{1}{\sqrt{\pi\alpha}}=\frac{2C_\alpha}{\pi}\,.
\end{align*}

\end{proof}

With the above lemmas, we are ready to prove the robustness theorem of SST reconstruction formula.

\begin{proof}[Proof of Theorem \ref{section:theorem:stability}]
Rewrite the targeting reconstruction formula at time $t_l$ as
\begin{align*}
&\widetilde{f}^{\mathbb{C}}(t_l)
:=\frac{1}{\mathsf{h}(0)}\frac{\Xi}{d}\sum_{\eta_j\in R_l} S^{(\mathbf{h})}_Y(t_l,\eta_j)\\
=  &\,\frac{1}{\mathsf{h}(0)}\frac{\Xi}{d}\sum_{\eta_j\in R_l}  \frac{\Xi}{d}\sum_{k=1}^{d}
V^{(\mathbf{h})}_Y(t_l,\eta_k)
\, g_\alpha\big(\eta_j- O^{(\nu/2)}_Y(t_l,\eta_k)\big)\\
=&\, \frac{1}{\mathsf{h}(0)}\frac{\Xi}{d}\sum_{k=1}^{d}
V^{(\mathbf{h})}_Y(t_l,\eta_k) I\Big(|V^{(\mathsf{h})}_{\mathbf f}(t_l,\eta_k)|>\frac{\nu}{2}-\zeta_n\Big)
\left[\frac{\Xi}{d}\sum_{\eta_j\in R_l}  g_\alpha\big(\eta_j- O^{(\nu/2)}_Y(t_l,\eta_k)\big)\right]\,,
\end{align*}
where $I(|V^{(\mathsf{h})}_{\mathbf f}(t_l,\eta_k)|> \frac{\nu}{2}-\zeta_n)$ in the last equality comes from Lemma \ref{Lemma theorem:stability g_alpha integration}.
By the control of $\frac{\Xi}{d}\sum_{\eta_j\in B_l}  g_\alpha\big(\eta_j- O^{(\nu/2)}_Y(t_l,\eta_k)\big)$ in  Lemma \ref{Lemma theorem:stability g_alpha integration} and the lower bound of $|V^{(\mathsf{h})}_{\mathbf f}(t_l,\eta_k)|$ in Lemma \ref{corro: SST on deterministic function AHM}, we obtain
\begin{align}
\widetilde{f}^{\mathbb{C}}(t_l)
 =&\, \frac{1}{\mathsf{h}(0)}\frac{\Xi}{d}\sum_{k\in B_l}V^{(\mathbf{h})}_Y(t_l,\eta_k)(1+e_{g,l,k})
\nonumber\\
&+\frac{1}{\mathsf{h}(0)}\frac{\Xi}{d}\sum_{k\in Q_l\backslash B_l}V^{(\mathbf{h})}_Y(t_l,\eta_k)(1+e_{g,l,k})\nonumber\\
&+\frac{1}{\mathsf{h}(0)}\frac{\Xi}{d}\sum_{k\notin Q_l} V^{(\mathbf{h})}_Y(t_l,\eta_k)I\Big(|V^{(\mathsf{h})}_{\mathbf f}(t_l,\eta_k)|> \frac{\nu}{2}-\zeta_n\Big) e'_{g,l,k}\,,\label{Proof recon stability expansion rewriten 1} 
\end{align}
where $e'_{g,l,k}$ satisfies
\begin{align}\label{sst recon proof definition and bound of e'glk}
|e'_{g,l,k}|\leq 
\left\{
\begin{array}{ll}
e_g& \mbox{ when }|V^{(\mathsf{h})}_{\mathbf f}(t_l,\eta_k)|\geq \frac{\nu}{2}+\zeta_n\\
 \frac{2C_\alpha}{\pi} & \mbox{ when }\frac{\nu}{2}-\zeta_n\leq |V^{(\mathsf{h})}_{\mathbf f}(t_l,\eta_k)|<\frac{\nu}{2}+\zeta_n
\end{array}
\right.
\end{align}
To obtain the desired claim, we control the right hand side of \eqref{Proof recon stability expansion rewriten 1} term by term. The first term to control is $\frac{1}{\mathsf{h}(0)}\frac{\Xi}{d}\sum_{k\in B_l}V^{(\mathbf{h})}_Y(t_l,\eta_k)$.
By the linearity of STFT, consider the bound
\begin{align}
\left|\frac{\Xi}{d}\sum_{k\in B_l}\left(V^{(\mathbf{h})}_Y(t_l,\eta_k) -V^{(\mathbf{h})}_{\mathbf f}(t_l,\eta_k)\right)\right|
\leq \frac{\Xi}{d}\sum_{k\in B_l}\left|V^{(\mathbf{h})}_{\epsilon}(t_l,\eta_k)\right|\leq 2\Delta \zeta_n\,,\nonumber
\end{align}
where we use the fact that $\max_{l,k}|V^{(\mathbf{h})}_\epsilon(t_l,\eta_k)|\leq \zeta_n$ \eqref{proof sum galpha Vepsilon V'epsilon uniform bound}. Therefore, we have when $n$ is sufficiently large,
\begin{align}
&\frac{1}{\mathsf{h}(0)}\frac{\Xi}{d}\sum_{k\in B_l}V^{(\mathbf{h})}_Y(t_l,\eta_k)=\frac{1}{\mathsf{h}(0)}\frac{\Xi}{d}\sum_{k\in B_l}V^{(\mathbf{h})}_{\mathbf f}(t_l,\eta_k) + e_l\label{proof SST recon noisy setup bound0}\,,
\end{align}
where $\max_l|e_l|\leq 2\Delta \zeta_n$. With \eqref{proof SST recon noisy setup bound0}, we obtain 
\begin{align*}
\left|\frac{1}{\mathsf{h}(0)}\frac{\Xi}{d}\sum_{k\in B_l}V^{(\mathbf{h})}_Y(t_l,\eta_k)e_{g,l,k}\right|\leq (\Xi_A+E_r+2\Delta \zeta_n)e_g\,,
\end{align*}
where we use the STFT-based reconstruction formula analysis in \eqref{definition Er error for STFT reconstruction}. Therefore, the first term of \eqref{Proof recon stability expansion rewriten 1} is controlled by
\begin{align}
\left|\frac{1}{\mathsf{h}(0)}\frac{\Xi}{d}\sum_{k\in B_l}V^{(\mathbf{h})}_Y(t_l,\eta_k)(1+e_{g,l,k})-A(t_l)e^{\ii 2\pi \phi(t_l)}\right|\leq (E_r+2\Delta \zeta_n)(1+e_g)+\Xi_A e_g\,.\label{proof SST recon noisy setup bound1}
\end{align}

Since $e'_{g,l,k}$ and $V^{(\mathbf{h})}_Y(t_l,\eta_k)$ are small in the second and third terms of \eqref{Proof recon stability expansion rewriten 1}, we control them by trivial bounds. To this end, we prepare some quantities. We also need to control the size of $Q_l\backslash B_l$.  
By the decay property of $\widehat{\mathsf{h}}$, $Q_l\backslash B_l\subset [\Delta,\,C_{\mathsf{h}}\Delta]$, where $C_{\mathsf{h}}>1$ depends on the property of $\mathsf{h}$.  
Therefore, with \eqref{sst recon proof definition and bound of e'glk}, we obtain the control for the second term:
\begin{align}
&\left|\frac{\Xi}{d}\sum_{k\in Q_l\backslash B_l}V^{(\mathbf{h})}_Y(t_l,\eta_k)(1+e_{g,l,k})\right|\label{proof SST recon noisy setup bound2}\\
\leq \,&\frac{\Xi(1+e_{g})}{d}\sum_{k\in Q_l\backslash B_l} (|V^{(\mathbf{h})}_{\mathbf{f}}(t_l,\eta_k)|+\zeta_n)\nonumber\\
\leq \,&(1+e_{g})(C_{\mathsf{h}}-1)\Delta(\nu_0+\zeta_n)\,.\nonumber
\end{align}

For the third term, we also need to control the size of $ |V^{(\mathsf{h})}_{\mathbf f}(t_l,\eta_k)|\geq \nu/2+\zeta_n$ when $k\notin Q_l$. Again, by the decay property of $\mathsf{h}$,  and the fact that $Q^c_l\subset B_l^c$, the size of the set $\{k\notin Q_l|\, |V^{(\mathsf{h})}_{\mathbf f}(t_l,\eta_k)|\geq \nu/2+\zeta_n\}$ is bounded by $C_{\mathsf{h}}'\Delta$ for some constant $C_{\mathsf{h}}'>C_{\mathsf{h}}$.  
Similarly, 
the size of the set $\{k\notin Q_l|\, \nu/2-\zeta_n<|V^{(\mathsf{h})}_{\mathbf f}(t_l,\eta_k)|\leq \nu/2+\zeta_n\}$ is bounded by $C''\zeta_n$ for some $C''>0$ due to the smoothness of $|V^{(\mathsf{h})}_{\mathbf f}(t_l,\eta)|$ as a function of $\eta$, where $C''$ depends on $\Xi_A$, $\Xi$ and $\mathsf{h}$. As a result, we have
\begin{align}
&\left|\frac{\Xi}{d}\sum_{k\notin Q_l} V^{(\mathbf{h})}_Y(t_l,\eta_k) e'_{g,l,k}\right|\label{proof SST recon noisy setup bound3}\\
\leq \,&\Bigg|\frac{\Xi}{d}\sum_{\substack{k\notin Q_l \mbox{ s.t. }\\  |V^{(\mathsf{h})}_{\mathbf f}(t_l,\eta_k)|\geq \nu/2+\zeta_n}} (|V^{(\mathsf{h})}_{\mathbf f}(t_l,\eta_k)|+\zeta_n)\Bigg| e_{g}\nonumber\\
&+\Bigg|\frac{\Xi}{d}\sum_{\substack{k\notin Q_l \mbox{ s.t. }\\  \nu/2-\zeta_n<|V^{(\mathsf{h})}_{\mathbf f}(t_l,\eta_k)|\leq \nu/2+\zeta_n}} (|V^{(\mathsf{h})}_{\mathbf f}(t_l,\eta_k)|+\zeta_n)\Bigg|\frac{2C_\alpha}{\pi}\nonumber\\
\leq \,&  \nu_0\big[C_{\mathsf{h}}'\Delta e_g+ C''C_\alpha  \zeta_n\big]\,.\nonumber
\end{align}
Putting \eqref{proof SST recon noisy setup bound1}, \eqref{proof SST recon noisy setup bound2} and \eqref{proof SST recon noisy setup bound3} together, we conclude that
\begin{align}
e_{r,l}:=\widetilde{f}^{\mathbb{C}}(t_l)-A(t_l)e^{\ii2\pi\phi(t_l)}\,,
\end{align}
satisfies
\begin{align*}
e_r:=\,&\max_l|e_{r,l}|
\leq
(E_r+2\Delta \zeta_n)(1+e_g)+\Xi_A e_g
\nonumber\\
&+(1+e_{g})(C_{\mathsf{h}}-1)\Delta(\nu_0+\zeta_n)
+\nu_0\big[C_{\mathsf{h}}'\Delta e_g+ C''C_\alpha  \zeta_n\big]\nonumber\\
\leq\,& 2E_r+2C_{\mathsf{h}}\Delta\nu_0+\Xi_Ae_g+(4+2C_{\mathsf{h}}+C''C_\alpha\nu_0)\zeta_n\,,
\end{align*}
where the last bound is an immediate simplification by taking $e_g<1$ into account.
We further simplify this complicated bound to gain some insights.
Since $m\asymp n^{1/2}$, $d\asymp n^{1/3-\gamma}$ and $\zeta_n\asymp n^{-\gamma'}$, 
when $n$ is sufficiently large and conditional on the event $\Omega$, we simplify $e_r$ by keeping how noise impact the final result by keeping terms involving $\zeta_n$. Simplify each error terms with the following trivial bounds:
\[
e_g\leq 3\texttt{erfc}(C_\alpha)+\frac{162\Xi^2C_\alpha^2}{\nu_0 \Delta_r^2} \zeta_n\,,
\]
where we use the trivial bound that $E_g\leq 3\texttt{erfc}(C_\alpha)\leq 1.5$ and 
\[
E_r\leq  2\Xi_A\frac{\varepsilon J(\Xi+1) \Delta + \beta^{-1/2} \delta_2}{\mathsf{h}(0)}\,,
\]
where we use the trivial bound $E_f\leq \varepsilon J\Xi_A(\Xi+1)$ with $J:=2\pi\max_{k=1,2,3}J^{(0)}_k$.
We thus obtain the desired simplified bound
\begin{align}\label{proof SST recon noisy setup final bound er definition}
e_r\leq\,& C_1\Xi_A+C_2\zeta_n\,,
\end{align}
where 
\[
C_1=3\texttt{erfc}(C_\alpha)+C_{\mathsf{h}}\Delta\sqrt{\beta}\delta_1+\frac{2(\varepsilon J(\Xi+1) \Delta + \beta^{-1/2} \delta_2)}{\mathsf{h}(0)}
\]
and
\[
C_2:=4+2C_{\mathsf{h}}+C''C_\alpha\Xi_a\sqrt{\beta}\delta_1+\frac{162\Xi_A\Xi^2C_\alpha^2}{\Xi_a\sqrt{\beta}\delta_1 \Delta_r^2} \,.
\]
\end{proof}

\subsection{Proof of Theorem \ref{Bootstrapping main theorem}}\label{subsection proof of bootstrapping for sst}

\begin{proof}
Recall the existence of a Gaussian random process $\hat{\epsilon}_i$ that shares the same covariance structure of $\epsilon_i$ shown in Theorem \ref{Theorem: STFT distribution like Gaussian}.
Under the given assumptions, we can well approximate $\hat{\epsilon}_i$, where $i=1,\ldots,n$, by a Gaussian tvAR process of order $b\in \mathbb{N}$ \cite[Theorem 2.11]{ding2023autoregressive}, denoted as ${\epsilon}^{(*)}_i$, where $i=1,\ldots,n$, so that ${\epsilon}^{(*)}_i=\hat{\epsilon}_i$ in law for $i=1,\ldots,b$.
Since we can find a sufficiently rich probability space to host $\epsilon^*_i$ and $\hat{\epsilon}_i$, $\hat{\epsilon}_i-\epsilon^*_i$ is Gaussian with mean $0$, and the error is controlled by
\begin{equation}
{\epsilon}^{(*)}_i-\hat{\epsilon}_i=O_{\ell^2}(\log(b)^\tau b^{-(\tau-2)} + b^{2.5}n^{-1})
\end{equation}
for $1\leq i\leq n$. Since $b$ is chosen to fulfill $\frac{b}{\log(b)}\asymp n^{1/(\tau+1)}$ to balance errors between the truncation and smooth approximation, the error becomes $O_{\ell^2}(n^{-\gamma})$, where $\gamma=\frac{\tau-2}{\tau+1}\in(0, 1]$; that is, we have
\begin{equation}\label{equation tvAR var bound control}
\max_{i=1,\ldots,n}\|{\epsilon}^{(*)}_i-\hat{\epsilon}_i\|_2\leq Cn^{-\gamma}
\end{equation}
for some constant $C>0$. Hence, if we denote the complex random vector associated with the discretized STFT of ${\epsilon}^{(*)}-\hat{\epsilon}$ by
\begin{align*}
E_1:=[&\,\Re V^{(\mathbf{h})}_{{\epsilon}^{(*)}-\hat{\epsilon}}(1,\eta_1),\, \Re V^{(\mathbf{h}')}_{{\epsilon}^{(*)}-\hat{\epsilon}}(1,\eta_1),\ldots,\,\Re V^{(\mathbf{h})}_{{\epsilon}^{(*)}-\hat{\epsilon}}(n,\eta_1),\,\Re V^{(\mathbf{h}')}_{{\epsilon}^{(*)}-\hat{\epsilon}}(n,\eta_1),\\
&\,\Im V^{(\mathbf{h})}_{{\epsilon}^{(*)}-\hat{\epsilon}}(1,\eta_1),\, \Im V^{(\mathbf{h}')}_{{\epsilon}^{(*)}-\hat{\epsilon}}(1,\eta_1),\ldots,\,\Im V^{(\mathbf{h})}_{{\epsilon}^{(*)}-\hat{\epsilon}}(n,\eta_1),\,\Im V^{(\mathbf{h}')}_{{\epsilon}^{(*)}-\hat{\epsilon}}(n,\eta_1),\\
&\,\Re V^{(\mathbf{h})}_{{\epsilon}^{(*)}-\hat{\epsilon}}(1,\eta_2),\,\Re V^{(\mathbf{h}')}_{{\epsilon}^{(*)}-\hat{\epsilon}}(1,\eta_2),\ldots,\,\Re V^{(\mathbf{h})}_{{\epsilon}^{(*)}-\hat{\epsilon}}(n,\eta_2),\,\Re V^{(\mathbf{h}')}_{{\epsilon}^{(*)}-\hat{\epsilon}}(n,\eta_2),\\
&\,\ldots \\
&\,\Im V^{(\mathbf{h})}_{{\epsilon}^{(*)}-\hat{\epsilon}}(1,\eta_d),\,\Im V^{(\mathbf{h}')}_{{\epsilon}^{(*)}-\hat{\epsilon}}(1,\eta_d),\ldots,\,\Im V^{(\mathbf{h})}_{{\epsilon}^{(*)}-\hat{\epsilon}}(n,\eta_d),\,\Im V^{(\mathbf{h}')}_{{\epsilon}^{(*)}-\hat{\epsilon}}(n,\eta_d)]^\top\in \mathbb{R}^{4nd}\,,
\end{align*}
where $d\geq 1$ is the number of frequencies we have interest, we know $E_1$ is a Gaussian vector. 
Clearly, $\mathbb{E}E_1=0$.
By \eqref{equation tvAR var bound control}, we claim that the entrywise standard deviation is controlled by
\[
\beta_i:=(\mathbb{E}[|E_1(i)|^2])^{1/2} \leq C'n^{-\gamma}\,,
\]
where $C'>0$ is a constant, for any $i=1,\ldots, 4nd$. Indeed, by taking $i=1$ as an example, we have
\begin{align*}
&\|E_1(1)\|_2\leq \frac{1}{n^{1/4}}\sum_{j=1-\lceil\beta \sqrt{n}\rceil}^{1+\lceil\beta \sqrt{n}\rceil}\left\|({\epsilon}^{(*)}_j-\hat{\epsilon}_j)\mathbf{h}(j-1)\cos(2\pi\eta_1 (j-1))\right\|_2\\
\leq &\, \frac{1}{n^{1/4}}\sum_{j=1-\lceil\beta \sqrt{n}\rceil}^{1+\lceil\beta \sqrt{n}\rceil}\left\|{\epsilon}^{(*)}_j-\hat{\epsilon}_j\right\|_2\frac{1}{n^{1/4}}\mathsf{h}\left(-\beta+\frac{j-1}{\sqrt{n}}\right)\\
\leq &\,\max_{i=1,\ldots,n}\left\|{\epsilon}^{(*)}_j-\hat{\epsilon}_j\right\|_2\sum_{j=1-\lceil\beta \sqrt{n}\rceil}^{1+\lceil\beta \sqrt{n}\rceil}\frac{1}{\sqrt{n}}\mathsf{h}\left(-\beta+\frac{j-1}{\sqrt{n}}\right)\\
\leq &\, C'n^{-\gamma}\,,
\end{align*}
where $C'\leq 2\hat{h}(0)C$ by a Riemann sum approximation. Other entries are controlled by the same way, while noting that the Riemann sum approximation of $\widehat{\mathsf{h}'}(0)$ decays to $0$ at the rate $n^{-1/2}$.
A uniform bound of $E_1$ using the Gaussian tail bound is
\begin{align}
&\mathbb{P}\left\{\max_{i=1,\ldots,4nd}|E_1(i)|>t\right\}\leq \sum_{i=1}^{4nd}\mathbb{P}\left\{|E_1(i)|>t\right\}
\leq\sum_{i=1}^{4nd} e^{-\frac{t^2}{2\sigma_{i}^2}}\leq 4nde^{-\frac{t^2 n^{2\gamma}}{2{C'}^2}}\,. \nonumber
\end{align}
Thus, for any large constant $D>1$, we have 
\[
\mathbb{P}\left\{\max_{i=1,\ldots,4nd}|E_1(i)|>Dn^{-\gamma}\sqrt{\log(n)}\right\}\leq n^{-2} 
\]
when $n$ is sufficiently large. Denote $\tilde{\mathsf V}_i$ and $\hat{\mathsf V}_i$ to be the discretized STFT coefficients at time $t$ associated with $\{\epsilon^*_i\}$ and $\{\hat{\epsilon}_i\}$ respectively; that is,
\begin{align*}
\tilde{\mathsf V}_i:=[V^{(\mathbf{h})}_{{\epsilon}^{(*)}}(t_i,\eta_1),\ldots,V^{(\mathbf{h})}_{{\epsilon}^{(*)}}(t_i,\eta_d),V^{(\mathbf{h}')}_{{\epsilon}^{(*)}}(t_i,\eta_1),\ldots,V^{(\mathbf{h}')}_{{\epsilon}^{(*)}}(t_i,\eta_d)]^\top\in \mathbb{C}^{2d},
\\
\hat{\mathsf V}_i:=[V^{(\mathbf{h})}_{\hat{\epsilon}}(t_i,\eta_1),\ldots,V^{(\mathbf{h})}_{\hat{\epsilon}}(t_i,\eta_d),V^{(\mathbf{h}')}_{\hat{\epsilon}}(t_i,\eta_1),\ldots,V^{(\mathbf{h}')}_{\hat{\epsilon}}(t_i,\eta_d)]^\top\in \mathbb{C}^{2d},
\end{align*}
which are $2d$-dim complex Gaussian random vectors. The above calculation suggests that with probability higher than $1-O(n^{-2})$, we have
\[
\max_{i=1,\ldots,n}|\hat{\mathsf V}_i-\tilde{\mathsf V}_i|\leq C\sqrt{d}n^{-\gamma}\sqrt{\log(n)}
\]
for some constant $C>0$.

Next, by Theorem \ref{Theorem: STFT distribution like Gaussian}, if we denote ${\mathsf V}_i$ to be the discretized STFT coefficients  associated with $\{\epsilon_i\}$; that is,
\begin{align*}
{\mathsf V}_i:=[V^{(\mathbf{h})}_{\epsilon}(t_i,\eta_1),\ldots,V^{(\mathbf{h})}_{\epsilon}(t_i,\eta_d),V^{(\mathbf{h}')}_{\epsilon}(t_i,\eta_1),\ldots,V^{(\mathbf{h}')}_{\epsilon}(t_i,\eta_d)]^\top\in \mathbb{C}^{2d},
\end{align*} 
then, since $m=\lceil \beta \sqrt{n}\rceil$, we have
\begin{align*}
\mathbb{E}\left(\max_l |\mathsf V_t-\hat{\mathsf V}_t|\right)
\leq {C'}\left(\frac{d^{\frac{3}{4}-\frac{1}{2s}-\frac{1}{p}}}{n^{\frac{1}{4}-\frac{1}{2p}-\frac{1}{2s}+\frac{1}{ps}}}\right)^{\frac{1}{1-\frac{1}{s}-\frac{1}{p}}} \log(n)^2\,,\nonumber
\end{align*}
and hence by Chebychev's inequality, we obtain
\[
\max_l |\mathsf V_t-\hat{\mathsf V}_t|
=o_p\left(\left(\frac{d^{\frac{3}{4}-\frac{1}{2s}-\frac{1}{p}}}{n^{\frac{1}{4}-\frac{1}{2p}-\frac{1}{2s}+\frac{1}{ps}}}\right)^{\frac{1}{1-\frac{1}{s}-\frac{1}{p}}} \log(n)^3\right)\,.
\]
By the above calculation, we have the control between the STFT of the original time series and the bootstrapped Gaussian time series by tvAR by
\begin{align}
\max_{t=1,\ldots,n}|\mathsf V_t-\tilde{\mathsf V}_t|\nonumber
\leq &\,\max_{t=1,\ldots,n}|\mathsf V_t-\hat{\mathsf V}_t|+\max_{t=1,\ldots,n}|\hat{\mathsf V}_t-\tilde{\mathsf V}_t|\\
=&\,o_p\left(\sqrt{d}n^{-\gamma}\sqrt{\log(n)}+\left(\frac{d^{\frac{3}{4}-\frac{1}{2s}-\frac{1}{p}}}{n^{\frac{1}{4}-\frac{1}{2p}-\frac{1}{2s}+\frac{1}{ps}}}\right)^{\frac{1}{1-\frac{1}{s}-\frac{1}{p}}} \log(n)^3\right)\label{proof BS V difference}
\end{align}
since in general $\sqrt{d}n^{-\gamma}\sqrt{\log(n)}$ and $\left(\frac{d^{\frac{3}{4}-\frac{1}{2s}-\frac{1}{p}}}{n^{\frac{1}{4}-\frac{1}{2p}-\frac{1}{2s}+\frac{1}{ps}}}\right)^{\frac{1}{1-\frac{1}{s}-\frac{1}{p}}} \log(n)^3$ are not comparable.

Recall that the SST of $\{\epsilon_i\}$ is defined as
\[
S^{(\mathbf{h})}_{\epsilon}(t_i,\xi_l):=\frac{\Xi}{d}\sum_{k=1}^{d}
V^{(\mathsf{h})}_{\epsilon}(t_i,\eta_k)
\, g_\alpha\big(\xi_l- O^{(\nu)}_{\epsilon}(t_i,\eta_k)\big)\in \mathbb{C}
\]
for $i=1,\ldots,n$ and $\xi_l>0$, and $S^{(\mathbf{h})}_{{\epsilon}^{(*)}}(t_i,\xi_l)$ is defined similarly from $\{\epsilon^*_i\}$.
Denote $\mathsf V_{\epsilon,i}:=[V^{(\mathsf{h})}_{\epsilon}(t_i,\eta_1),\ldots,V^{(\mathsf{h})}_{\epsilon}(t_i,\eta_d)]^\top\in \mathbb{C}^{d}$ (the first $d$ entries of $\mathsf  V_{i}$), and $\mathsf V_{\epsilon^*,i}$ is defined similarly from $\{{\epsilon}^{(*)}_i\}$. Also denote
\[
\mathbf g_i:=\left[g_\alpha(\xi_l-O^{(\nu)}_{\epsilon}(t_i,\eta_1)),\ldots,g_\alpha(\xi_l-O^{(\nu)}_{\epsilon}(t_i,\eta_d))\right]^\top \in \mathbb{R}^{d}
\]
and $\tilde{\mathbf g}_j$ is the associated random vector defined with $\{{\epsilon}^{(*)}_i\}$.
We have
\begin{align*}
&\left|S^{(\mathbf{h})}_{\epsilon}(t_i,\xi_l)-S^{(\mathbf{h})}_{{\epsilon}^{(*)}}(t_i,\xi_l)\right|\\
=\,&\left|\frac{1}{d}\sum_{k=1}^{d}
\left[V^{(\mathbf{h})}_{\epsilon}(t_i,\eta_k)
\, g_\alpha\big(\xi_l- O^{(\nu)}_{\epsilon}(t_i,\eta_k)\big)-V^{(\mathbf{h})}_{{\epsilon}^{(*)}}(t_i,\eta_k)
\, g_\alpha\big(\xi_l-O^{(\nu)}_{{\epsilon}^{(*)}}(t_i,\eta_k)\big)\right]\right|\\
=\,&\frac{1}{d}
\left|(\mathsf V_{\epsilon,i}-\mathsf V_{\epsilon^*,i})\cdot {\mathbf g}_i+ \mathsf V_{\epsilon^*,i}\cdot (\mathbf g_i-\tilde{\mathbf g}_i)\right|\\
\leq\,&\frac{|\mathsf V_{Y,i}|}{d}
\left|\mathbf g_i-\tilde{\mathbf g}_i\right|+\frac{|{\mathbf g}_i|}{d}|\mathbf  V_{i}-\tilde{\mathbf V}_{i}| \,.
\end{align*}
Note that $|{\mathbf g}_i|\leq \sqrt{d/\alpha}$. Thus, combined with \eqref{proof BS V difference}, the second term on the right hand side is controlled by $o_p\left(n^{-\gamma}\sqrt{\log(n)}+\frac{d^{-\frac{1}{2}+(\frac{3}{4}-\frac{1}{2s}-\frac{1}{p})/(1-\frac{1}{s}-\frac{1}{p})}\log(n)^3}{n^{(\frac{1}{4}-\frac{1}{2p}-\frac{1}{2s}+\frac{1}{ps})/(1-\frac{1}{s}-\frac{1}{p})}} \right)$.  Since $n^{1/4}\mathsf V_{Y,i}$ is a non-degenerate $d$-dim complex Gaussian random vector with order $1$ entrywise variance, $|\mathsf V_{Y,i}|$ can be bounded by $C\sqrt{d\log(d)}n^{-1/4}$ for some constant $C>0$ with probability greater than $1-n^{-2}$. Also, we can trivially bound $| \mathbf g_i-\tilde{\mathbf g}_i|$ by $2\sqrt{d/\alpha}$. As a result, the first term on the right hand side is controlled by $o_p(dn^{-1/4}\sqrt{\log(d)})$. As a result, for $\xi_l$, 
\begin{align*}
&\max_{j=1,\ldots,n}|S^{(\mathbf{h})}_{\epsilon}(t_i,\xi_l)-S^{(\mathbf{h})}_{{\epsilon}^{(*)}}(t_i,\xi_l)|\\
=\,&o_p\left((n^{-\gamma}+dn^{-1/4})\sqrt{\log(d)}+\frac{d^{-\frac{1}{2}+(\frac{3}{4}-\frac{1}{2s}-\frac{1}{p})/(1-\frac{1}{s}-\frac{1}{p})}\log(n)^3}{n^{(\frac{1}{4}-\frac{1}{2p}-\frac{1}{2s}+\frac{1}{ps})/(1-\frac{1}{s}-\frac{1}{p})}} \right)\,.
\end{align*}
By a direct union bound, since $\frac{\frac{1}{4}-\frac{1}{2p}-\frac{1}{2s}+\frac{1}{ps}}{\frac{1}{2}(1-\frac{1}{s}-\frac{1}{p})+(\frac{3}{4}-\frac{1}{2s}-\frac{1}{p})}>1/8$ for any $p,s$, we conclude that for $\xi_1,\ldots,\xi_d$, when 
\[
a=\min\left\{\gamma,\, \frac{\frac{1}{4}-\frac{1}{2p}-\frac{1}{2s}+\frac{1}{ps}}{\frac{1}{2}(1-\frac{1}{s}-\frac{1}{p})+(\frac{3}{4}-\frac{1}{2s}-\frac{1}{p})}\right\}-\vartheta\,, 
\]
where $\vartheta$ is any small positive constant so that $a\geq 0$, we have  
\begin{align*}
&\max_{l=1,\ldots,d}\max_{i=1,\ldots,n}|S^{(\mathbf{h})}_{\epsilon}(t_i,\xi_l)-S^{(\mathbf{h})}_{{\epsilon}^{(*)}}(t_i,\xi_l)|\\
=\,&o_p\left((dn^{-\gamma}+d^2n^{-1/4})\sqrt{\log(d)}+\frac{d^{\frac{1}{2}+(\frac{3}{4}-\frac{1}{2s}-\frac{1}{p})/(1-\frac{1}{s}-\frac{1}{p})}\log(n)^3}{n^{(\frac{1}{4}-\frac{1}{2p}-\frac{1}{2s}+\frac{1}{ps})/(1-\frac{1}{s}-\frac{1}{p})}}\right)=o_p(1)\,.
\end{align*}
\end{proof}

\section{More Numerical Results}\label{Section more numerical results supp}

\subsection{Null case}\label{Section more numerical results supp: null case}

To numerically validate this result, fix $n=2048$, and construct a locally stationary random process in the following way. First, construct a tvAR process via 
\begin{align*}
\epsilon_i=
\left\{
\begin{array}{ll}
\eta_i & \mbox{ when }i=1,2\\
\sum_{k=1}^2\phi_k(i/n)\epsilon_{i-k}+\eta_i & \mbox{ when }i=3,4,\ldots,n
\end{array}
\right.
\end{align*}
where $\eta_i$ is an i.i.d. Gaussian process with standard deviation $1$, $\phi_1(i)=-0.5(0.7+0.3\cos(2\pi i/n)$ and $\phi_2(i)=0.3\sqrt{0.1+i/(4n)}$. 
Then, construct the simulated locally stationary random process via $x_i=(1+0.5\cos(2\pi i/n))\epsilon_i$.
See Figure \ref{Figure:tvARapprox} for an illustration of the estimated tvAR process and one realization of the bootstrap time series.

\begin{figure}[bht!]
\centering
\includegraphics[trim=0 0 0 0,clip,width=0.85\textwidth]{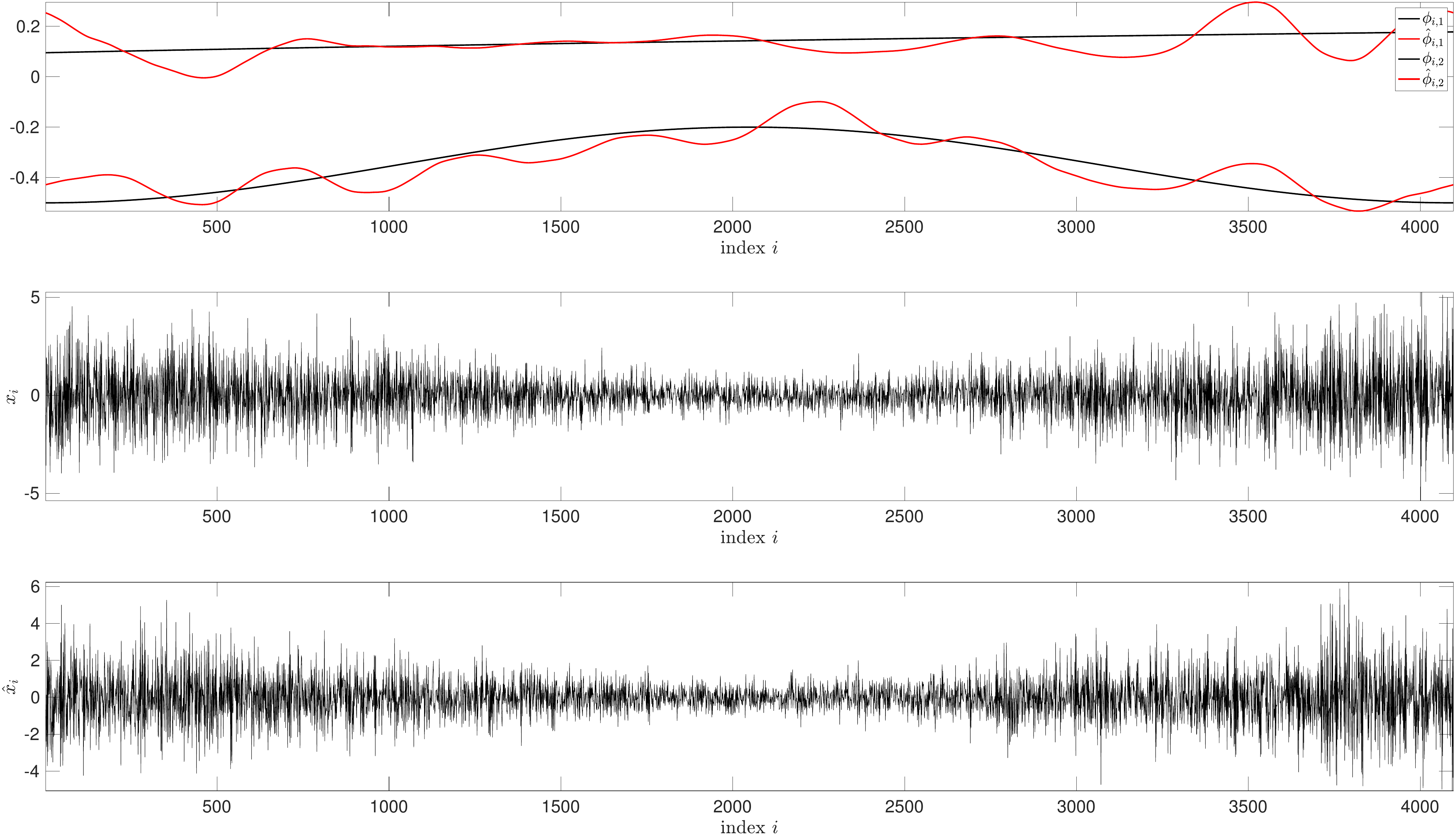}
\caption{Top: the true tvAR coefficients, and the estimated coefficients of the approximate tvAR process. Middle: one realization of $x_i$. Bottom: the bootstrapped $x_i$.\label{Figure:tvARapprox}}  
\end{figure}

 %See Figures \ref{Figure:STFT BS real} and \ref{Figure:SST BS real} for the QQ plots comparing the distributions of the STFT and SST coefficients under the true noise model and those obtained via bootstrapping, evaluated at various time-frequency pairs. 

\subsection{QQ plots for a comparison}

The true distributions of STFT and SST are collected by realizing $x_i$ for $M=1000$ times, and the bootstrapped distributions are generated from realizing $1000$ bootstraps of one realization of $x_i$.
Figures \ref{FigureS2:STFT BS real} and \ref{FigureS3:SST BS real} present QQ plots comparing the distributions of the STFT and SST coefficients under the true noise model and those obtained via bootstrapping, evaluated at various time-frequency pairs. The In these plots, columns correspond to time points and rows to frequency levels. We only show the real part and the imaginary part is similar. The results demonstrate that the bootstrap-based distributions closely approximate those from the true noise model, validating the effectiveness of the bootstrapping approach.

\begin{figure}[bht!]
\centering
\includegraphics[trim=0 0 0 0,clip,width=\textwidth]{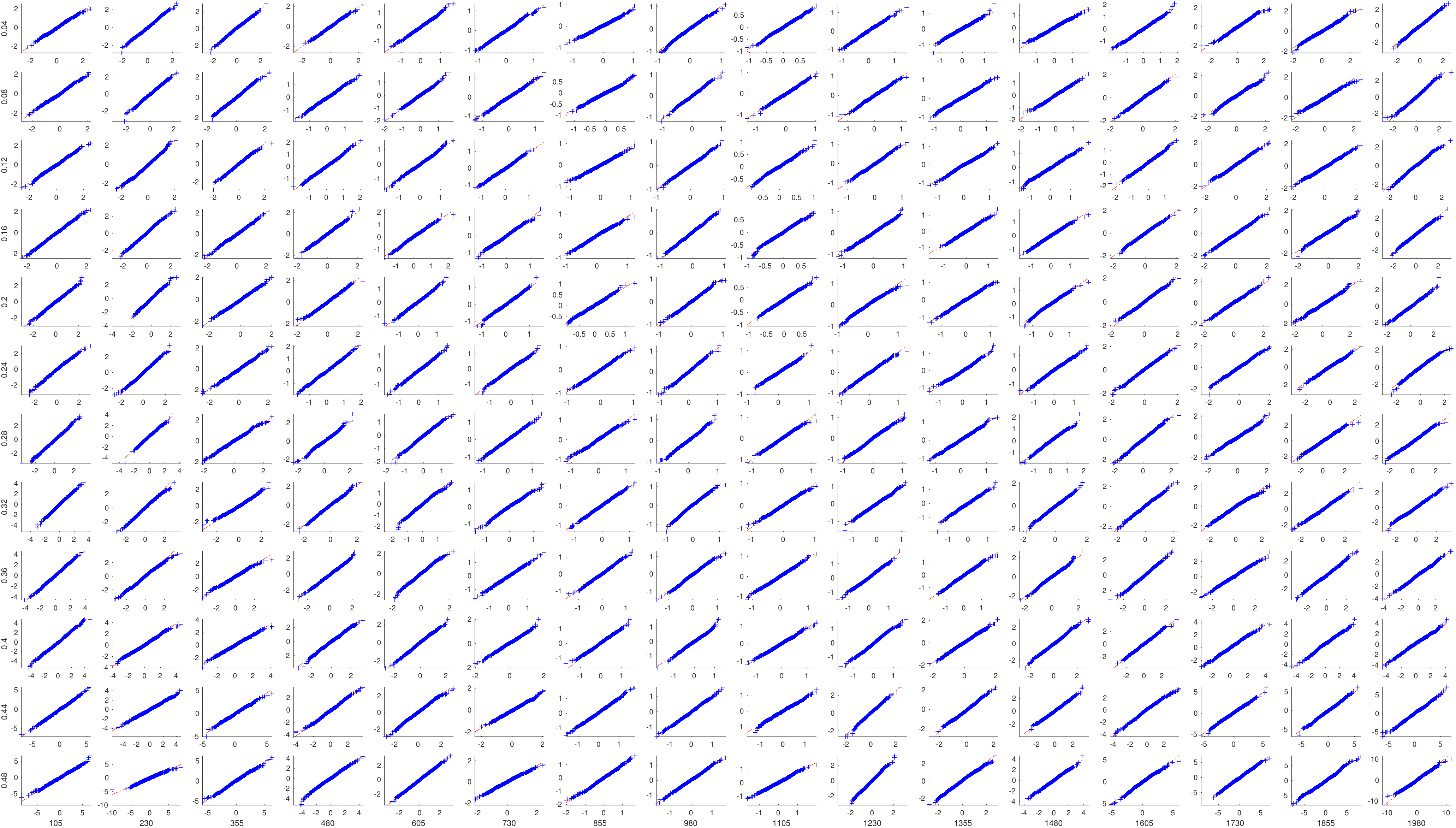}
\caption{The QQ plot of the distribution of the real part of STFT of $\epsilon_i$ (null case) and its bootstrap. \label{FigureS2:STFT BS real}}  
\end{figure}

\begin{figure}[bht!]
\centering
\includegraphics[trim=0 0 0 0,clip,width=\textwidth]{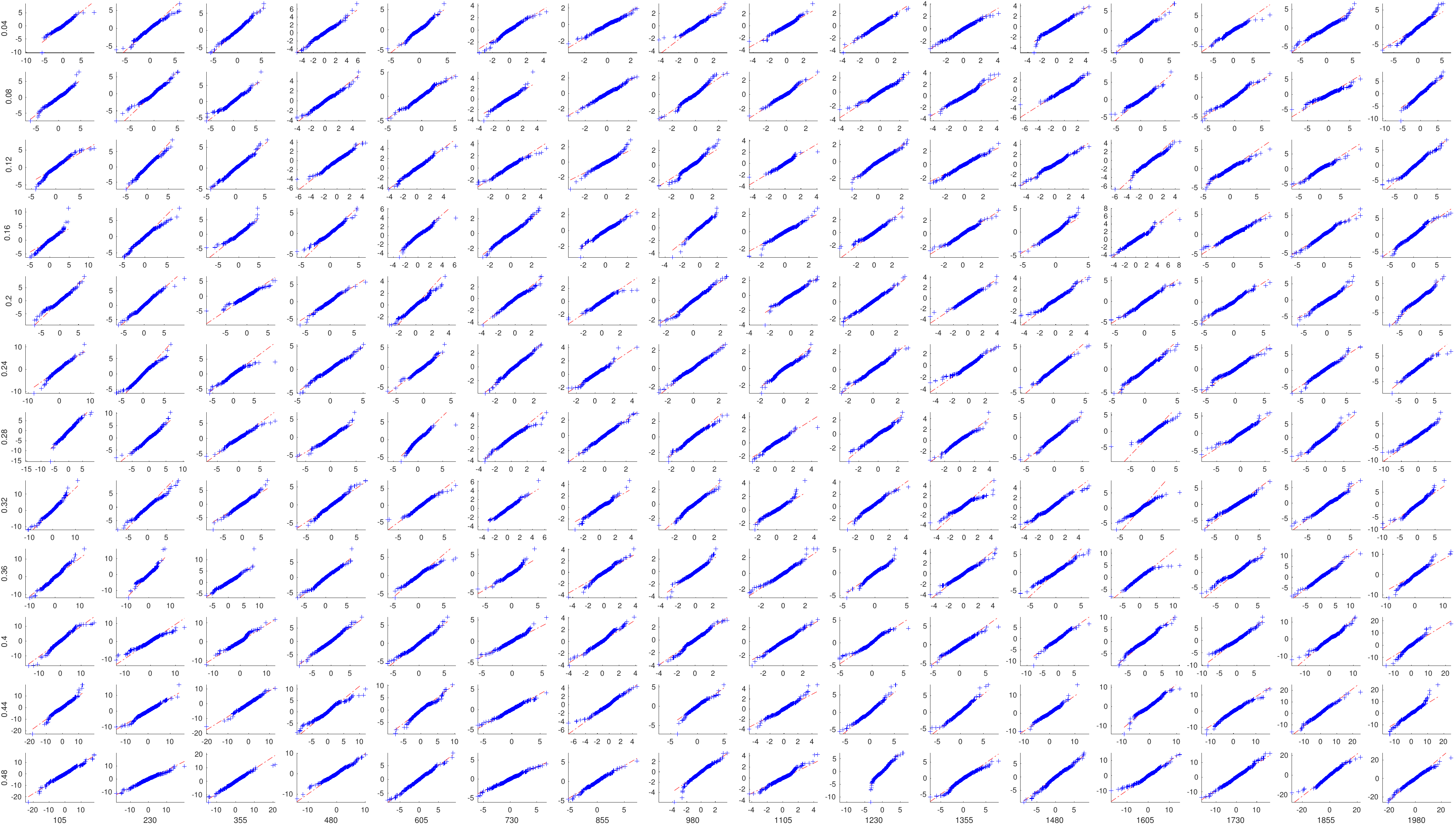}
\caption{The QQ plot of the distribution of the real part of SST of $\epsilon_i$ (null case) and its bootstrap.\label{FigureS3:SST BS real}}  
\end{figure}

Under the non-null setting, let $\tilde{f}(i/\sqrt{n})$, denote the reconstructed signal obtained via the SST-based reconstruction formula, and let the reconstructed noise be defined as $\tilde{\epsilon}_i:=X_i-\tilde{f}(i/\sqrt{n})$.
The QQ plots comparing the STFT and SST coefficients of the bootstrapped noises, denoted by $\epsilon^{(*m)}_{i}$ for $m=1,\ldots,1000$, with those of the noises generated from the underlying model are shown in Figures~\ref{FigureS4:STFT BS real nonnull} and \ref{FigureS5:SST BS real nonnull}, respectively. 
Similarly, Figures~\ref{FigureS6:STFT BS+signal real nonnull} and \ref{FigureS7:SST BS+signal real nonnull} display the QQ plots of the STFT and SST coefficients of the bootstrapped signals, defined as ${x}^{(*m)}_i:=\tilde{f}(i/\sqrt{n})+{\epsilon}^{(*m)}_{i}$, for $m=1,\ldots,1000$, together with those of the noisy signals from the underlying model. Only the real parts are shown; the imaginary parts exhibit similar patterns.
These results demonstrate that, owing to the robustness of the SST-based reconstruction algorithm and the well approximation of the noise structure, the bootstrap-based distributions closely approximate those derived from the true model. This provides strong evidence supporting the validity of combining SST-based reconstruction with the bootstrap framework for uncertainty quantification.

\begin{figure}[bht!]
\centering
\includegraphics[trim=0 0 0 0,clip,width=\textwidth]{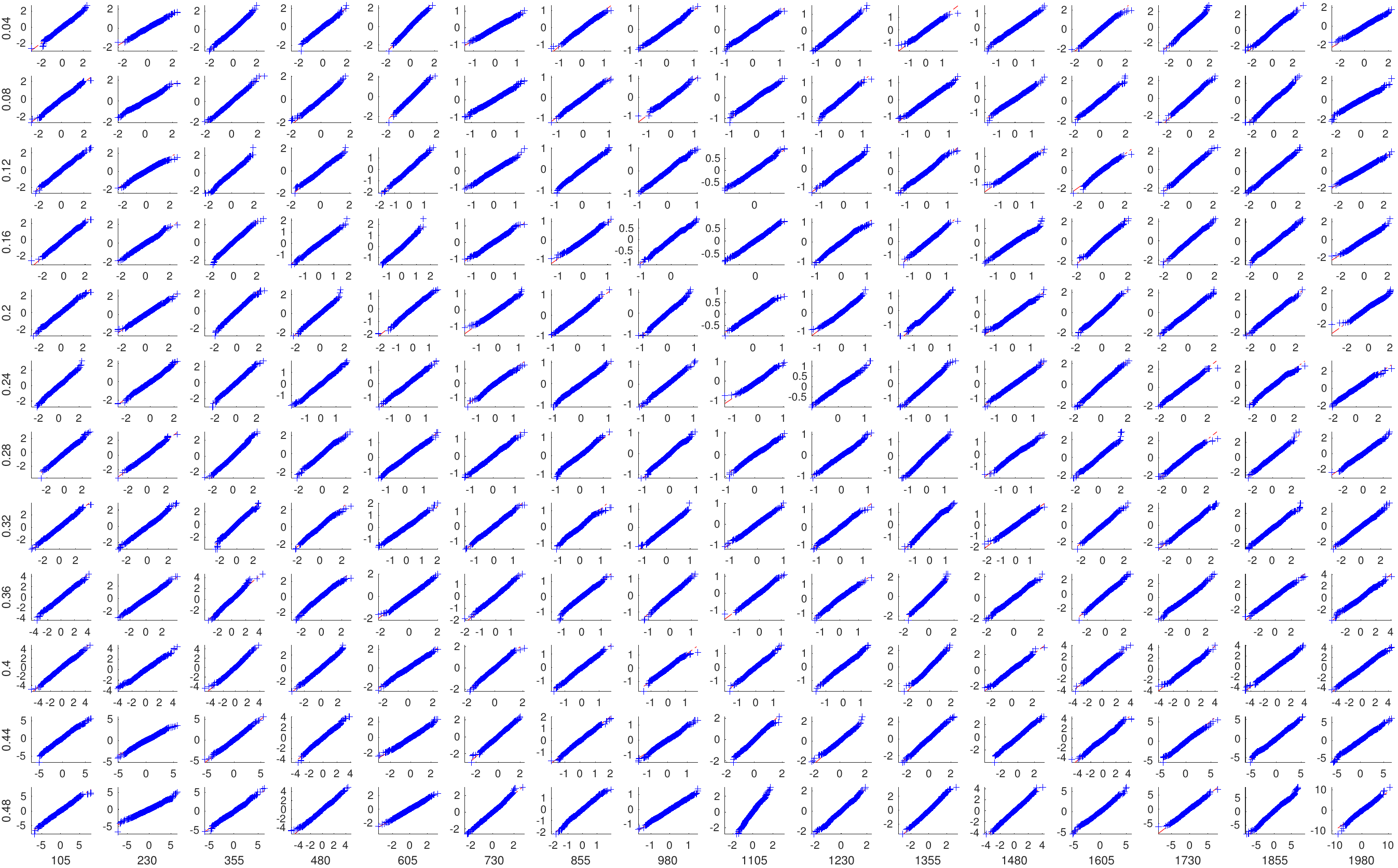}
\caption{The QQ plot of the distribution of the real part of STFT of $\epsilon_i$ under the nonnull case and its bootstrap using the reconstructed noise, $\epsilon^{(*m)}_{i}$, where $m=1,\ldots,1000$. \label{FigureS4:STFT BS real nonnull}}  
\end{figure}

\begin{figure}[bht!]
\centering
\includegraphics[trim=0 0 0 0,clip,width=\textwidth]{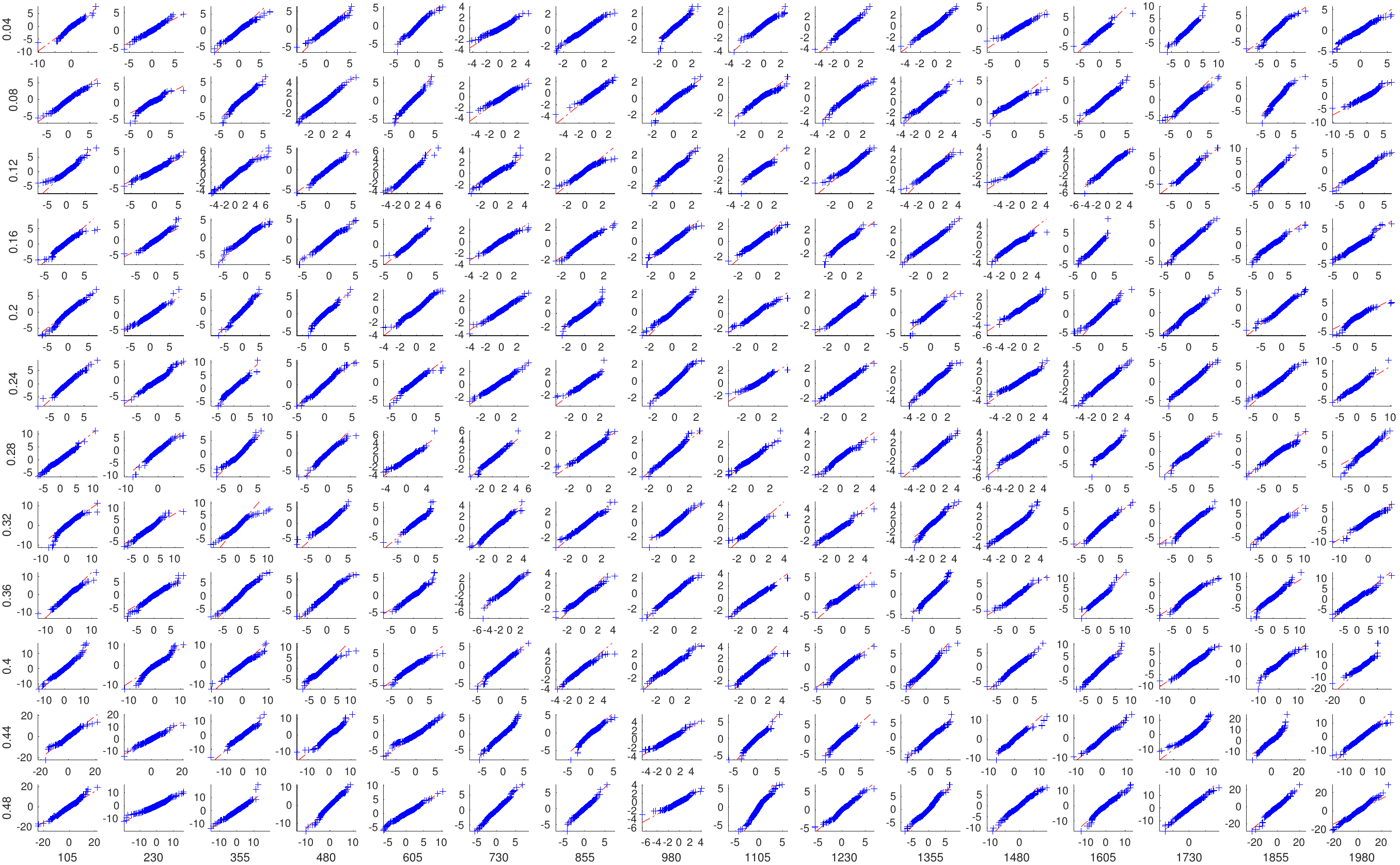}
\caption{The QQ plot of the distribution of the real part of SST of $\epsilon_i$ under the nonnull case and its bootstrap using the reconstructed noise, $\epsilon^{(*m)}_{i}$, where $m=1,\ldots,1000$.\label{FigureS5:SST BS real nonnull}}  
\end{figure}

\begin{figure}[bht!]
\centering
\includegraphics[trim=0 0 0 0,clip,width=\textwidth]{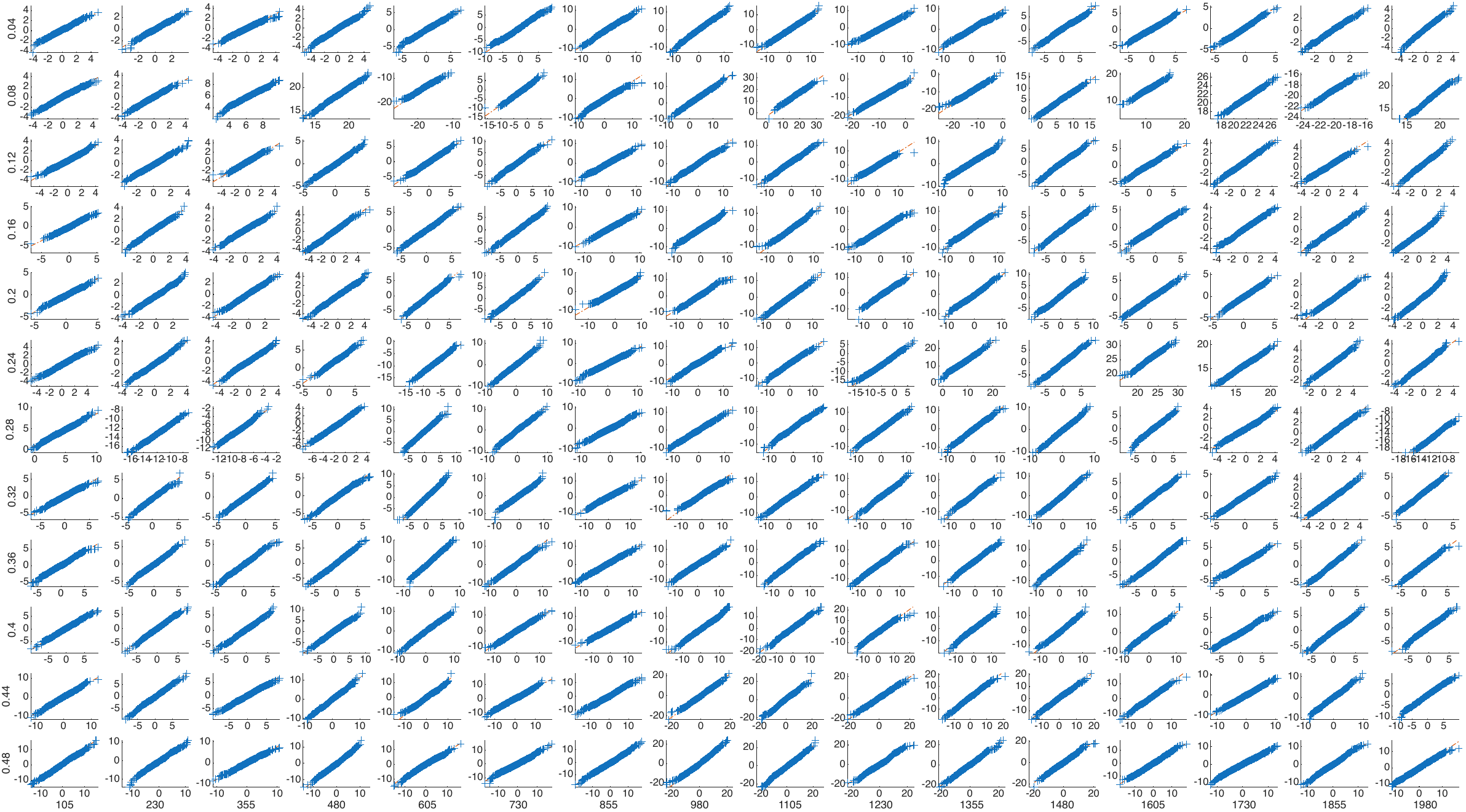}
\caption{The QQ plot of the distribution of the real part of STFT of $X_i$ under the nonnull case and its bootstrap using the reconstructed  signal and noise ${x}^{(*m)}_i$, where $m=1,\ldots,1000$. \label{FigureS6:STFT BS+signal real nonnull}}  
\end{figure}

\begin{figure}[bht!]
\centering
\includegraphics[trim=0 0 0 0,clip,width=\textwidth]{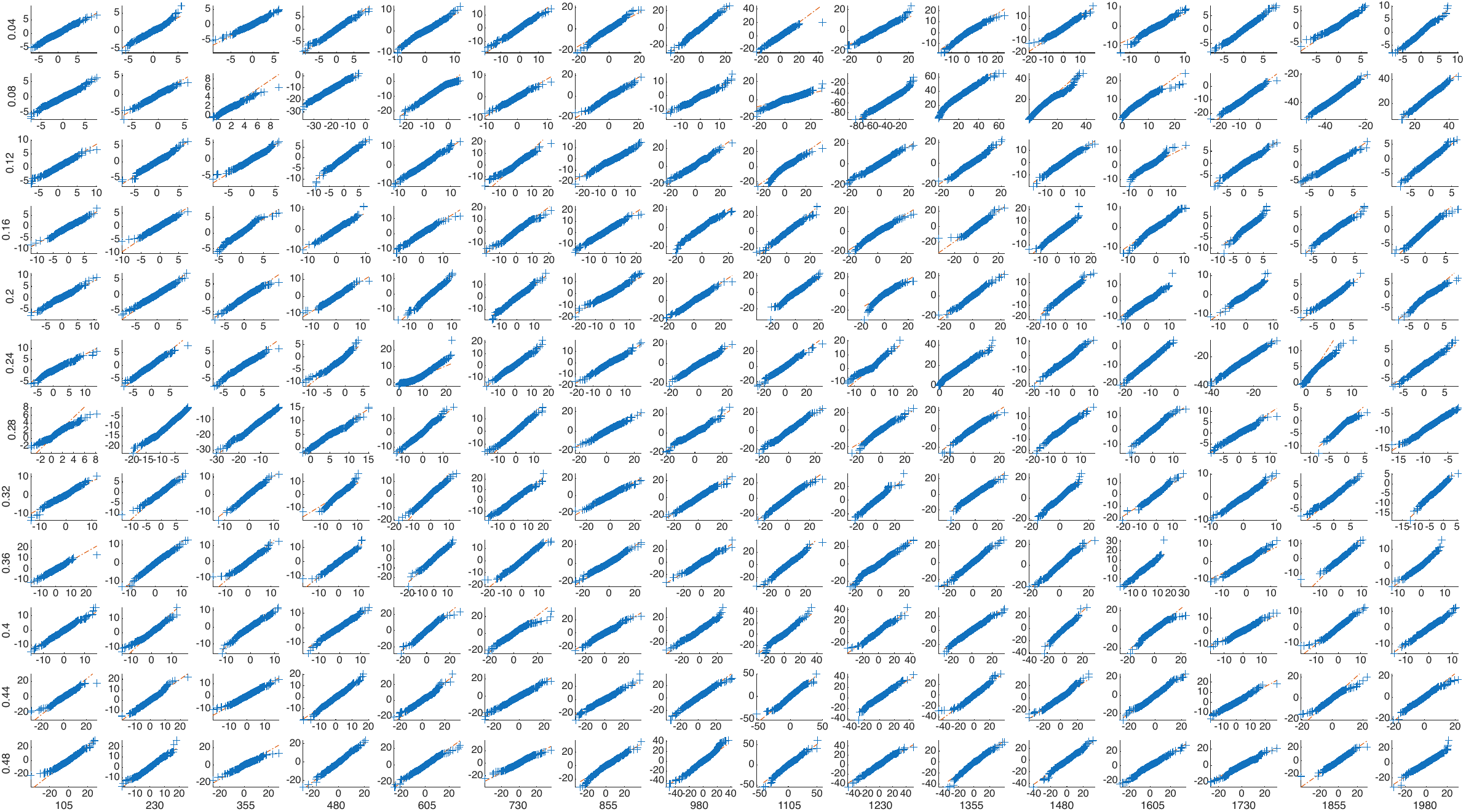}
\caption{The QQ plot of the distribution of the real part of SST of $X_i$ under the nonnull case and its bootstrap using the reconstructed signal and noise ${x}^{(*m)}_i$, where $m=1,\ldots,1000$.\label{FigureS7:SST BS+signal real nonnull}}  
\end{figure}

\end{document}